\newcommand{\defnitemtitle}[1]{(#1)\textbf{.}}
\newcommand{\crefpart}[2]{\cref{#1}(\labelcref{#1-#2})}
\newcommand{\Reals}{\mathbb{R}}
\newcommand{\RR}{\Reals_{+}}
\newcommand{\RRn}{\RR^n}
\newcommand{\thresh}{E}
\newcommand{\bound}{H}
\newcommand{\EU}{\mathbb{EU}_{\bound}^n}
\newcommand{\boundTail}{(\bound,\infty)}
\newcommand{\NN}{\mathbb{N}}
\newcommand{\ZZ}{\mathbb{Z}}
\newcommand{\closure}[1]{\overline{#1}}
\DeclareMathOperator{\Max}{Max}
\DeclareMathOperator{\supp}{supp}
\DeclareMathOperator{\poly}{poly}
\newcommand{\mech}{\mathcal{M}}
\newcommand \Rev{\text{\textsc{Rev}}}
\newcommand \SRev{\text{\textsc{SRev}}}
\newcommand{\expect}[2]{\mathbb{E}_{#1}\bigl[#2\bigr]}
\newcommand{\prob}[2]{\mathbb{P}_{#1}\bigl[#2\bigr]}
\newcommand{\probge}[3]{\prob{#1}{#2\!\ge\!#3}}
\newcommand{\probl}[3]{\prob{#1}{#2\!<\!#3}}
\newcommand{\eqdef}{\triangleq}
\newcommand{\epst}{\tilde{\varepsilon}}
\theoremstyle{plain}
\newtheorem{theorem}{Theorem}[section]
\newtheorem{lemma}[theorem]{Lemma}
\newtheorem{sublemma}{Sublemma}[theorem]
\Crefname{sublemma}{Sublemma}{Sublemmas}
\newtheorem{proposition}[theorem]{Proposition}
\newtheorem{corollary}[theorem]{Corollary}
\newtheorem{open problem}[theorem]{Open Problem}
\theoremstyle{definition}
\newtheorem{definition}{Definition}[section]
\newtheorem{remark}{Remark}[section]
\newlist{parts}{enumerate}{1}
\crefname{partsi}{Part}{Parts}
\setlist[parts,1]{label=\alph*.,ref=\alph*}
\title{The Menu-Size Complexity of Revenue Approximation\thanks{An extended abstract of this work appeared in STOC 2017 \citep*{bgn17}.}}
\author{Moshe Babaioff\thanks{Microsoft Research, \emph{E-mail}: \href{mailto:moshe@microsoft.com}{moshe@microsoft.com}.}
\and
Yannai A. Gonczarowski\thanks{Microsoft Research, \emph{E-mail}: \href{mailto:yannai@gonch.name}{yannai@gonch.name}. Research conducted while also affiliated with the Einstein Institute of Mathematics, Rachel \& Selim Benin School of Computer Science \& Engineering and Federmann Center for the Study of Rationality, The Hebrew University of Jerusalem, Israel.}
\and
Noam Nisan\thanks{Rachel \& Selim Benin School of Computer Science \& Engineering and Federmann Center for the Study of Rationality, The Hebrew University of Jerusalem, Israel, \emph{E-mail}: \href{mailto:noam.nisan@gmail.com}{noam.nisan@gmail.com}. Research conducted while also affiliated with Microsoft Research.}
}
\date{March 31, 2021}
\begin{document}

\maketitle

\begin{abstract}
Consider a monopolist selling $n$ items to an additive buyer whose item values are drawn from independent distributions $F_1,F_2,\ldots,F_n$ possibly having unbounded support. Unlike in the single-item case, it is well known that the revenue-optimal selling mechanism (a pricing scheme) may be complex, sometimes requiring a continuum of menu entries. Also known is that simple mechanisms with a bounded number of menu entries can extract a constant fraction of the optimal revenue. Nonetheless, whether an arbitrarily high fraction of the optimal revenue can be extracted via a bounded menu size remained open.

We give an affirmative answer: for every $n$ and $\varepsilon>0$, there exists $C=C(n,\varepsilon)$ s.t.\ mechanisms of menu size at most $C$ suffice for obtaining $(1-\varepsilon)$ of the optimal revenue from \emph{any} $F_1,\ldots,F_n$. We prove upper and lower bounds on the revenue-approximation complexity $C(n,\varepsilon)$ and on the deterministic communication complexity required to run a mechanism achieving such an approximation.
\end{abstract}

\paragraph{Keywords:} mechanism design; revenue maximization; approximate revenue maximization; menu size; auction; communication complexity.

\section{Introduction}\label{introduction}

As familiar economic institutions move to computerized platforms, they are reaching unprecedented sizes and levels of complexity.
These new levels of complexity
often become the defining feature of the computerized economic scenario, as in the cases of spectrum auctions and
ad auctions.   The use of the word ``complexity'' here is intentionally vague, and can refer to a wide variety of
computational, informational, or descriptive measures of complexity.  A high-level goal of the field of ``Economics and Computation'' is
to analyze such measures of complexity and understand the degree to which they are indeed a bottleneck to achieving desired economic properties.

This paper studies exactly such a question in the recently well-studied scenario of pricing multiple items.  The scenario is that of a monopolist seller who
is selling $n$ items to a single additive buyer.  The buyer has a private value $v_i$ for each item $i$, where each $v_i$ is distributed
according to a commonly known prior distribution $F_i$, independently of the values of the other items.  
The valuation of the buyer is assumed to be additive, so that her value for a subset $S$ of the items
is simply $\sum_{i \in S} v_i$, and the seller's goal is to design an ``auction'' or ``mechanism'' (really just a pricing scheme) that maximizes her revenue.  The
classical economic analysis \citep{Myerson81} shows that for a single item, the optimal mechanism is simply to sell the item at some fixed price.
On the other hand, when there is more than a single item, it is known that the optimal mechanism may be surprisingly
complex, randomized, and unintuitive \citep{MM88,Tha04,MV06,GK14,GK15,HR15,DDT13}.  

A significant amount of recent work has studied whether ``simple'' mechanisms may yield at least approximately optimal revenue.
Following a sequence of results \citep{chawla10b,hart-nisan-a,LY13}, it was shown by \cite{BILW} that one of the following two ``simple'' mechanisms always yields at
least a constant fraction ($\nicefrac{1}{6}$) of the optimal revenue: either sell all items as a single take-it-or-leave-it bundle (for some carefully chosen price)
or sell each item separately for its Myerson price. This was further extended (with different constants) to the case of multiple buyers \citep{Y15} and to
buyers with sub-additive valuations \citep{RW15,CZ17}, but is in contrast to the case where the item values are distributed according to a joint (correlated) distribution,
a case for which no finite approximation is possible by simple mechanisms \citep{pricerand,hart-nisan-b}.  

In this work, we study the trade-off between the complexity of a mechanism and the extent to which it can approximate the optimal revenue.
One may choose various measures of mechanism complexity \citep{hart-nisan-b,DLN14,MR15}, and we will focus on the simplest one, 
the \emph{menu size} suggested in \cite{hart-nisan-b}.
The menu size of a mechanism (for a single buyer, an auction or mechanism is just a pricing scheme) is defined to be the number of different possible outcomes
of the mechanism.  More specifically, every single-buyer mechanism is equivalent to one that offers a \emph{menu} of options to
the buyer, where each option --- \emph{entry} $(\vec{x};p)$ --- in the menu specifies a probability $x_i$ of acquiring each item $i$ as well as a price $p$
to be paid for the combination $\vec{x}$, and the buyer chooses an entry that maximizes her own expected utility $\sum_i x_i \cdot v_i - p$.  The number of
entries in the menu is defined to be the \emph{menu-size complexity} of the mechanism. 
As we observe, the menu size is tightly related
to the {\em deterministic communication complexity} of the mechanism: 
if the mechanism is considered common knowledge, then the amount of information\footnote{See \cref{comm} for a formal definition.} that the
buyer must send to the seller so that the outcome (allocation probabilities and price) of the mechanism for this buyer can be determined by the seller,\footnote{Since only the buyer has private information (her values) in our setting, an arbitrary
interactive protocol between the buyer and the seller must use at least as many bits as this
one-way communication. Thus, this notion captures general two-way deterministic complexity as well.  See \cref{comm} for more details, as well as
a discussion regarding randomized communication complexity.} is exactly equal to the logarithm of the menu size of the mechanism.
It is known that for some distributions,
the optimal mechanism has infinite menu size \citep{DDT13} --- or equivalently by the above, requires infinite communication complexity --- but a constant fraction of the optimal revenue may be 
extracted by a finite-complexity mechanism \citep{BILW}.
Is it possible to extract an arbitrarily high fraction of the optimal revenue via a finite menu size (equivalently, via mechanisms with finite communication complexity)?

Our first and main result shows that, in fact, finite complexity suffices to get \emph{arbitrarily close} to the optimal revenue. 

\begin{definition}[$\Rev_C$; $\Rev$]
For a distribution $F$ on $n$ items, we denote by $\Rev_C(F)$ the 
maximal (formally, the supremum) revenue obtainable by an individually rational incentive-compatible mechanism that has
at most $C$ menu entries
and sells the $n$ items to a single additive buyer whose values for the items are distributed according to $F$.  
We denote by $\Rev(F)=\Rev_{\infty}(F)$
the maximal revenue obtainable without any complexity restrictions on the mechanism.
\end{definition}

Formally, our result shows that $\lim_{C\rightarrow\infty}\frac{\Rev_C(F)}{\Rev(F)}=1$ \emph{uniformly} across all product distributions~$F$. In other words:

\newcounter{maintheorem}
\setcounter{maintheorem}{\value{theorem}}

\begin{theorem}[Qualitative Version]\label{cne-upper-bound}
For every number of items $n$
and every $\varepsilon>0$, there exists a \emph{finite} menu size $C=C(n, \varepsilon)$ such that for every $F_1,F_2,\ldots,F_n\in\Delta(\RR)$, we have that
${\Rev_C(F_1 \times \cdots \times F_n) \ge (1-\varepsilon) \cdot \Rev(F_1 \times \cdots \times F_n)}$.
\end{theorem}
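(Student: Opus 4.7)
My plan is to combine a core--tail decomposition with a discretization argument on the core, handling the tail with a small number of extra menu entries. By scale invariance of the revenue problem, I would normalize so that $\Rev(F_1 \times \cdots \times F_n) = 1$. Fix a threshold $H = H(n, \varepsilon)$ to be chosen, and partition the value space into the \emph{core} $[0,H]^n$ and the \emph{tail} $\RRn \setminus [0,H]^n$. Using the elementary per-item bound $\Rev(F_i) \ge v \cdot \Pr[v_i \ge v]$ together with $\Rev(F_i) \le \Rev(F) = 1$, a union bound shows that the tail event (some $v_i$ exceeds $H$) has probability at most $n/H$, which is small once $H$ is large.

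For the core, since the value domain is bounded by $H$, I would discretize the optimal mechanism by snapping allocation probabilities to a grid of mesh $\delta$ and prices to a grid of mesh $\delta H$. The resulting menu has at most $O\bigl((1/\delta)^{n+1}\bigr)$ entries, and a standard incentive-compatibility preserving rounding argument (exploiting monotonicity of utility in $v$ and boundedness of the domain) shows that the discretization loses at most $O(\delta H n)$ in revenue; this can be made $\le \varepsilon/3 \cdot \Rev(F)$ for a suitably small $\delta = \delta(n,\varepsilon,H)$.

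For the tail, the idea is that whenever some $v_i$ is large, one can extract most of the tail revenue by offering a ``high-price'' option for each individual item: essentially Myerson's optimal take-it-or-leave-it price on $F_i$ restricted to $(H, \infty)$. This adds at most $n$ menu entries. A core-tail style lemma (in the spirit of the BILW-type decompositions already in the literature) would show that the revenue loss from replacing the tail portion of the optimal mechanism with these $n$ simple entries is at most $\varepsilon/3$. Unioning with the discretized-core menu yields a single mechanism with total menu size $C(n,\varepsilon)$ depending only on $n$ and $\varepsilon$, not on $F$.

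The main obstacle is the tail step. The tail event, although rare, can concentrate an arbitrarily large fraction of the revenue far out on the value axis, and a fixed threshold $H(n,\varepsilon)$ must handle distributions of radically different scales simultaneously (despite the earlier normalization, which only fixes the overall $\Rev$, not the per-item scales). The crux of the argument is therefore to prove a robust tail lemma showing that $O(n)$ per-item entries suffice uniformly, and to balance $H$ against $\delta$ so that the core discretization error and the tail extraction error together are bounded by $\varepsilon \cdot \Rev(F)$ while keeping the menu size finite and independent of $F$.
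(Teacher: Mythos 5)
There is a genuine gap. Your plan rests on a standard core--tail split where the core is $[0,\bound]^n$ and the tail is everything else, and on a ``core-tail style lemma (in the spirit of the BILW-type decompositions)'' to show that $n$ per-item Myerson-style entries recover the tail revenue up to an $\varepsilon/3$ additive loss. No such lemma holds. The BILW-type decompositions only certify a \emph{constant-factor} loss when replacing the optimal mechanism's behavior on the tail by simple per-item prices (the optimal mechanism can use lotteries and cross-subsidize with the core), and the paper explicitly flags this as the reason the standard core/tail approach cannot by itself yield a $(1-\varepsilon)$ guarantee. You correctly identify the tail step as the ``main obstacle,'' but you do not overcome it, and your normalization $\Rev(F)=1$ does not help: the per-item scales $c_i$ are unconstrained, so a single tail item can still carry almost all of the revenue with complex tail behavior that take-it-or-leave-it prices do not replicate.

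The paper's proof avoids exactly this trap by using a different decomposition and a more surgical transformation. Rather than discarding the whole first-order tail, it only discards the \emph{second-order} tail (types with two or more values above $\bound$), whose revenue contribution is provably $O(\varepsilon)\cdot\Rev(F)$; the first-order tail is kept and handled directly. It then takes the optimal mechanism for the ``exclusively unbounded'' space and gradually deforms it: first making expensive entries allocate only one item (with a small discount), then applying Myerson's theorem to the induced single-dimensional \emph{continuation auction} on each item's tail to shrink the expensive part of the menu to $2n$ entries --- crucially adding an extra discounted ``opt-out'' entry per item so that no tail type is tempted to jump to the cheap part. Only then is the cheap part rounded. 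Your plan also does not address the complementary interaction in the other direction: once you snap cheap entries to a grid, you must argue that tail types do not switch to a rounded cheap entry; the paper handles this with per-entry price discounts calibrated against $\bound$ and the grid mesh. Without a correct substitute for the tail lemma and without controlling these core/tail interactions, the discretization step alone can only deliver a constant-factor bound, not the $(1-\varepsilon)$ approximation the theorem requires.
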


\cref{cne-upper-bound} gives a positive answer to Open Problem 6 from \cite{hart-nisan-combined},\footnote{\cite{hart-nisan-combined} is a manuscript combining \cite{hart-nisan-a} and \cite{hart-nisan-b}.} which asks precisely whether the statement of \cref{cne-upper-bound} holds. From \cref{cne-upper-bound}, we immediately get that finite communication complexity uniformly suffices for arbitrarily approximating the optimal revenue:

\begin{corollary}[Qualitative Version]
For every number of items $n$
and every $\varepsilon>0$, there exists a \emph{finite} communication complexity $D=D(n,\varepsilon)$ such that for every $F_1,F_2,\ldots,F_n\in\Delta(\RR)$, there exists an $n$-item mechanism with a deterministic communication complexity of only $D$ that approximates the optimal revenue from $F_1\times\cdots\times F_n$ up to a multiplicative~$\varepsilon$ loss.
\end{corollary}

It is natural to ask what is the rate of the uniform convergence of the sequence $\frac{\Rev_C(F)}{\Rev(F)}$. In other words, \emph{how complex} must a revenue-approximating mechanism be?

\begin{definition}[Revenue Approximation Complexity]
For every number of items $n$
and every ${\varepsilon>0}$, we define the \emph{revenue approximation complexity} $C(n, \varepsilon)\in\RR$ to be the smallest value~$C\in\RR$ such that 
$\Rev_C(F_1 \times \cdots \times F_n) \ge (1-\varepsilon) \cdot \Rev(F_1\times \cdots \times F_n)$
for every $F_1,F_2,\ldots,F_n\in\Delta(\RR)$.
\end{definition}

The construction used in the proof of \cref{cne-upper-bound} gives an upper bound on $C(n,\varepsilon)$ (i.e., a lower bound on the rate of uniform convergence of $\frac{\Rev_C(F)}{\Rev(F)}$).

\newcounter{theorembackup}
\setcounter{theorembackup}{\value{theorem}}
\setcounter{theorem}{\value{maintheorem}}

\begin{theorem}[Quantitative Version]
$C(n,\varepsilon)\le\bigl(\frac{\log n}{\varepsilon}\bigr)^{O(n)}$.
\end{theorem}

\begin{corollary}[Quantitative Version]
For every number of items $n$,
every $\varepsilon>0$, and every $F_1,F_2,\ldots,F_n\in\Delta(\RR)$, there exists an $n$-item mechanism with a deterministic communication complexity of only
$D(n,\varepsilon)=\log C(n,\varepsilon)=O\bigl(n \log\bigl(\frac{\log n}{\varepsilon}\bigr)\bigr)$ that approximates the optimal revenue from $F_1\times\cdots\times F_n$ up to a multiplicative $\varepsilon$ loss.
\end{corollary}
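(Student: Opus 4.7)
The plan is to derive the corollary by composing two facts already at hand: the Quantitative Version (which gives an IR and IC auction with few menu entries) and the equivalence between the logarithm of an auction's menu size and its deterministic communication complexity, asserted in the paragraph preceding Theorem~\themaintheorem.

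Concretely, I would first apply the Quantitative Version to the given $n$ and $\varepsilon$ to obtain an IR and IC auction $\mech$ whose menu has at most $C(n,\varepsilon) \le (\nicefrac{n}{\varepsilon})^{O(n)}$ entries and which extracts at least $(1-\varepsilon)\Rev(F_1 \times \cdots \times F_n)$. I would then realize $\mech$ as a one-way protocol: both parties treat $\mech$ as common knowledge and fix in advance an enumeration of its entries by $\{1,\ldots,|\mech|\}$; the buyer computes her utility-maximizing entry $(\vec{x};p)\in\mech$ (breaking ties by a pre-agreed rule) and transmits its index using $\lceil \log_2 |\mech|\rceil$ bits; the seller reads off the allocation probabilities and price. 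The induced outcome on every valuation vector is identical to running $\mech$ directly, so the revenue guarantee is preserved, and the bit count is $\log_2 C(n,\varepsilon) = O\bigl(n\log(\nicefrac{n}{\varepsilon})\bigr)$.

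There is essentially no technical obstacle here: all the content sits in the Quantitative Version, and the reduction from a menu of size $C$ to a $\lceil\log_2 C\rceil$-bit protocol is the trivial indexing scheme. The only subtlety worth flagging is the direction of communication: since the seller holds no private information, an arbitrary interactive two-way protocol can do no better than this one-way buyer-to-seller scheme, so the bound applies to general deterministic communication complexity as well, exactly as anticipated by the introductory footnote.
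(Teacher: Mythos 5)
Your proposal is correct and matches the paper's (implicit) derivation exactly: the corollary is an immediate consequence of the Quantitative Version combined with the equivalence between $\lceil\log_2(\text{menu size})\rceil$ and deterministic communication complexity, which the paper establishes in its appendix on communication complexity via precisely the indexing scheme and one-way-reduction observation you describe.
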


\setcounter{theorem}{\value{theorembackup}}

This bound on the menu size is exponential in $n$, and so the next natural question is whether polynomial menu size suffices.  
At first glance the answer seems to be
``obviously not'': the menu-size complexity measure is quite weak, and even the mechanism that sells each item separately has exponential
menu size (since, when presented as a menu, a menu entry is needed for each possible subset of the items), which has been the source of one of the main criticisms of the menu-size complexity measures.  This answer, however, is premature; in fact, we show that polynomial
menu size turns out to suffice for approximating the revenue obtainable from selling items separately, thereby also appeasing this criticism.\footnote{That is, by our next result, any approximation impossibility that can be shown for polynomial menu-size mechanisms would immediately apply also to separate selling as well.}  Let us denote by $\SRev(F_1 \times \cdots \times F_n)$ the revenue obtainable
by selling each item separately for its optimal price.

\begin{theorem}\label{poly-approx-srev}
For every $\varepsilon>0$, there exists $d(\varepsilon)$ such that for every number of items~$n$
and $F_1,F_2,\ldots,F_n\in\Delta(\RR)$, we have for $C=n^{d(\varepsilon)}$ that
$\Rev_C(F_1 \times \cdots \times F_n) \ge (1-\varepsilon) \cdot \SRev(F_1 \times \cdots \times F_n)$. 
\end{theorem}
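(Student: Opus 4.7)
The plan is to approximate the separate-sales benchmark---whose naive menu representation has $2^n$ entries---by a menu of size polynomial in $n$ with degree depending only on $\varepsilon$. First I would rescale so that $\SRev(F_1\times\cdots\times F_n)=1$, and write $r_i\eqdef\Rev(F_i)$ with $p_i$ the corresponding Myerson price, giving $\sum_i r_i=1$. Items with $r_i<\varepsilon/n$ contribute at most $\varepsilon$ jointly and can be discarded. For each surviving item, a Markov-type tail bound (leveraging the Myerson optimality of $p_i$) allows capping the support above a threshold $T_i$ of order $\poly(n/\varepsilon)\cdot p_i$ at negligible revenue loss, and a further multiplicative discretization of each $F_i$ onto a grid of resolution $1+\Theta(\varepsilon/n)$ reduces each surviving support to $\poly(n/\varepsilon)$ distinct points, for a cumulative reduction cost of only an $O(\varepsilon)$ fraction of $\SRev$.

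The heart of the argument is then to split the (discretized) items into a small number of \emph{important} ones---those with $r_i\ge\varepsilon^2$, of which there are at most $1/\varepsilon^2$ by the normalization---and a pool of \emph{minor} ones, and handle the two groups asymmetrically. For the important items I would include all $2^{O(1/\varepsilon^2)}$ subsets as menu entries, priced at the sums of the respective Myerson prices. For the minor items, whose aggregate revenue is a sum of many bounded independent contributions (each at most $\varepsilon^2$) and therefore concentrates strongly around its mean, I would introduce a family of $\poly(n/\varepsilon)$ \emph{canonical} deterministic bundles, chosen so that with probability at least $1-\varepsilon$ over the buyer's minor-item values, some canonical bundle captures within $\varepsilon$-loss the buyer's preferred SRev choice on that group. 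Taking the Cartesian product of important-subset entries with canonical minor bundles and pricing them additively yields a menu of size $2^{O(1/\varepsilon^2)}\cdot\poly(n/\varepsilon)=n^{d(\varepsilon)}$, of the required form.

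The main obstacle is constructing the family of canonical minor bundles and proving the corresponding $\varepsilon$-approximation of the minor-item $\SRev$ contribution. The concentration intuition---that $\sum_{i\in\text{minor}}p_i\mathbf{1}[v_i\ge p_i]$ is essentially deterministic, with fluctuations small compared to its mean---must be converted into an explicit polynomial-size family of bundles covering all relevant value realizations up to multiplicative $\varepsilon$-loss simultaneously, rather than realization-by-realization. Once this combinatorial covering is in place, individual rationality and incentive compatibility of the combined (important $\times$ canonical) menu follow directly from the menu-auction formulation, and summing the revenue losses across pruning, truncation, discretization, and canonical restriction yields the desired $(1-\varepsilon)$ approximation of $\SRev$.
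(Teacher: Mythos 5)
Your approach has a genuine gap at the point you yourself flagged: the claimed concentration of the minor items' aggregate revenue is false in general. The payment from selling item~$i$ at its Myerson price $c_i$ is the random variable $c_i\cdot\mathbf{1}[v_i\ge c_i]$, whose mean is $r_i = p_ic_i$ but whose variance is roughly $c_i^2 p_i = c_i r_i$. Declaring an item ``minor'' because $r_i < \varepsilon^2$ places no constraint on $c_i$, so minor items can have arbitrarily high prices and arbitrarily small selling probabilities (e.g., $c_i=n$, $p_i=\varepsilon^3/n$), and the sum $\sum_{\text{minor}} c_i\mathbf{1}[v_i\ge c_i]$ can have variance far larger than its mean. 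No Chernoff/Chebyshev bound applies, and consequently no small family of ``canonical bundles'' can cover the realizations up to $\varepsilon$-loss --- the set of items a separate-sales buyer would purchase has no useful low-dimensional structure when prices vary by orders of magnitude. Your truncation and per-item support discretization, while harmless, do not help: they reduce the number of \emph{buyer types}, not the number of \emph{outcomes}, and it is the latter that governs menu size. A separate-sales auction over $n$ items with two-point supports still needs $2^n$ menu entries.

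The paper's proof avoids this by bucketing items by their Myerson \emph{prices} $c_i$, not by their revenues $r_i$. Within a multiplicative price band of width $1+\varepsilon$, all items are interchangeable to within a $(1+\varepsilon)$ factor, so concentration is controlled by the single quantity $\mu_b = \sum_{i\in B_b} p_i$: if $\mu_b$ is large the band can be bundled and sold near its expected value (Chebyshev concentration of a sum of Bernoullis), and if $\mu_b$ is small the band can be split into $O(\varepsilon^{-4})$ sub-bundles each of which is sold (essentially) at most once, so a single take-it-or-leave-it price works. This yields $O(\varepsilon^{-5}\log n)$ take-it-or-leave-it bundles, hence menu size $2^{O(\varepsilon^{-5}\log n)} = n^{O(\varepsilon^{-5})}$, once you also handle the two extreme bands: very low prices ($c_i<\varepsilon/n$) contribute negligibly and are given away, and very high prices ($c_i\ge n/\varepsilon$) are sold so rarely that the buyer can be restricted to purchasing at most one of them, contributing only a factor $(n+1)$ to the menu size. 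To repair your proposal along these lines, replace the revenue-based important/minor split with a price-based partition into $O(\varepsilon^{-1}\log(n/\varepsilon))$ bands plus low and high buckets, and replace the Cartesian-product-of-subsets idea with separate take-it-or-leave-it sub-auctions per bundle --- the overall menu is then the product over those sub-auctions rather than over important subsets.
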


The same bound applies also to the revenue obtainable by selling the items after arbitrarily prepartitioning them into bundles.
Using the result of \cite{BILW}, this immediately implies that polynomial menu size suffices for extracting a constant fraction of the \emph{optimal} revenue.

\begin{corollary}\label{poly-approx-rev}
There exist a fixed constant number $d$ and a fixed constant fraction $\alpha>0$ such that $C(n, 1-\alpha) \le O(n^d)$.
\end{corollary}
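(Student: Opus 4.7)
My plan is to simply glue together the two ingredients already in hand: the BILW approximation theorem and \cref{poly-approx-srev}. Concretely, fix any constant $\varepsilon_0 \in (0,1)$ (say $\varepsilon_0 = \nicefrac{1}{2}$) and let $d \eqdef d(\varepsilon_0)$ be the constant supplied by \cref{poly-approx-srev}. For an arbitrary product distribution $F = F_1\times\cdots\times F_n$, \cref{poly-approx-srev} gives an auction $\mech_{\text{sep}}$ of menu size at most $n^{d}$ with $\Rev(\mech_{\text{sep}}, F) \ge (1-\varepsilon_0)\cdot\SRev(F)$. On the other hand, the bundle-selling mechanism $\mech_{\text{bun}}$, which offers the grand bundle at its optimal take-it-or-leave-it price, is a single menu entry (so has menu size $1$) and extracts $\BRev(F)$.

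The seller knows $F$, and hence can compute both revenues and deterministically offer the better of $\mech_{\text{sep}}$ and $\mech_{\text{bun}}$; the resulting auction has menu size at most $\max\bigl(n^{d},1\bigr) = n^{d}$. Its revenue is
\[
\max\bigl\{(1-\varepsilon_0)\cdot\SRev(F),\,\BRev(F)\bigr\} \;\ge\; (1-\varepsilon_0)\cdot\max\bigl\{\SRev(F),\,\BRev(F)\bigr\}.
\]
The BILW result \citep{BILW} says that $\max\{\SRev(F),\BRev(F)\} \ge \tfrac{1}{6}\Rev(F)$, so the combined auction extracts at least $\tfrac{1-\varepsilon_0}{6}\cdot\Rev(F)$ of the optimal revenue. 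Setting $\alpha \eqdef \tfrac{1-\varepsilon_0}{6}$ (so $\alpha = \tfrac{1}{12}$ for the choice $\varepsilon_0 = \tfrac{1}{2}$) gives $C(n, 1-\alpha) \le n^{d} = O(n^{d})$, as claimed.

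There is essentially no obstacle here — the construction is a one-line reduction to \cref{poly-approx-srev} plus BILW — so the only thing to be slightly careful about is the bookkeeping. In particular, one should note that the ``use the better of two deterministic mechanisms'' step requires no randomized composition of menus (which would multiply menu sizes), because the distribution $F$ is common knowledge and the seller commits to a single deterministic mechanism ex ante; combining menus via randomization or enumeration of pairs would be wasteful and unnecessary. The constants $d$ and $\alpha$ are both absolute (independent of $n$, $\varepsilon$, and $F$), giving the asserted polynomial-in-$n$ bound.
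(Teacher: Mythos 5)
Your proof is correct and follows essentially the same route as the paper: approximate $\SRev$ via \cref{poly-approx-srev}, note the grand-bundle auction has a single menu entry, take the ex-ante better of the two, and invoke the \cite{BILW} bound that the better of separate selling and bundle selling attains a $\nicefrac{1}{6}$ fraction of the optimal revenue. The paper additionally observes that $\alpha$ can be pushed arbitrarily close to $\nicefrac{1}{6}$ by sending your $\varepsilon_0 \to 0$, but your fixed choice $\varepsilon_0 = \nicefrac{1}{2}$ (giving $\alpha = \nicefrac{1}{12}$) suffices for the stated existence claim.
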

\noindent The above reasoning shows that \cref{poly-approx-rev} holds for every $\alpha$ arbitrarily close to $\nicefrac{1}{6}$, which is the constant fraction of the optimal revenue shown by \cite{BILW} to be obtainable by the better of bundled selling and separate selling.

Does polynomial menu size suffice for extracting revenue \emph{arbitrarily close} to the optimal revenue?  We prove that
this is not the case, at least for $\varepsilon$ that is polynomially small in $n$.

\begin{theorem}\label{cne-lower-bound}
$C(n, \nicefrac{1}{n}) \ge 2^{\Omega(n)}$.
\end{theorem}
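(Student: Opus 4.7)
The plan is to construct a family of product distributions $F_1 \times \cdots \times F_n$ that is approximation-hard for mechanisms with sub-exponential menu size, and then to extract the lower bound by a pigeonhole argument on the menu. Since \cref{poly-approx-srev} rules out stronger lower bounds in the $\SRev$ regime, the construction must live in the narrow gap where $\Rev > \SRev$ by an amount that becomes crucial only at an approximation scale of order $1/n$.

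For the construction, I would take each $F_i$ as a discrete distribution (two or three points suffice) supported on geometrically separated scales, say $\{\ell_i, h_i\}$ with $h_i = \Theta(B^i)$ for a large enough base $B$, and probabilities tuned so that each item contributes $\Theta(1)$ to the optimal revenue. The asymmetry across items is important: symmetric i.i.d.\ constructions would admit symmetric mechanisms whose effective menu size is too small. I would then verify that $\Rev(F_1 \times \cdots \times F_n) = \Theta(n)$ by exhibiting a mechanism that offers a distinct menu entry for each of the $2^n$ subsets $S \subseteq [n]$ of ``high'' items, with lotteries/prices carefully tuned so that merging any two entries produces an ex-post revenue gap of $\Omega(1)$.

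Given any mechanism $M$ with menu size $C$, I would apply pigeonhole on the $2^n$ extremal type profiles: if $C = 2^{o(n)}$, some menu entry is selected by an exponentially large collection of these profiles, carrying substantial aggregate probability. Since $M$ is incentive compatible and offers a single fixed $(\vec x;p)$ on this collection, it cannot mimic the profile-dependent payment that the optimal mechanism extracts on that same collection. A standard ex-post revenue identity together with the $\Omega(1)$ separation built into the construction then yields a revenue loss of $\Omega(1)$ per merged collection; summing shows that $C=2^{o(n)}$ forces total loss $\Omega(1) = \Omega(\Rev(F)/n)$, contradicting a $(1-1/n)$-approximation.

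The main obstacle will be the quantitative calibration of the construction. I need the scales $h_i$, probabilities, and the optimal menu's lottery structure to be tuned so that \emph{any} merging of an exponential number of profiles loses $\Omega(1)$ of revenue rather than $o(1)$, while simultaneously keeping $\Rev(F)$ far enough above $\SRev(F)$ that \cref{poly-approx-srev} --- which already gives polynomial menu size for $\SRev$ approximation --- does not yield a mechanism that secretly beats the claimed lower bound. Striking this balance at exactly the threshold $\varepsilon = 1/n$, where polynomial and exponential menu sizes separate, is the delicate part of the argument.
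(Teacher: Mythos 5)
There is a genuine gap, and in fact the proposal's guiding intuition points in the wrong direction. The paper's hard instance is the \emph{symmetric i.i.d.}\ distribution where each item is independently $0$ or $1$ with probability $\nicefrac{1}{2}$, so $\Rev(F)=\nicefrac{n}{2}$ (full welfare, extracted by selling each item for price $1$). Your proposed asymmetric, geometrically separated construction is motivated by the worry that ``symmetric i.i.d.\ constructions would admit symmetric mechanisms whose effective menu size is too small,'' but this worry does not withstand scrutiny --- symmetry of the distribution does not force the mechanism to have few \emph{distinct outcomes}, which is what menu size counts --- and the paper demonstrates that the simplest possible symmetric distribution already works.

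The deeper gap is in the pigeonhole step. You claim that if one fixed entry $(\vec{x};p)$ is chosen by an exponentially large collection of type profiles, then that single entry ``cannot mimic the profile-dependent payment,'' incurring an $\Omega(1)$ revenue loss on that collection. As stated, this is false: a single lottery entry $(\vec{x};p)$ with all $x_i>\nicefrac{1}{2}$ can simultaneously extract near-full price from an exponentially large \emph{antichain} of types $S$, so sharing an entry among many near-full-price types costs nothing by itself. The obstruction the paper exploits is purely about \emph{containment}: if two types $S\subsetneq T$ both lie inside the set $U$ of items that $e$ allocates with probability $>\nicefrac{1}{2}$, and $S$ pays full price (payment $>|S|-\nicefrac{1}{2}$, hence utility $<\nicefrac{1}{2}$) while choosing $e$, then $T$'s utility from $e$ alone already exceeds $\nicefrac{1}{2}$, so by IC $T$ cannot pay full price. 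To turn this into a quantitative bound one samples a random inclusion pair $S_0\subsetneq S_1=S_0\cup\{i\}$, lower-bounds $\Pr[\text{both pay full price}]\ge\nicefrac{4}{5}$ from the assumed $(1-\nicefrac{1}{10n})$ approximation, and upper-bounds the same probability by $<\nicefrac{4}{5}$ when the menu has $\le 2^{n/10}$ entries (since $S_0$ almost surely lands on a ``non-tiny'' entry with $|U|>0.8n$, and then $S_1\subseteq U$ with high probability, killing full price for $S_1$). This antichain-versus-chain distinction is the missing idea in your outline; without it, the claimed $\Omega(1)$-loss-per-merged-collection does not hold, and the quantitative calibration you flag as ``the delicate part'' would not be resolvable by tuning scales alone.
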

\noindent The proof of \cref{cne-lower-bound} shows, in fact, that polynomial dependence on
$\varepsilon$ is impossible even for approximating the revenue from selling the items separately.

At this point, we leave two main problems open.  The first one is whether for every \emph{fixed} $\varepsilon>0$, polynomial
(or at least quasi-polynomial)
menu size suffices for approximating the optimal revenue to within a multiplicative $\varepsilon$.   
In terms of communication complexity, this translates to whether
logarithmic or polylogarithmic deterministic communication suffices\footnote{The authors have opposing conjectures regarding the answer to this open problem.} for every fixed value of $\varepsilon$.

\begin{open problem}\label{arbitrary-eps}
Is it true that for every $\varepsilon>0$, there exists $d(\varepsilon)$ such that $C(n, \varepsilon) \le O\bigl(n^{d(\varepsilon)}\bigr)$? 
How about $C(n, \varepsilon) \le O\bigl(2^{\log^{d(\varepsilon)} n}\bigr)$? 
\end{open problem}

The second open problem (or rather, class of open problems) is whether stronger notions of mechanism description complexity 
may allow for better revenue in 
polynomial complexity.   This
may be asked with respect to any complexity measure, and it is not clear which specific natural choice to consider, so 
an identification of such a measure is part of what is left open.\footnote{One may be tempted to consider 
the \emph{additive menu size} complexity defined by \cite{hart-nisan-b}, which
allows the seller to present menus from which the buyer is allowed to take \emph{any combination} of menu entries for the sum of their prices. (So, for example, selling each item separately has linear additive menu size.)
There are several possible interpretations here regarding whether buying lotteries translates into winning at the sum of probabilities (capped at one) or translates into not winning at the product of the probabilities of not winning. Similarly, the mechanism could either allow or not allow the buyer to adaptively decide whether and which additional menu entry to buy based on the realization of the lottery in the first, already purchased, menu entry. Follow-up work by \cite{bnr18} showed that regardless of which interpretation is adopted, additive menus \emph{cannot} achieve the optimal revenue. In other words, restricting to additive menus in fact entails for some distributions a constant multiplicative loss on the revenue, and so additive menus (regardless of how large their menu size is) cannot be used to obtain arbitrarily small losses compared to the optimal revenue.
There are several more general definitions that are possible, but we have not found a truly satisfactory one.}
 
\subsection{Further and Follow-Up Literature}

In the short time since the first appearance of this paper, there has been a rise in the research attention dedicated to menu sizes, in two major directions. Most related to this work, \cite{g18} analyzes the required menu size when holding the number of items $n$ fixed, as a function of the allowed loss $\varepsilon$ (in a sense symmetrically to our \cref{arbitrary-eps} above, which asks about the menu size for loss $\varepsilon$ held fixed, as a function of the number of items $n$), and shows that already for two items with bounded i.i.d.\ valuations, a polynomial dependence on $\nicefrac{1}{\varepsilon}$ as in our bound from \cref{cne-upper-bound} is required. As \cite{g18} shows, this lower bound, together with our upper bound (\cref{cne-upper-bound}) and with our observation that the deterministic communication complexity of running a mechanism is the logarithm of its menu size (\cref{cc} in \cref{comm}), implies the following: for any fixed number of items~$n$, there exists a tight $\Theta(\log\nicefrac{1}{\varepsilon})$ bound on the minimal communication complexity that is guaranteed for any product distribution to suffice for running a mechanism that maximizes revenue up to a multiplicative~$\varepsilon$.

Our arguments readily extend beyond additive valuations, to subadditive valuations. Following up on our work, \cite{KMSSW19} consider a more permissive notion of menu size that they define, \emph{symmetric menu size}, which allows for a single menu entry to represent many menu entries that can be obtained from the original menu entry by permuting the identities of the items. They extend and tighten our analysis for the case of unit-demand valuations (a different subclass of subadditive valuations) and for their menu-size notion (their most notable change is to the last part of our analysis --- the discretization of the ``cheap part'' of the menu), and prove that a quasipolynomial symmetric menu size suffices for obtaining a $(1-\varepsilon)$ fraction of the optimal revenue. That is, they give a positive answer to a unit-demand/symmetric-menu-size variant of the second part of our \cref{arbitrary-eps}. They still explicitly leave open the question of obtaining a similar result for additive buyers (as in our \cref{arbitrary-eps}), even in their more permissive symmetric menu size model.

Slightly farther apart is the line of study of the menu size of mechanisms that lie, in a sense, between single- and multi-dimensional mechanisms --- mechanisms whose study was originated by \cite{fgkk16} with their study of the ``FedEx auction.'' Necessary and sufficient menu sizes for precise revenue maximization in the FedEx setting and variants thereof have been studied by \cite{fgkk16}, \cite{dw17}, \cite{ssw18}, and \cite{dgssw18}, while the menu size for approximate revenue maximization up to a multiplicative loss of $\varepsilon=\nicefrac{1}{n^2}$ has been studied by \cite{ssw18}. This latter analysis asks a somewhat similar question to our \cref{cne-lower-bound} (where~$\varepsilon$ also shrinks as $n$ grows, though with $\varepsilon=\nicefrac{1}{n}$), however in the multi-item model that we study the corresponding menu size is exponential while in the ``one-and-a-half dimensional'' FedEx model that they study it is polynomial. For more details on all of these results, we refer the interested reader to \cite{gg18-tutorial}.


\section{Upper Bound on\texorpdfstring{\\}{ }Revenue-Approximation Complexity}\label{upper-bound}

In this \lcnamecref{upper-bound}, we prove our main result, \cref{cne-upper-bound}, which states that $C(n,\varepsilon)$ is finite for every number of items~$n\in\NN$ and $\varepsilon>0$, and moreover, that~$C(n,\varepsilon)\le\bigl(\frac{\log n}{\varepsilon}\bigr)^{O(n)}$. To somewhat ease presentation and to emphasize which elements in the proof are used for the qualitative result and which only for the precise quantitative one, we will first prove that $C(n,\varepsilon)\le(\nicefrac{n}{\varepsilon})^{O(n)}$, which makes the same qualitative statement and still requires virtually all of the novel technical ``beef'' of the proof, but saves some clutter at the end. The proof proceeds in four steps. \cref{upper-bound-overview-sec} provides a rough overview of the proof strategy, \cref{no-two-high-sec,expensive-exc-sec,expensive-trim-sec,cheap-disc-sec} provide the formal details of each of the four steps of the proof, and \cref{connect-dots-sec} connects the dots by combining the four steps. \cref{tighter} gives an overview of the modifications required to prove the quantitatively stronger upper bound of $\bigl(\frac{\log n}{\varepsilon}\bigr)^{O(n)}$, with the details relegated to the \lcnamecref{logn}. Finally, \cref{exclusively-unbounded-sec} concludes with a short discussion of the application of the proof steps to obtain uniform approximation results for correlated distributions over a restricted valuation space, which generalize bounded distributions.

\subsection{Proof Overview}\label{upper-bound-overview-sec}

Let $F_1, F_2,\ldots,F_n$ be the respective distributions of the values of the $n$ items.
We will construct a mechanism with (finite) menu size $(\nicefrac{n}{\varepsilon})^{O(n)}$ that guarantees a $(1-\varepsilon)$ multiplicative approximation to the optimal revenue.\footnote{The overview of the proof of the stronger upper bound of $(\frac{\log n}{\varepsilon})^{O(n)}$ is identical up to the last step, which is also quite similar but more intricate for the stronger bound; See \cref{tighter} for the details.}

\paragraph{Limitations of Existing Techniques} One possible approach to approximate revenue maximization, taken by \cite{LY13}, \cite{BILW}, \cite{RW15}, and more recently \cite{CZ17}, is to use a core/tail decomposition and bound the revenue from (or the welfare of) the core and the revenue from the tail. Unfortunately, such a decomposition inherently entails a nonnegligible revenue loss (it only guarantees a constant fraction of the optimal revenue), either due to bounding the welfare of the core instead of the revenue from it, or due to estimating the total revenue using the revenues obtained by selling to the core and to the tail separately. Therefore, while this approach makes no assumptions regarding the valuation space (beyond independence), this technique, as used in the literature so far, is unsuitable for guaranteeing negligible loss in revenue as in the result that we seek.

Another possible approach, taken by \cite{DW12}, \cite{hart-nisan-b}, and \cite{DLN14}, is to round all possible menu entries onto a discrete grid via ``nudge and round'' operations.
Unfortunately, for the grid (and thus the menu size) to be finite and for the revenue loss to indeed be negligible, the above papers all require that the valuation space be bounded. Therefore, while this approach can guarantee negligible loss in revenue, this technique, as used in the literature so far, is unsuitable for the analysis of unbounded valuation spaces as in our setting.

\medskip

To overcome the above-described limitations of the core/tail decomposition technique, we take a more subtle approach, by analyzing core and tail regions together. We first show (in Step 1 below) that one does not lose much revenue by disregarding what can be described as ``second-order'' tails, i.e., valuations where two or more of the item values lie in the tail. Then, we show how to gradually simplify an optimal mechanism (which may be arbitrarily complex, even infinite in size) for the valuation space consisting of the core plus all first-order tails while losing only a tiny fraction of the revenue in each step. For every modification that we perform to the menu, we must ``simultaneously'' check that we do not significantly hurt the revenue from either core or (first-order) tail buyers. To gradually simplify the menu, we first carefully modify the menu so that only a small number of menu entries have a high price (this is the most technically elaborate part of the proof, performed in Steps 2 and 3 below), and then (in Step 4 below) round the menu entries with low prices to a finite grid using ``nudge and round'' operations. At this point, the use of ``nudge and round'' onto a finite grid is possible without significant revenue loss since the price of the menu entries that we round is bounded. Nonetheless, care still has to be taken beyond previous ``nudge and round'' uses, to ensure that this rounding does not incentivize buyers in the (first-order) tails to switch to buying a lower-priced rounded entry. Before moving on to the definitions and formal statements and proof, we first give a somewhat more detailed, yet still high-level, overview of each of the four steps of the proof.

\paragraph{Step 1: Move to an ``almost bounded'' valuation space} This step, taken in \cref{no-two-high-sec}, simplifies the valuation space by showing that since item values are independent, finding an approximately optimal mechanism under the assumption that at most one item has a value higher than~$\bound$ (i.e., has a value that lies in the $\bound$-tail), for some $\bound=\poly(n,\nicefrac{1}{\varepsilon})$, entails a very small loss compared to doing so without this assumption. This is possible, very roughly speaking, because the probability of two item values lying in the tail, for $H$ as above, can be thought of as being of order $\varepsilon^2$, while the revenue conditioned upon being in this ``second-order'' tail (i.e., conditioned upon the values of both of these items lying in the $\bound$-tail) is of order~$\nicefrac{1}{\varepsilon}$. Therefore, it is enough to construct our finite approximation for the distribution conditioned upon being in the valuation space comprised of the core and the first-order tail, i.e., the valuation space where at most one item value lies in the tail. We call distributions over this valuation space \emph{exclusively unbounded} distributions. We note that this is the only step in which the independence of the item values is used; indeed, combining the remaining steps shows that the revenue from all exclusive unbounded distributions (even highly correlated distributions not originating from a product distribution over~$\RRn$) can be uniformly approximated using finite-size menus (see \cref{upper-bound-correlated} in \cref{exclusively-unbounded-sec}).

\paragraph{Step 2: Modify expensive menu items to behave ``almost like'' single-item mechanisms} This step, taken in \cref{expensive-exc-sec}, starts with an optimal (possibly arbitrarily complex) revenue-maximizing mechanism for some exclusively unbounded distribution. In this step, we simplify the ``expensive'' part of the menu, i.e., the part of the menu consisting of all menu entries that cost more than $\thresh$, for some $\thresh=\poly(n,\nicefrac{1}{\varepsilon})$, so that each expensive menu entry allocates only a single item with non-zero probability.  This means that while in the ``cheap'' part of the menu we can allocate arbitrary combinations of items, once the price increases beyond $\thresh$, our mechanism must act like a unit-demand one and never allocate more than a single item. We call such a mechanism \emph{$\thresh$-exclusive}. This is possible since, roughly speaking, due to the assumption of exclusive unboundedness, most of the value from an expensive menu entry chosen by some buyer type comes only from the unique item whose value lies in the tail for the valuation of that buyer type. Thus, instead of offering that (nonexclusive) menu entry, we offer an (exclusive) entry with only the corresponding winning probability of that item, for a slightly discounted price. While in most natural cases, this step in fact increases the size of the expensive part of the menu (as each expensive menu entry possibly becomes $n$ exclusive menu entries, each allocating a distinct item with non-zero probability), this simplification allows the next step to significantly reduce the size of this part of the menu.

\paragraph{Step 3: Apply \citeauthor{Myerson81}'s result to obtain ``almost one'' expensive entry per item} This step, taken in \cref{expensive-trim-sec}, reduces the size of the expensive part of the menu to at most $2n$ menu entries. This is the most technically elaborate step. Since $\thresh$-exclusivity means that the expensive menu entries ``look like'' separate single-dimensional mechanisms for each of the $n$ items, we show that we are able to carefully use the analysis of \cite{Myerson81} to replace each of these separate expensive mechanisms with a simple ``almost deterministic'' one.  In contrast to \citeauthor{Myerson81}'s single (non-zero) menu entry, we require two menu entries for each item: a deterministic one analogous to \citeauthor{Myerson81}'s ``optimal price'' entry, and an additional randomized one analogous to the ``opt out'' zero entry in \citeauthor{Myerson81}'s mechanism. The function of the latter entry is to make sure that buyers are not incentivized to ``jump'' from the expensive part to the cheap part of the menu following the reduction of the size of the former.

\paragraph{Step 4: Discretize cheap menu entries ``almost to a grid''} This final step, taken in \cref{cheap-disc-sec}, simplifies the cheap part of the menu by ``rounding'' the menu entries into a discrete set.
We note that even at this point in the proof, the ``nudge and round'' techniques that allowed this rounding to be done with only negligible loss of revenue for bounded valuations in previous papers \citep{DW12,hart-nisan-b,DLN14} cannot just be used ``out of the box'' in this step. Indeed, slight changes in allocation probabilities may result in large revenue changes, since the valuation space is not bounded but only exclusively unbounded. Nonetheless, these techniques can be carefully extended to be used here as well. Roughly speaking, we construct $n$ discretizations of each cheap menu entry, where each discretization rounds the price and all but one allocation; rounding in the right direction guarantees that at least one of these discretizations is still a leading candidate for any buyer type that previously chose the corresponding original (nondiscretized) menu entry. As all but one coordinate of each of the discretized menu entries lie on a grid, we show that only finitely many of the menu entries are in fact chosen by any buyer type.

\medskip

While the second and third (and first) steps each entail a slight multiplicative revenue drop, the fourth step entails also a slight additive revenue drop.
Recall, however, that we aim to achieve only a slight multiplicative drop (with no additional additive drop) in overall revenue.
To obtain this result, when combining all of the above steps in \cref{connect-dots-sec} we assume w.l.o.g.\ that $\Max_i\Rev(F_i)$ is normalized\footnote{The cases in which $\Max_i\Rev(F_i)$ cannot be normalized, i.e., when it is $0$ or infinite, are easy to handle separately. In the former case, there is nothing to show. In the latter case, $\Rev(F_i)=\infty$ for some $i\in[n]$, and so by the theorem of \cite{Myerson81}, an arbitrarily high revenue can be extracted using a take-it-or-leave-it offer for item $i$.} (by scaling the currency) to a suitable value such that the additive drop in the fourth step can be quantified to be less than a slight multiplicative drop. Clearly, as the obtained bound on the overall cumulative revenue drop for normalized mechanisms is purely multiplicative, the proof also implies the same multiplicative bound for all (even nonnormalized) distributions.

\subsection{Preliminaries}

\begin{definition}[Notation]\leavevmode
\begin{itemize}
\item
\defnitemtitle{Naturals}
We denote the strictly positive natural numbers by $\mathbb{N}\eqdef\{1,2,3,\ldots\}$.
\item
\defnitemtitle{[n]}
For every $n \in \mathbb{N}$, we define $[n]\eqdef\{1,2,\ldots,n\}$.
\item
\defnitemtitle{Nonnegative Reals}
We denote the nonnegative reals by $\RR\eqdef\{r \in \Reals \mid r \ge 0\}$.
\item
\defnitemtitle{$\Delta(\cdot)$}
For a set $A$, we denote by $\Delta(A)$ the set of probability distributions over $A$.
\end{itemize}
\end{definition}

\begin{definition}[Outcome; Type; Utility]
Let $n\in\mathbb{N}$ be a number of items. 
\begin{parts}
\item
An \emph{outcome} is an $(n+1)$-tuple $(\vec{x};p)=(x_1,x_2,\ldots,x_n;p)\in[0,1]^n\times\RR$, denoting an allocation (to the buyer) of every item $i\in[n]$ with probability $x_i$, for a total price (paid by the buyer) of $p$.
\item
We denote the (expected) \emph{utility} of a (risk-neutral additive) buyer with \emph{type} (respective item valuations) $v=(v_1,v_2,\ldots,v_n)\in\RRn$ from an outcome $e=(\vec{x};p)\in[0,1]^n\times\RR$ by \[u_e(v)\eqdef \sum_{i=1}^n x_i\cdot v_i - p.\]
\end{parts}
\end{definition}

\begin{definition}[IC Mechanism as Menu]
Let $n\in\mathbb{N}$ be a number of items. By the taxation principle, we identify any \emph{incentive-compatible (IC)} $n$-item \emph{mechanism} with a (possibly infinitely large) menu of outcomes (the \emph{entries} in the menu are all the possible outcomes of the mechanism), where by IC the buyer chooses an entry that maximizes her utility.\footnote{If the menu is infinite, then the fact that it corresponds to an IC mechanism guarantees that some menu entry maximizes the utility of each buyer type. See \cref{closure-max-utility} for more details.} If the mechanism is \emph{individually rational (IR)}, then we assume w.l.o.g.\ that the menu includes the entry $(\vec{0};0)$ that allocates no item and costs nothing. (Conversely, if the menu includes the entry $(\vec{0};0)$, then the mechanism is IR.) Following \cite{hart-nisan-b}, we define the \emph{menu size}  of an IC and IR mechanism as the number of entries, except $(\vec{0};0)$, in the menu of that mechanism.
\end{definition}

\begin{definition}[$\Rev_{\mech}$; $\Rev_C$; $\Rev$]\label{revenue}
Let $n\in\mathbb{N}$ be a number of items and let $F\in\Delta(\RRn)$ be a distribution over~$\RRn$.
\begin{parts}
\item\label{revenue-mech}
Given an IC and IR $n$-item mechanism $\mech$, we denote the (expected) revenue obtainable by $\mech$ from (a single risk-neutral additive buyer with type distributed according to) $F$, by \[
\Rev_{\mech}(F)\eqdef\expect{v\sim F}{p(v)},\]
where $p(v)$ is the price of the entry from $\mech$ that maximizes the utility of $v$, with ties broken in favor of higher prices.\footnote{The results of this paper hold regardless of the tie-breaking rule chosen. See \cref{tie-breaking} for more details.}\textsuperscript{,}\footnote{If the menu is infinite, then the fact that a utility-maximizing menu entry exists for every buyer type does not guarantee that a utility-maximizing entry \emph{with maximal price} (among all utility-maximizing entries) exists for every buyer type. (I.e., it is not guaranteed that the supremum price over all utility-maximizing entries is attained as a maximum.) Indeed, to be completely general, a more subtle definition of the revenue obtainable by an IC mechanism would have been needed. Nonetheless, for the mechanisms considered in this paper, this subtle definition is not required as we make sure that they all possess, for each buyer type, a utility-maximizing entry with maximal price. See \cref{closure-max-price} for more details.}
\item
Given $C\in\NN$, we denote the highest revenue (more accurately, the supremum of the revenues) obtainable from $F$ by an IC and IR $n$-item mechanism with at most $C$ menu entries by \[\Rev_C(F)\eqdef\smashoperator[r]{\sup_{\substack{\mech\subseteq[0,1]^n\times\RR:\\|\mech|\le C}}}\Rev_{\mech}(F).\]
\item
We denote the highest revenue (more accurately, the supremum of the revenues) obtainable from $F$ by an IC and IR $n$-item mechanism by \[\Rev(F)\eqdef\smashoperator[r]{\sup_{\mech\subseteq[0,1]^n\times\RR}}\Rev_{\mech}(F).\]
\end{parts}
\end{definition}

\begin{theorem}[\citealp{hart-nisan-a}]\label{sum-rev}
	$\Rev(F \times G) \le 2\cdot\bigl(\Rev(F)+\Rev(G)\bigr)$,
	for every $m,n\in\NN$, $F\in\Delta(\RR^m)$, and $G\in\Delta(\RRn)$.
\end{theorem}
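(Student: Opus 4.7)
The plan is to apply the marginal mechanism technique of \cite{hart-nisan-a}. Let $\mech$ be any IC and IR auction for $F \times G$ whose revenue approaches $\Rev(F \times G)$, and for each type $(v, w) \in \supp(F \times G)$ let $(\vec{x}(v, w), \vec{y}(v, w); p(v, w))$ denote the entry chosen by $(v, w)$. The strategy is to project $\mech$ onto two families of single-side mechanisms---one on $F$-items parameterized by a fixed $w$, and symmetrically one on $G$-items parameterized by a fixed $v$---and to show that their revenues collectively dominate $\Rev_\mech(F \times G)$ up to a factor of $2$.

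Concretely, the $F$-marginal at $w$, which I call $\mech^F_w$, has menu obtained from $\mech$ by taking each entry $(\vec{x}, \vec{y}; p)$ to $(\vec{x}; \max\{p - w \cdot \vec{y}, 0\})$ in $\mech^F_w$ (together with the zero entry to enforce IR). A short case analysis---leveraging the fact that the utility expression $v \cdot \vec{x} - (p - w \cdot \vec{y})$ coincides with the original utility $v \cdot \vec{x} + w \cdot \vec{y} - p$ of $(v, w)$ from the corresponding entry whenever $p \ge w \cdot \vec{y}$, and is otherwise dominated by it---shows that, with ties broken in favor of higher prices, a type-$v$ buyer facing $\mech^F_w$ pays at least $\max\{p(v, w) - w \cdot \vec{y}(v, w), 0\}$. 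Averaging over $w \sim G$ and using that $\mech^F_w$ is IC and IR on $F$ (so that $\Rev_{\mech^F_w}(F) \le \Rev(F)$) yields the marginal bound $\Rev(F) \ge \expect{(v, w)}{\max\{p - w \cdot \vec{y},\, 0\}}$, and a symmetric construction gives $\Rev(G) \ge \expect{(v, w)}{\max\{p - v \cdot \vec{x},\, 0\}}$.

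The remaining step---and the main technical obstacle---is combining the two marginal inequalities with the IR constraint $p \le v \cdot \vec{x} + w \cdot \vec{y}$ to conclude $\expect{(v, w)}{p} \le 2\bigl(\Rev(F) + \Rev(G)\bigr)$. The natural pointwise decomposition $p \le \max\{p - w \cdot \vec{y},\, 0\} + w \cdot \vec{y}$, together with its symmetric counterpart, sums to $2p \le \max\{p - w \cdot \vec{y},\, 0\} + \max\{p - v \cdot \vec{x},\, 0\} + (v \cdot \vec{x} + w \cdot \vec{y})$. The delicate point is that the residual welfare term $\expect{}{v \cdot \vec{x} + w \cdot \vec{y}}$ is \emph{not} in general dominated by $\Rev(F) + \Rev(G)$, so a more refined step is required: I would first split each payment $p$ into two non-negative parts $p_F + p_G$ with $p_F \le v \cdot \vec{x}$ and $p_G \le w \cdot \vec{y}$ (which exists by IR) and feed each part independently into the marginal construction, converting the welfare slack into the multiplicative factor of $2$. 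Taking the supremum over $\mech$ then completes the proof.
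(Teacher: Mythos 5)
Your construction of the marginal mechanism $\mech^F_w$ is sound: the menu $\bigl\{(\vec{x};\max\{p-w\cdot\vec{y},0\})\bigr\}\cup\{(\vec{0};0)\}$ is IC and IR, a type-$v$ buyer weakly prefers the entry arising from $(v,w)$ whenever $p(v,w)\ge w\cdot\vec{y}(v,w)$, and averaging over $w\sim G$ indeed yields $\Rev(F)\ge\expect{(v,w)}{\max\{p-w\cdot\vec{y},\,0\}}$, and symmetrically for $G$. You are also right that this is the easy part and that the ``natural'' pointwise decomposition does not close the gap.

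However, the proof is incomplete at precisely the point you flag, and the proposed fix does not work. The two marginal inequalities you derive cannot, by themselves, imply the theorem: the quantity $\max\{p-w\cdot\vec{y},0\}+\max\{p-v\cdot\vec{x},0\}$ can equal $0$ even when $p>0$ (e.g.\ whenever $v\cdot\vec{x}=w\cdot\vec{y}=p$), so no multiple of $\Rev(F)+\Rev(G)$ obtained solely from these two bounds can dominate $\expect{}{p}$. Your suggested repair --- split $p=p_F+p_G$ with $p_F\le v\cdot\vec{x}$, $p_G\le w\cdot\vec{y}$, and ``feed each part independently into the marginal construction'' --- is not an argument: the marginal construction produces an IC menu only because the $F$-side price takes the very specific form $p-w\cdot\vec{y}$ (so that the buyer's utility in $\mech^F_w$ is a translate of her utility in $\mech$); for an arbitrary split $p_F$ there is no reason the resulting $F$-side menu would be IC, hence no reason $\expect{}{p_F}$ should be bounded by $\Rev(F)$. (If it were, then taking $p_F=v\cdot\vec{x}\cdot p/(v\cdot\vec{x}+w\cdot\vec{y})$ would bound a welfare-like quantity by revenue, which is false in general.) The factor-$2$ subadditivity in \cite{hart-nisan-a} requires an additional idea beyond the bare marginal-mechanism lemma; as written, your step from the two marginal inequalities to the conclusion is a genuine gap, not a routine verification, and you should either supply that missing argument or cite the corresponding lemma of \cite{hart-nisan-a} directly.
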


\subsection{At Most One High Value}\label{no-two-high-sec}

As outlined above, our first step toward proving \cref{cne-upper-bound}, which we take in this \lcnamecref{no-two-high-sec}, simplifies the valuation space. It does so by showing that since item values are independent, any mechanism that extracts most of the revenue under the assumption that the valuation space is restricted to some \emph{$\bound$-exclusively unbounded} valuation space, i.e., to a valuation space where for each buyer type at most one item has value higher than some $\bound=\poly(n,\nicefrac{1}{\varepsilon})$, also extracts most of the revenue without this assumption. This subsection is dedicated to the statement and proof of \cref{no-two-high}, which formalizes this step.

\begin{sloppypar}
\begin{definition}[$\EU$; Exclusively Unbounded Type Distribution]
Let $\bound\in\RR$ and $n\in\NN$.
\begin{parts}
\item
We denote the subset of $\RR^n$ where at most one coordinate is strictly greater than $\bound$ by
\[
\EU \eqdef \bigl\{(v_1,v_2,\ldots,v_n)\in\RRn ~\big|~ |\{i\in[n] \mid v_i > \bound\}| \le 1 \bigr\}.
\]
\item
We say that a type distribution $F\in\Delta(\RRn)$ is \emph{($\bound$-)exclusively unbounded} if $\supp(F)\subseteq\EU$.
\end{parts}
\end{definition}
\end{sloppypar}

\begin{definition}[$F|_{A}$]
For a set $A$ and a distribution $F$ defined over some superset of $A$ s.t.\ $A$ is measurable and $F(A)>0$, we denote the conditional distribution of $v\sim F$ conditioned upon $v\in A$ by $F|_{A}$. Formally, for every measurable set $B\subseteq A$, we define $F|_{A}(B) \eqdef \frac{F(B)}{F(A)}$.
\end{definition}

\begin{lemma}\label{no-two-high}
Let $n\in\NN$ s.t.\ $n\ge2$, let $R\in\RR$, let $\varepsilon\in(0,1)$, and let $\bound\ge\frac{2 \cdot n \cdot (n-1) \cdot R}{\varepsilon}$.  For every $F = F_1\times F_2 \times \cdots \times F_n \in \Delta(\RR)^n$ s.t.\ $\Max_{i\in[n]}\Rev(F_i)\le R$, all of the following hold.
\begin{parts}
\item\label{no-two-high-well-defined}
$F(\EU)>0$ (hence the exclusively unbounded conditioned distribution $F|_{\EU}$ is well defined).
\item\label{no-two-high-rev}
$\Rev(F|_{\EU})\ge(1-\varepsilon)\cdot\Rev(F)$.
\item\label{no-two-high-revm}
For every $a\in(0,1]$ and for every IC and IR $n$-item mechanism $\mech$, if $\Rev_{\mech}(F|_{\EU}) \ge {a\cdot\Rev(F|_{\EU})}$, then $\Rev_{\mech}(F) \ge (1-\varepsilon)\cdot a \cdot \Rev(F)$.
\end{parts}
\end{lemma}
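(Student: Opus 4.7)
The plan is to prove part (a) by a direct union bound, and to derive parts (b) and (c) from the single ``master estimate''
\[\bigl(1-F(\EU)\bigr)\cdot\Rev\bigl(F|_{\RRn\setminus\EU}\bigr) \le \varepsilon\cdot\Rev(F).\]
Part (a) follows from Myerson's single-item bound: for each $i$, $F_i(v_i>\bound)\le\Rev(F_i)/\bound\le R/\bound$, so by independence $F(B_i\cap B_j)\le(R/\bound)^2$ where $B_i\eqdef\{v\in\RRn:v_i>\bound\}$, and a union bound over the $\binom{n}{2}$ pairs yields $1-F(\EU)\le\binom{n}{2}(R/\bound)^2<1$ under the hypothesis $\bound\ge 2n(n-1)R/\varepsilon$. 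Given the master estimate, any IC and IR mechanism $M$ satisfies $\Rev_M(F)\le F(\EU)\Rev(F|_{\EU})+(1-F(\EU))\Rev(F|_{\RRn\setminus\EU})$, and supremizing over $M$ yields $F(\EU)\Rev(F|_{\EU})\ge(1-\varepsilon)\Rev(F)$. Part (b) then follows from $F(\EU)\le 1$, and part (c) follows from the chain $\Rev_{\mech}(F)\ge F(\EU)\Rev_{\mech}(F|_{\EU})\ge a\cdot F(\EU)\Rev(F|_{\EU})\ge a(1-\varepsilon)\Rev(F)$.

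To prove the master estimate, the union bound $\mathbb{1}_{v\notin\EU}\le\sum_{i<j}\mathbb{1}_{v\in B_i\cap B_j}$ combined with taking expectation under $F$ and supremizing over IC and IR mechanisms gives
\[\bigl(1-F(\EU)\bigr)\Rev\bigl(F|_{\RRn\setminus\EU}\bigr) \le \sum_{i<j}F(B_i\cap B_j)\cdot\Rev(F|_{B_i\cap B_j}).\]
By independence, each conditional factorizes as $F|_{B_i\cap B_j}=F_i|_{B_i}\times F_j|_{B_j}\times\prod_{k\ne i,j}F_k$, and iteratively applying \cref{sum-rev} bounds $\Rev(F|_{B_i\cap B_j})$ by a constant multiple of $\Rev(F_i|_{B_i})+\Rev(F_j|_{B_j})+\Rev\bigl(\prod_{k\ne i,j}F_k\bigr)$. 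The single-item conditional terms satisfy $\Rev(F_i|_{B_i})\le R/F(B_i)$, because running any auction for the conditional on all of $F_i$ (and charging nothing when $v_i\le\bound$) earns at most $\Rev(F_i)\le R$; and the remaining term satisfies $\Rev\bigl(\prod_{k\ne i,j}F_k\bigr)\le\Rev(F)$ by monotonicity of the optimal revenue in the set of items. The crucial cancellation is then $F(B_i)\cdot F(B_j)\cdot R/F(B_i)=R\cdot F(B_j)\le R^2/\bound$, so summing over the $\binom{n}{2}$ pairs and invoking $\bound\ge 2n(n-1)R/\varepsilon$, together with the WLOG assumption $R=\max_i\Rev(F_i)$ (so that $R\le\Rev(F)$), yields the master estimate.

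The main obstacle is the arithmetic bookkeeping of the sum-rev decomposition: each application of \cref{sum-rev} contributes a factor-of-$2$ loss that must be absorbed into the $2n(n-1)$ gap in the hypothesis on $\bound$ in order to reach the target bound $\varepsilon\cdot\Rev(F)$. A secondary subtlety, cleanly handled by phrasing the entire argument in terms of suprema rather than maxima, is that the supremum defining $\Rev(F)$ need not be attained by any particular auction.
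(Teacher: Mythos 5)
Your high-level plan is identical to the paper's: the ``master estimate'' is exactly the paper's \cref{cover-rev} applied to the cover $\{\EU\}\cup\{B_{i,j}\}_{i<j}$, and all of the key ingredients (Myerson's bound giving $F(B_i)\le R/\bound$, the restriction inequality $\Rev(F_i|_{B_i})\le\Rev(F_i)/F(B_i)$, two applications of \cref{sum-rev} to decompose $\Rev(F|_{B_i\cap B_j})$, and the derivation of parts (b) and (c) from the master estimate) match the paper's proof. Parts (a), (b), and (c) are all correctly derived given the master estimate.

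There is, however, a real arithmetic gap in your bound for the master estimate. After \cref{sum-rev} you have, for each pair $i<j$, a quantity bounded by (a constant times) $F(B_j)\Rev(F_i|_{B_i}) + F(B_i)\Rev(F_j|_{B_j})\cdot F(B_i)F(B_j)/F(B_j) + F(B_i)F(B_j)\Rev\bigl(\prod_{k\ne i,j}F_k\bigr)$, i.e., three terms whose coefficients each become $\le\varepsilon/(2n(n-1))$. The paper then combines the three terms in one stroke via super-additivity of $\Rev$ over \emph{independent} item sets: $\Rev(F_i)+\Rev(F_j)+\Rev\bigl(\prod_{k\ne i,j}F_k\bigr)\le\Rev(F_i\times F_j\times\prod_{k\ne i,j}F_k)=\Rev(F)$ (run the three optimal auctions independently). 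You instead bound $\Rev\bigl(\prod_{k\ne i,j}F_k\bigr)\le\Rev(F)$ by monotonicity and separately bound the first two terms by $R\le\Rev(F)$, which gives $3\Rev(F)$ rather than $\Rev(F)$. With the constants of the paper (where $\bound\ge 2n(n-1)R/\varepsilon$ has exactly enough slack to absorb the two factor-of-$2$ losses from \cref{sum-rev}) your chain only yields $\sum_{i<j}F(B_{i,j})\Rev(F|_{B_{i,j}})\le 3\varepsilon\Rev(F)$, i.e., $\Rev(F|_{\EU})\ge(1-3\varepsilon)\Rev(F)$, not the stated $(1-\varepsilon)$. The fix is to keep $\Rev(F_i)$ (rather than replacing it early by $R$) in the bound $\Rev(F_i|_{B_i})\le\Rev(F_i)/F(B_i)$ and then invoke super-additivity of $\Rev$ across the independent partition $\{i\},\{j\},[n]\setminus\{i,j\}$.
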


\begin{proof}
For every $i\in[n]$, we denote the probability of $v\sim F_i$ being greater than $\bound$ by $p_i \eqdef F_i\bigl(\boundTail\bigr)$.
We first note that
\begin{equation}\label{no-two-high-p}
p_i\le\frac{\varepsilon}{2\cdot n \cdot (n-1)}
\end{equation} for every $i\in[n]$. Indeed, the revenue from $F_i$ of the mechanism selling (item~$i$) for a take-it-or-leave-it price of $\bound$ is at least $\bound\cdot p_i$, and by definition of $\Rev$ we therefore have $\bound\cdot p_i \le \Rev(F_i) \le R$ and so $p_i \le \nicefrac{R}{\bound} \le \frac{\varepsilon}{2\cdot n \cdot (n-1)}$, as claimed. In particular, since $\varepsilon<1$, we note that this implies that $F_i\bigl([0,\bound]\bigr)=1-p_i>0$ and so, since $\EU\supset[0,\bound]^n$, we obtain that $F(\EU)\ge\prod_{i=1}^n(1-p_i)>0$, proving \cref{no-two-high-well-defined}. (Thus, $F|_{\EU}$ is well defined.)

For the proof of \cref{no-two-high-rev,no-two-high-revm}, we will need the following \lcnamecref{cover-rev} (the second part of this \lcnamecref{cover-rev} is a slightly generalized version of the ``subdomain stitching'' lemma of \cite{BILW}; we give a full proof below for completeness).

\begin{sublemma}\label{cover-rev}
Let $n\in\NN$ and let $F\in\Delta(\RRn)$.
\begin{parts}
\item\label{cover-rev-sub}
$F(B)\cdot \Rev(F|_B) \le F(A) \cdot \Rev(F|_A)$ for every\footnote{If $F(C)=0$ for some $C\subseteq\RRn$, then even though $F|_C$ is not defined, we henceforth define $F(C)\cdot\Rev(F|_C)$ to equal~$0$.} $B\subseteq A \subseteq \RRn$.
\item\label{cover-rev-cover}
For every $m\in\NN$ and every $A_1,A_2,\ldots,A_m\subseteq\RRn$ s.t.\ $\bigcup_{i=1}^m A_i = \RRn$,
we have that $\Rev(F) \le \sum_{i=1}^{m} F(A_i) \cdot \Rev(F|_{A_i})$.
\end{parts}
\end{sublemma}

The proof of \cref{cover-rev} is given after the proof of \cref{no-two-high}.
We now proceed to prove \cref{no-two-high-rev,no-two-high-revm} of \cref{no-two-high}.
We note that for every $i\in[n]$, if $p_i>0$, then
\begin{equation}\label{no-two-high-rev-tail}
\Rev(F_i|_{\boundTail}) \le \nicefrac{\Rev(F_i)}{p_i}.
\end{equation}
Indeed, by \crefpart{cover-rev}{sub}, we have that $p_i \cdot \Rev(F_i|_{\boundTail}) \le \Rev(F_i)$, and so $\Rev(F_i|_{\boundTail}) \le \nicefrac{\Rev(F_i)}{p_i}$, as required.

For every $1\le i<j\le n$, we let
$B_{i,j}\eqdef\bigl\{(v_1,v_2,\ldots,v_n)\in\RRn~\big|~v_i>\bound\And v_j>\bound\bigr\}$ (the ``double-tail'' w.r.t.\ $i$ and $j$) and $p_{i,j}\eqdef F(B_{i,j})=p_i\cdot p_j$.
We claim that
\begin{equation}\label{no-two-high-bij}
p_{i,j}\cdot\Rev(F|_{B_{i,j}}) \le \frac{\varepsilon}{\binom{n}{2}}\cdot\Rev(F),
\end{equation}
for every $1\le i<j\le n$. Since the claim trivially holds when $p_{i,j}=0$, we need only prove it when $p_i>0$ and $p_j>0$. In this case, by \cref{sum-rev} (applied twice) and by \cref{no-two-high-rev-tail,no-two-high-p}, we have
\begin{align*}
p_{i,j}\cdot\Rev(F|_{B_{i,j}}) & =
p_{i,j}\cdot\Rev\Bigl(F_i|_{\boundTail} \times F_j|_{\boundTail} \times \smashoperator[r]{\bigtimes_{k\in[n]\setminus\{i,j\}}}F_k\Bigr) \le \\
& \le 4\cdot p_{i,j}\cdot\Bigl(\Rev\bigl(F_i|_{\boundTail}\bigr)+\Rev\bigl(F_j|_{\boundTail}\bigr)+\Rev\bigl(\smashoperator[r]{\bigtimes_{k\in[n]\setminus\{i,j\}}}F_k\bigr)\Bigr) \le \\
& \le 4\cdot p_{i,j}\cdot\Bigl(\tfrac{\Rev(F_i)}{p_i}+\tfrac{\Rev(F_j)}{p_j}+\Rev\bigl(\smashoperator[r]{\bigtimes_{k\in[n]\setminus\{i,j\}}}F_k\bigr)\Bigr)= \\
& = 4\cdot\Bigl(p_j\cdot\Rev\bigl(F_i\bigr)+p_i\cdot\Rev\bigl(F_j\bigr)+p_i\cdot p_j\cdot\Rev\bigl(\smashoperator[r]{\bigtimes_{k\in[n]\setminus\{i,j\}}}F_k\bigr)\Bigr) \le \\
& \le 4\cdot \frac{\varepsilon}{2 \cdot n \cdot (n-1)}\Bigl(\Rev\bigl(F_i\bigr)+\Rev\bigl(F_j\bigr)+\Rev\bigl(\smashoperator[r]{\bigtimes_{k\in[n]\setminus\{i,j\}}}F_k\bigr)\Bigr) \le \\
& \le 2\cdot\frac{\varepsilon}{n\cdot (n-1)} \cdot \Rev(F) = \frac{\varepsilon}{\binom{n}{2}}\cdot\Rev(F),
\end{align*}
as claimed.

We define $p_{\mathcal{EU}}\eqdef F(\EU)$.
As $\RRn\setminus\EU=\bigcup_{1\le i<j\le n} B_{i,j}$, by \crefpart{cover-rev}{cover} we have that $\Rev(F) \le p_\mathcal{EU}\cdot\Rev(F|_{\EU}) + \sum_{1\le i<j\le n} p_{i,j}\cdot\Rev(F|_{B_{i,j}})$. Therefore, by \cref{no-two-high-bij}, we have that
\begin{multline}\label{drop-b}
p_\mathcal{EU}\cdot\Rev(F|_{\EU}) \ge \Rev(F) - \smashoperator{\sum_{1\le i<j\le n}} p_{i,j}\cdot\Rev(F|_{B_{i,j}}) \ge \\
\ge \Rev(F) - \binom{n}{2}\cdot\frac{\varepsilon}{\binom{n}{2}}\cdot\Rev(F) = (1-\varepsilon)\cdot\Rev(F).
\end{multline}
In particular, $\Rev(F|_{\EU})\ge p_\mathcal{EU}\cdot\Rev(F|_{\EU}) \ge (1-\varepsilon)\cdot\Rev(F)$, proving \cref{no-two-high-rev}.

Let $a\in(0,1]$ and let $\mech$ be an IC and IR $n$-item mechanism with $\Rev_{\mech}(F|_{\EU})\ge a\cdot\Rev(F|_{\EU})$. By definition of $\Rev$ and by \cref{drop-b}, we have that
\begin{multline*}
\Rev_{\mech}(F) = p_\mathcal{EU} \cdot \Rev_{\mech}(F|_{\EU}) + (1-p_\mathcal{EU}) \cdot \Rev_{\mech}(F|_{\RRn\setminus\EU}) \ge \\
\ge p_\mathcal{EU} \cdot \Rev_{\mech}(F|_{\EU}) \ge
p_\mathcal{EU} \cdot a\cdot\Rev(F|_{\EU}) \ge
a\cdot(1-\varepsilon)\cdot\Rev(F),
\end{multline*}
proving \cref{no-two-high-revm}.
\end{proof}

\begin{proof}[Proof of \cref{cover-rev}]
For \cref{cover-rev-sub}, if $F(B)=0$ then there is nothing to prove, so we assume henceforth that $F(B)>0$. Therefore, also $F(A)\ge F(B)>0$ and thus $F|_A$ and $F|_A|_B=F|_B$ are well defined.
We begin by noting that $\Rev(F|_A) \ge F|_A(B) \cdot \Rev(F|_B)$. Indeed, this inequality holds since for any mechanism $\mech$ (in particular, any mechanism obtaining close to optimal revenue from~$F|_B$), we have $\Rev_{\mech}(F|_A)\ge F|_A(B)\cdot\Rev_{\mech}(F|_B)$; by definition of $\Rev$, the inequality follows. Therefore, we have that $F(A) \cdot \Rev(F|_A) \ge F(A) \cdot F|_A(B) \cdot \Rev(F|_B) = F(B) \cdot \Rev(F|_B)$, as required.

For \cref{cover-rev-cover}, we start by defining $B_i \eqdef A_i \setminus \bigcup_{j=1}^{i-1} A_j$. By definition, $(B_i)_{i=1}^m$ is a partition of~$\RRn$. We first claim that $\Rev(F) \le \sum_{i=1}^{m} F(B_i) \cdot \Rev(F|_{B_i})$. This is the ``subdomain stitching'' lemma of \cite{BILW}; for completeness, we will repeat the one-sentence proof:
for any mechanism $\mech$ (in particular, any mechanism obtaining close to optimal revenue from $F$), we have that $\Rev_{\mech}(F)=\sum_{i=1}^m F(B_i)\cdot\Rev_{\mech}(F|_{B_i})$;\footnote{Similarly, if $F(B_i)=0$ for some $i\in[n]$, then we define $F(B_i)\cdot\Rev_{\mech}(F|_{B_i})$ to equal~$0$.} by definition of $\Rev$, the inequality follows. Now, by \cref{cover-rev-sub}, we have that ${F(B_i)\cdot\Rev(F|_{B_i})} \le {F(A_i)\cdot\Rev(F|_{A_i})}$ for every $i\in[m]$. Combining both of these, we obtain that
$\Rev(F) \le \sum_{i=1}^{m} {F(B_i) \cdot \Rev(F|_{B_i})} \le \sum_{i=1}^{m} F(A_i) \cdot \Rev(F|_{A_i})$,
as required.
\end{proof}

\subsection{Exclusivity at Expensive Menu Entries}\label{expensive-exc-sec}

Having proven \cref{no-two-high}, we phrase and prove the next steps for arbitrary exclusively unbounded distributions, i.e., not necessarily product distributions conditioned upon~$\EU$.
As outlined above, our second step toward proving \cref{cne-upper-bound}, which we take in this \lcnamecref{expensive-exc-sec}, shows that in any mechanism over some exclusively unbounded distribution, the ``expensive'' part of the menu, i.e., the part of the menu consisting of all menu entries that cost more than some $\thresh=\poly(n,\nicefrac{1}{\varepsilon})$, can be simplified without significant loss in revenue to make the mechanism \emph{$\thresh$-exclusive}, i.e., to make each expensive menu entry only allocate a single item with non-zero probability. This subsection is dedicated to the statement and proof of \cref{expensive-exc}, which formalizes this step.

\begin{definition}[Exclusive Mechanism]
Let $n\in\mathbb{N}$ and let $\thresh\in\RR$. We say that an $n$-item mechanism is \emph{$\thresh$-exclusive} if it allocates (with positive probability) at most one item whenever it charges strictly more than $\thresh$.\footnote{While this also implies that the allocated item is quite expensive, and therefore some may say ``exclusive,'' the exclusivity discussed in the definition is that of solely this specific item being sold.}
\end{definition}

\begin{lemma}\label{expensive-exc}
Let $n\in\NN$ s.t.\ $n\ge2$, let $\bound\in\RR$, let $\varepsilon\in(0,1)$, and set $\thresh\eqdef\frac{4\cdot (n-1)\cdot\bound}{\varepsilon^2}$. For every $F\in\Delta(\EU)$ and for every IC and IR $n$-item mechanism $\mech$, there exists an $\thresh$-exclusive IC and IR $n$-item mechanism $\mech'$ such that $\Rev_{\mech'}(F) \ge (1-\varepsilon)\cdot\Rev_{\mech}(F)$.
\end{lemma}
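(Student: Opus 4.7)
The plan is to construct $\mech'$ explicitly. Partition $\mech$ into a \emph{cheap} part $\mech_{\mathrm{cheap}}\eqdef\{(\vec{x};p)\in\mech:p\le \thresh\}$ and an \emph{expensive} part $\mech_{\mathrm{exp}}\eqdef\mech\setminus\mech_{\mathrm{cheap}}$. Keep every entry of $\mech_{\mathrm{cheap}}$ intact in $\mech'$ and, for each expensive $e=(\vec{x};p)\in\mech_{\mathrm{exp}}$ and each $i\in[n]$, add to $\mech'$ the exclusive outcome $e^{(i)}$ that allocates item $i$ with probability $x_i$ (and every other item with probability $0$) at price $p-(n-1)\bound$. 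Since the only multi-item entries of $\mech'$ come from $\mech_{\mathrm{cheap}}$ and all have price at most $\thresh$, the auction $\mech'$ is $\thresh$-exclusive by construction.

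The engine of the revenue comparison is the exclusive-unboundedness of $F$: for every $v\in\supp(F)$ there exists an index $i^*(v)\in[n]$ (namely, the unique coordinate exceeding $\bound$, if any, or an arbitrary fixed index otherwise) such that $\sum_{j\ne i^*(v)}x_jv_j\le(n-1)\bound$ for every $\vec{x}\in[0,1]^n$. This immediately yields
\[u_{e^{(i^*(v))}}(v)-u_e(v)=(n-1)\bound-\sum_{j\ne i^*(v)}x_jv_j\ge 0\]
for every expensive $e=(\vec{x};p)$ and every $v\in\supp(F)$. In other words, the ``right'' exclusive replacement weakly dominates the original expensive entry from $v$'s viewpoint; hence a type that picked an expensive entry $e$ in $\mech$ will, in $\mech'$, select an entry of utility at least $u_e(v)$, while a type that picked a cheap entry $e_c$ in $\mech$ still finds $e_c$ available in $\mech'$.

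The main estimate to establish is $\Rev_{\mech'}(F)\ge\Rev_\mech(F)-(n-1)\bound\cdot F(E)$, where $E\subseteq\supp(F)$ is the set of types that chose an expensive entry in $\mech$. I would argue type by type: for $v\notin E$, availability of $v$'s original cheap choice in $\mech'$ together with the tie-breaking rule (favoring higher prices) implies that $v$'s payment does not strictly decrease; for $v\in E$, I would perform a case analysis depending on whether $v$'s new choice in $\mech'$ is a cheap entry, the ``correct'' exclusive replacement $e^{(i^*(v))}$, or a different exclusive replacement $e'^{(i')}$ derived from a distinct original expensive $e'$. Combining the IC constraints in both $\mech$ and $\mech'$ with the bound $\sum_{j\ne i^*(v)}x_jv_j\le(n-1)\bound$ and the specific discount of $(n-1)\bound$ bounds the payment drop per $v\in E$ by $(n-1)\bound$. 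Handling this last \emph{cross-migration} subcase, in which $v$ hops to an exclusive version of a different expensive entry, is the technical heart of the proof and is the step I expect to be the main obstacle.

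Finally, to convert this additive bound into the multiplicative bound we want, observe that each $v\in E$ pays strictly more than $\thresh$ in $\mech$, so $\Rev_\mech(F)\ge\thresh\cdot F(E)$ and hence $F(E)\le\Rev_\mech(F)/\thresh$. Using $\thresh=\tfrac{4(n-1)\bound}{\varepsilon^2}$,
\[\Rev_\mech(F)-\Rev_{\mech'}(F)\le\frac{(n-1)\bound}{\thresh}\cdot\Rev_\mech(F)=\frac{\varepsilon^2}{4}\cdot\Rev_\mech(F)\le\varepsilon\cdot\Rev_\mech(F),\]
since $\varepsilon<1$. This yields $\Rev_{\mech'}(F)\ge(1-\varepsilon)\cdot\Rev_\mech(F)$, as required.
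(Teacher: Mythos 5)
Your construction shares the paper's skeleton: keep cheap entries unchanged and, for each expensive entry $e=(\vec{x};p)$, add $n$ exclusive single-item replacements at a discounted price. But the choice of discount is not a detail. You use an \emph{additive} discount of $(n-1)\bound$, whereas the paper uses a \emph{multiplicative} discount $(1-\epst)p$ with $\epst=\nicefrac{\varepsilon}{2}$; since $p>\thresh$ this is a larger cut (exceeding $\nicefrac{2(n-1)\bound}{\varepsilon}$) and, crucially, one that scales with $p$. Your claimed per-type estimate --- that the payment of each $v\in E$ drops by at most $(n-1)\bound$ --- is false for your construction, and it fails precisely at the cross-migration step you flag as ``the main obstacle'' but do not actually resolve.

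Concretely: take $n=2$, $\bound=1$, $\varepsilon=\nicefrac12$ (so $\thresh=16$), $v=(167,1)\in\EU$, and $\mech=\bigl\{(\vec 0;0),\ e=(1,1;100),\ f=(\nicefrac12,0;16.001)\bigr\}$ --- both $e$ and $f$ are expensive. In $\mech$, $u_e(v)=68>67.499=u_f(v)$, so $v$ pays $100$. In your $\mech'$ the replacements are $e^{(1)}=(1,0;99)$ and $f^{(1)}=(\nicefrac12,0;15.001)$, and $u_{f^{(1)}}(v)=68.499>68=u_{e^{(1)}}(v)$, so $v$ migrates and pays $15.001$ --- a drop of about $85\%$, far beyond both $(n-1)\bound=1$ and $\varepsilon\cdot p=50$; with $F$ a point mass at $v$, the Lemma's conclusion fails outright. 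The mechanism: $v$'s preference of $e$ over $f$ in $\mech$ was propped up by up to $(n-1)\bound$ worth of ``small'' coordinates, and zeroing those out in both $e^{(i^*)}$ and $f^{(i^*)}$ removes that slack while a \emph{fixed} additive discount leaves the price gap $p-q$ unchanged, so $v$ can flip to $f^{(i^*)}$ however much cheaper $f$ was --- the drop $(p-q)+(n-1)\bound$ is unbounded relative to $p$. The multiplicative discount eliminates this, because $e$'s price is cut by $\epst p$ but $f$'s only by $\epst q<\epst p$, so the relative advantage of $e^{(i^*)}$ over $f^{(i^*)}$ \emph{improves} by $\epst(p-q)$: this absorbs the $(n-1)\bound$ slack whenever $p-q>\nicefrac{(n-1)\bound}{\epst}$, and when $p-q\le\nicefrac{(n-1)\bound}{\epst}$ the drop is at most $\epst p + \nicefrac{(n-1)\bound}{\epst}<2\epst p=\varepsilon p$ since $p>\thresh$. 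So the discount size is not merely a tuning knob --- it is exactly where the argument's weight rests, and at that point you asserted the bound rather than derived it.
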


\begin{proof}
Set $\epst\eqdef\nicefrac{\varepsilon}{2}$. We construct a new IC and IR mechanism $\mech'$ as follows:
\begin{itemize}
\item 
For every menu entry $e=(\vec{x};p)\in\mech$ with $p\le\thresh$, we add the menu entry $e$, unmodified, to~$\mech'$.
\item
For every menu entry $e=(\vec{x};p)\in\mech$ with $p>\thresh$, we add the following\footnote{Following standard notation, we use $(y',\vec{y}_{-i};q)$, for $\vec{y}\in[0,1]^n$, $i\in[n]$, $y'\in[0,1]$, and $q\in\RR$, to denote the outcome $(y_1,\ldots,y_{i-1},y',y_{i+1},\ldots,y_n;q)$, i.e., an outcome that is identical to $(\vec{y};q)$ in price and all winning probabilities, except the winning probability of item $i$, which is set to $y'$.} $n$ menu entries to~$\mech'$:
${\bigl(x_1,\vec{0}_{-1};(1-\epst)\cdot p\bigr)}$, ${\bigl(x_2,\vec{0}_{-2};(1-\epst)\cdot p\bigr)}$,$\cdots$,${\bigl(x_n,\vec{0}_{-n};(1-\epst)\cdot p\bigr)}$.
Each of these menu entries is a modified version of $e$ that completely ``unallocates'' all but one of the items, while giving a slight multiplicative price discount of $(1-\epst)$.
\end{itemize}
Finally, we define $\mech'$ to be the (topological) closure of the set of menu entries added above to $\mech'$.\footnote{Taking the closure ensures that a utility-maximizing entry with maximal price exists for every buyer type. See \cref{closure} for more details.}

By definition, $\mech'$ is $\thresh$-exclusive. We note that since $\mech$ is IR, it contains the menu entry $(\vec{0};0)$. Therefore, $\mech'$ also contains this menu entry, and hence $\mech'$ is IR as well. It remains to show that $\mech'$ obtains a revenue of at least $(1-\varepsilon)\cdot\Rev_{\mech}(F)$ from $F$.
Let us compare the payments that $\mech'$ and $\mech$ extract from a buyer of each type $v=(v_1,v_2,\ldots,v_n)\in\EU$. We reason by cases according to the menu entry of choice\footnote{If more than one utility-maximizing menu entry with maximal price exists, then here and whenever we henceforth refer to the ``menu entry of choice'' of some buyer type, we choose one such entry arbitrarily.} of buyer type~$v$ from $\mech$, which we denote by $e=(\vec{x};p)$, and show that in either case, the payment extracted from a buyer of type~$v$ decreases by at most a multiplicative factor of $(1-2\epst)$ in $\mech'$ compared to $\mech$.
\begin{itemize}
\item
If $p\le\thresh$, then by definition $e\in\mech'$. We claim that $v$ weakly prefers $e$ to all menu entries $f'=(\vec{y}';q')\in\mech'$ with\footnote{\label{expensive-exc-implies-closure}As we show weak preference and as $q'$ is defined via a strict inequality, by continuity of the utility function the correctness of the claim for all $f'$ before taking the closure of $\mech'$ implies its correctness for all $f'$ in the closure as well.} $q'<(1-\epst)\cdot p$. Indeed, for every such menu entry $f'$, we have that $q'<(1-\epst)\cdot\thresh$, and so by definition we have that $f'\in\mech$, and so by definition of $e$ we have that $v$ weakly prefers $e$ to $f'$. Therefore, the price of the menu entry chosen by $v$ from $\mech'$ is at least $(1-\epst)\cdot p$, and so the payment extracted from a buyer of type $v$ decreases by at most a multiplicative factor of $(1-\epst)$ in~$\mech'$ compared to $\mech$.
\item
Otherwise, i.e., if $p>\thresh$, then since $\thresh>n\cdot\bound$ (since $n\ge2$ and $\varepsilon<1$), by IR there must exist $i\in[n]$ s.t.\ $v_i>\bound$. Since $v\in \EU$, we have that $v_k\le\bound$ for every $k\in[n]\setminus \{i\}$.
Let $e'=(\vec{x}';p')\eqdef\bigl(x_i,\vec{0}_{-i};(1-\epst)\cdot p\bigr)$ be the menu entry in~$\mech'$ corresponding to $e$ that unallocates all items except item~$i$.
We claim that $v$ weakly prefers~$e'$ to all menu entries $f'=(\vec{y}';q')\in\mech'$ with\footnote{See \cref*{expensive-exc-implies-closure}.} $q'<p'-\frac{(1-\epst)\cdot(n-1)\cdot\bound}{\epst}$. Let~$f'$ be such a menu entry and denote the menu entry corresponding to $f'$ in~$\mech$ by $f=(\vec{y};q)$ (where either $f'=f$ or $f'=(y_j,\vec{0}_{-j},(1-\epst)\cdot q)$ for some $j\in[n]$).
Recall that $u_e(v)\ge  u_f(v)$ by definition of $e$.
Noting that $q \le \frac{q'}{1-\epst} < p - \frac{(n-1)\cdot\bound}{\epst}$, we indeed get
\begin{multline*}
u_{e'}(v) = x'_i\cdot v_i - p' =
x_i \cdot v_i - p + \epst\cdot p \ge
-(n-1)\cdot\bound + \sum_{k=1}^n x_k\cdot v_k - p + \epst\cdot p = \\
= -(n-1)\cdot\bound + u_e(v) + \epst\cdot p \ge
-(n-1)\cdot\bound + u_f(v) + \epst\cdot p = \\
= -(n-1)\cdot\bound + \sum_{k=1}^n y_k\cdot v_k - q + \epst\cdot p >
-(n-1)\cdot\bound + \sum_{k=1}^n y_k\cdot v_k - q + \epst\cdot \left(q + \tfrac{(n-1)\cdot\bound}{\epst}\right) = \\
= -(n-1)\cdot\bound + \sum_{k=1}^n y_k\cdot v_k - q \cdot (1-\epst) + (n-1)\cdot\bound \ge
\sum_{k=1}^n y_k\cdot v_k - q' \ge \\
\ge \sum_{k=1}^n y'_k\cdot v_k - q' =
u_{f'}(v).
\end{multline*}
Therefore, the payment that $\mech'$ extracts from a buyer of type~$v$ is at least
\begin{multline*}
p'-\tfrac{(1-\epst)\cdot (n-1)\cdot\bound}{\epst} = p' - (1-\epst)\cdot\epst\cdot\thresh > p' - (1-\epst)\cdot\epst\cdot p \ge \\
\ge p' - \epst\cdot p' = (1-\epst) \cdot p' = (1-\epst)^2 \cdot p > (1-2\epst) \cdot p,
\end{multline*}
and so the payment extracted from a buyer of type~$v$ decreases by at most a multiplicative factor of $(1-2\epst)$ in $\mech'$ compared to $\mech$.
\end{itemize}

To summarize, the revenue from each buyer type $v\in\EU$ decreases by at most a multiplicative factor of $(1-2\epst)=(1-\varepsilon)$ in $\mech'$ compared to $\mech$, and so the (overall) revenue that $\mech'$ obtains from $\EU$ is at least a $(1-\varepsilon)$ fraction of the revenue that $\mech$ obtains from $\EU$, as required.
\end{proof}

\subsection{Trimming the Expensive Part of the Menu}\label{expensive-trim-sec}

As outlined above, our third step toward proving \cref{cne-upper-bound}, which we take in this \lcnamecref{expensive-trim-sec}, shows that in any exclusive mechanism over some exclusively unbounded distribution, the expensive part of the menu can be simplified without significant loss in revenue, so that it contains at most $2n$ menu entries. This subsection is dedicated to the statement and proof of \cref{expensive-trim}, which formalizes this step, which is the most technically elaborate of all steps of our proof of \cref{cne-upper-bound}.

\begin{lemma}\label{expensive-trim}
Let $n\in\NN$ s.t.\ $n\ge2$, let $\bound\in\RR$, let $\varepsilon\in(0,1)$, and let $\thresh\ge\Max\bigl\{n\cdot\bound+2, \frac{(n-1)\cdot\bound+1}{\varepsilon}\bigr\}$. For every $F\in\Delta(\EU)$ and for every $\thresh$-exclusive IC and IR $n$-item mechanism $\mech$, there exists an $\thresh$-exclusive IC and IR $n$-item mechanism $\mech'$, such that both of the following hold.
\begin{itemize}
\item
$\Rev_{\mech'}(F) \ge (1-\varepsilon)\cdot\Rev_{\mech}(F)$.
\item
(The set of menu entries in) $\mech'$ coincides with a subset of the set of menu entries in $\mech$ that cost at most $\thresh$, with the addition of at most $2n$ menu entries.
\end{itemize}
\end{lemma}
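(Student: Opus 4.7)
The plan is to decompose the expensive part of the menu, $\mech_{\mathrm{exp}} \eqdef \{(\vec{x};p) \in \mech : p > \thresh\}$, into $n$ sub-menus and handle each one separately. By $\thresh$-exclusivity, every entry in $\mech_{\mathrm{exp}}$ allocates at most one item with positive probability (and IR together with the presence of $(\vec{0};0)$ in $\mech$ rules out any expensive entry that allocates no item), so for each $i \in [n]$ the sub-menu $\mech_i \eqdef \{(x, \vec{0}_{-i}; p) \in \mech : x>0,\ p>\thresh\}$ is well defined and these $n$ sub-menus partition $\mech_{\mathrm{exp}}$. Since $\thresh \ge n\bound + 2 > \bound$, IR forces any buyer type $v \in \supp(F) \subseteq \EU$ that picks an entry $(x, \vec{0}_{-i}; p) \in \mech_i$ to satisfy $x v_i \ge p > \thresh > \bound$ and hence $v_i > \bound$; exclusive unboundedness then gives $v_j \le \bound$ for every $j \ne i$, and consequently any entry in $\mech_j$ with $j \ne i$ yields utility at most $\bound - \thresh < 0$ for $v$. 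In particular, the sets of buyer types that select from distinct $\mech_i$'s in $\mech$ are pairwise disjoint, and the only alternatives a buyer originally in $\mech_i$ seriously considers are other entries within $\mech_i$ and entries of the ``cheap'' sub-menu $\mech^c \eqdef \mech \setminus \mech_{\mathrm{exp}}$. My plan is to retain all of $\mech^c$ unmodified and to replace each $\mech_i$ by at most two new entries.

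For each item $i$, I would view $\mech_i$ as a single-item menu for item $i$: every buyer type that might select from it has a value $v_i$ together with an ``outside option'' of utility $u^c(v) \eqdef \max_{e \in \mech^c} u_e(v)$ available through $\mech^c$. I would replace $\mech_i$ with a deterministic entry $(1, \vec{0}_{-i}; q_i)$ at a carefully chosen Myerson-type price $q_i > \thresh$, together with a randomized entry $(\alpha_i, \vec{0}_{-i}; r_i)$ with $\alpha_i \in (0,1)$. The deterministic entry plays the role of Myerson's single nonzero menu entry: it aims to capture most of the revenue that $\mech_i$ originally extracted from buyer types with sufficiently high $v_i$ relative to $u^c(v)$, via Myerson-style virtual-value analysis applied to the relevant conditional distribution of $v_i$. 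The randomized entry plays the role of Myerson's ``opt-out'' $(0;0)$ in this single-item analogue: its parameters are tuned so that any buyer type that was in $\mech_i$ in the original auction but cannot afford $(1, \vec{0}_{-i}; q_i)$ still weakly prefers $(\alpha_i, \vec{0}_{-i}; r_i)$ to every entry in $\mech^c$, thereby preventing such buyers from ``jumping'' down into the cheap part of the menu, where we would have no revenue guarantee.

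The main obstacle, and the reason this is the most technically elaborate step, is precisely that the outside option $u^c(v)$ depends on the full type $v$ rather than on $v_i$ alone, so Myerson's single-item theorem cannot be invoked off the shelf and the usual ``$(0;0)$ opt-out'' has to be replaced by a nontrivial randomized entry whose utility is calibrated to dominate $u^c(v)$ across all relevant types. The argument has to pin down $q_i$, $\alpha_i$, and $r_i$ so that the revenue captured by the new pair of entries matches, up to a multiplicative $(1-\varepsilon)$ factor, the revenue that $\mech_i$ extracted from the buyer types that originally chose it, while also ensuring both that not too many such types defect to $\mech^c$ (where each defection can cost us all of that buyer's original high payment) and that buyer types that previously chose $\mech^c$ do not defect \emph{upward} into $(\alpha_i, \vec{0}_{-i}; r_i)$ at a price lower than what they originally paid. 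Once suitable parameters are fixed on a per-item basis, the resulting mechanism $\mech'$ is by construction IC, IR (it contains $(\vec{0};0)$), $\thresh$-exclusive (each new entry allocates only item $i$ with positive probability), and composed of a subset of the cheap entries of $\mech$ together with at most $2n$ additional entries; summing the per-item revenue guarantees over $i \in [n]$ using the disjointness of the relevant buyer-type subsets then yields $\Rev_{\mech'}(F) \ge (1-\varepsilon)\cdot \Rev_{\mech}(F)$, as required.
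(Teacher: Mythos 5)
Your high-level plan matches the paper's: decompose the expensive part into the per-item sub-menus $\mech_i$, observe that the buyer types selecting from distinct $\mech_i$'s are disjoint, and replace each $\mech_i$ with a deterministic Myerson-style entry together with a randomized ``opt-out'' entry whose job is to prevent buyers from defecting down into the cheap part. But the proposal stops exactly where the work begins: you acknowledge that Myerson cannot be invoked off the shelf because the outside option $u^c(v)=\max_{e\in\mech^c}u_e(v)$ depends on the full type $v$ rather than on $v_i$ alone, and then leave the calibration of $q_i$, $\alpha_i$, $r_i$ as an unresolved task. That calibration is precisely the hard content of the lemma, and you have not exhibited a construction that achieves it.

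The paper resolves the obstruction with ideas that are absent from your proposal. First, it defines $s_i=\inf\{x_i : (\vec{x};p)\in\mech_i\}$ and $b_i=\inf\{p:(\vec{x};p)\in\mech_i\}$, and shows (using the assumption that every menu entry is chosen by some type) that $z_i=(s_i,\vec{0}_{-i};b_i)$ is a well-behaved limit point of $\mech_i$. Second, it transforms $\mech_i$ into a genuinely \emph{one-dimensional} ``continuation auction'' $\mech_i^1$ over the remaining probability $1-s_i$, with the single-dimensional type $\alpha(v)=v_i(1-s_i)$ and the induced distribution $F_i^1$; in this auction the outside option is exactly $(0;0)=\beta(z_i)$, so Myerson's theorem applies off the shelf to produce a two-entry optimum. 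Third, the new concrete entries are not ``tuned'' to dominate $u^c(v)$ pointwise; rather, both new entries are the pullbacks of Myerson's two entries, uniformly discounted additively by $(n-1)\bound+1$. This discount suffices because it is shown, via a revealed-preference argument going through the sequence $z_i^m\to z_i$ (Sublemma no-drop), that every $v\in W_i$ already weakly preferred $z_i$ to every cheap entry; the discount then absorbs the $(n-1)\bound$ of value the buyer loses on the other coordinates. Your worry about cheap buyers defecting upward is also handled differently: because every new entry still costs at least $\thresh-(n-1)\bound-1\ge(1-\varepsilon)\thresh$, a cheap buyer who switches up can only lose a $(1-\varepsilon)$ factor of her original payment. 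Without $s_i$, $b_i$, $z_i$, the $\beta$-transformation, and the explicit $(n-1)\bound+1$ discount, there is no proof here --- only a correct statement of where the difficulty lies.
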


\begin{proof}
Throughout the proof, we assume w.l.o.g.\ that each menu entry in $\mech$ (with the possible exception of $(\vec{0};0)$) is chosen by at least one buyer type $v\in\EU$.\footnote{While $\mech$ is not necessarily closed after the removal of all menu entries that are not chosen by any buyer type, it does possess a utility-maximizing entry with maximal price for every buyer type. See \cref{closure} for a discussion.}
For every $i\in[n]$, we define \[\mech_i \eqdef \bigl\{(\vec{x};p)\in\mech ~\big|~ p>\thresh \And x_i>0\bigr\},\]
and
\[W_i\eqdef\{v\in\EU\mid\mbox{$v$'s menu entry of choice from $\mech$ is in $\mech_i$}\}.\]

As $\mech$ is $\thresh$-exclusive, we have that $x_j=0$ for every $j\in[n]\setminus\{i\}$ and $(\vec{x};p)\in \mech_i$.
Thus, $W_i$ is the set of buyer types that choose to pay more than $\thresh$, and receive in return a positive probability for winning item $i$ and zero probability for winning any other item.
Our goal is to apply the single-dimensional analysis of \cite{Myerson81} in order to replace the plethora of menu entries in each $\mech_i$ with a small constant number of menu entries.
\textbf{Until noted otherwise, fix $\mathbfit{i\in[n]}$ s.t.\ $\mathbfit{\mech_i\ne\emptyset}$.}

We define $b_i\eqdef\inf\bigl\{p~\big|~(\vec{x};p)\in \mech_i\bigr\}$ and $s_i\eqdef\inf\bigl\{x_i~\bigl|~(\vec{x};p)\in \mech_i\bigr\}$, and set $z_i\eqdef(s_i,\vec{0}_{-i};b_i)$. One may intuitively think of $z_i$ (charging $b_i$ for an $s_i$ probability of winning item $i$) as the cheapest entry, which also allocates the least probability, in $\mech_i$, although formally (since $\mech$ is not closed, and also due to the way in which $\mech_i$ is defined) $z_i$ need not necessarily be in $\mech_i$.\footnote{Nonetheless, due to the assumption that every menu entry is chosen by some buyer type, we are able to show that $z_i$ is in the closure of $\mech_i$; see \crefpart{mechi}{sequence} below.  This property is heavily used throughout our proof.} Our strategy is to show that, in a precise sense, $\mech_i$ behaves on $W_i$ as follows: first allocate the buyer $z_i$ (i.e., provide a ``starting winning probability'' of $s_i$ for item $i$, and charge a ``base price'' of $b_i$), and then hold a ``continuation mechanism'' for possibly allocating some or all of the remaining $(1-s_i)$ probability of winning item $i$. In this ``continuation mechanism,'' we allow the buyer to swap $z_i$ for a different entry from $\mech_i$, paying the difference in costs and increasing the probability of getting item $i$ accordingly. We now make this statement precise. \textbf{Until noted otherwise, fix $\mathbfit{i\in[n]}$ s.t.\ $\mathbfit{F(W_i)>0}$ (and so, by definition, $\mathbfit{\mech_i\ne\emptyset}$) and $\mathbfit{s_i<1}$.} As we will see below, we require that $s_i<1$ in order for this ``continuation mechanism'' to be well defined (and in fact make sense; otherwise there is no ``remaining probability'' to sell) and that $F(W_i)>0$ for the valuation distribution of that mechanism to be well defined. As we will show, if $s_i=1$ then $|\mech_i|=\bigl|\{z_i\}\bigr|=1$ (see \crefpart{mechi}{single} below), and so there will be no need to reduce the number of menu entries in $\mech_i$ in this case (and if $F(W_i)=0$, then the revenue of the original mechanism $\mech$ from $W_i$ is zero, and so we will be able to simply delete $\mech_i$ from the original mechanism, without replacing it with anything).

For every $n$-dimensional buyer type $v\in W_i$, we define the single-dimensional valuation for the ``remaining probability'' of winning item $i$ by \[\alpha(v)\eqdef v_i\cdot (1-s_i).\]
We define the corresponding single-dimensional buyer type space of our ``continuation mechanism'' as \[W_i^1 \eqdef\bigl\{\alpha(v) ~\big|~ v\in W_i\}\subseteq\RR,\] and define a distribution  $F^1_i\in\Delta(W_i^1)\subseteq\Delta(\RR)$ over it by \[F^1_i(V) \eqdef F|_{W_i}\bigl(\alpha^{-1}(V)\bigr)\] for every measurable set $V\subseteq W_i^1$. (Recall that $F(W_i)>0$; therefore, $F|_{W_i}$ is well defined.) Very roughly speaking, $F^1_i$ is defined such that its density at every $v^1\in W_i^1$ can be informally thought of as the sum of the densities of $F|_{W_i}$ at all $v\in W_i$ s.t.\ $\alpha(v)=v^1$.

Having defined the single-dimensional buyer type space of the ``continuation mechanism,'' we now turn to defining the (menu of the) mechanism itself. For an outcome (not necessarily a menu entry in~$\mech$) of the form $e=(x_i,\vec{0}_{-i};p)$, we define \[\beta(e)=\beta(x_i,\vec{0}_{-i};p)\eqdef\bigl(\tfrac{x_i-s_i}{1-s_i};p-b_i\bigr),\] i.e., an entry selling a fraction of the remaining probability $(1-s_i)$ so that (in addition to the starting winning probability $s_i$) the overall winning probability is $x_i$, in exchange for ``upping'' the price (from the base price $b_i$) to $p$. (Recall that $s_i<1$; therefore, $\beta$ is well defined.) We define the ``continuation mechanism'' by\footnote{We emphasize that we do not take the closure (see \cref{closure} for a discussion) of $\mech_i^1$, as there is no ``need'' to do so. Indeed, as shown by \crefpart{alphabeta}{choice} below, $\mech_i^1$ already possesses a utility-maximizing entry with maximal price for each buyer type.} \[\mech_i^1 \eqdef \bigl\{\beta(e) ~\big|~ e\in \mech_i \cup \{z_i\} \bigr\}.\] We note that $\mech_i^1 \in [0,1]\times\RR$ by definition of $s_i$ and $b_i$, and furthermore that $(0;0)=\beta(z_i)\in\mech_i^1$; therefore, $\mech_i^1$ is an IC and IR $1$-item mechanism. We now make precise the above claim regarding $\mech_i$ acting as if the buyer were allocated $z_i$ and then bid in the single-dimension ``continuation mechanism''~$\mech_i^1$, thus relating the revenue of the single-dimensional ``continuation mechanism'' $\mech_i^1$ with that of the original mechanism $\mech$ from $W_i$.

\begin{sublemma}[Relation between $\mech_i^1$ and $\mech$]\label{alphabeta}
\leavevmode
\begin{parts}
\item\label{alphabeta-choice}
For every $v\in W_i$, we have that
$\beta(e)$ is the menu entry of choice of $\alpha(v)$ from $\mech_i^1$, where $e\in\mech_i$ is the menu entry of choice of $v$ from $\mech$.
\item\label{alphabeta-revenue}
$\Rev_{\mech_i^1}(F^1_i) = \Rev_{\mech}(F|_{W_i}) - b_i$.
\end{parts}
\end{sublemma}

The proof of \cref{alphabeta} is given after the proof of \cref{expensive-trim}. We note that the only delicate part of the proof (apart from the definitions, of course) is showing that even if $z_i$ is not an entry of $\mech_i$, then in the notation of the \lcnamecref{alphabeta} no buyer type chooses $z_i$ over $e$ (equivalently, $(0;0)$ over $\beta(e)$); see \crefpart{no-drop}{z} below.

Now that we have formally constructed a suitable single-dimensional mechanism, we apply the celebrated theorem of \cite{Myerson81}, to obtain a price $c_i\in\RR$ s.t.\ the IC and IR $1$-item mechanism $\mech^{*}_i\eqdef\bigl\{(0;0),(1;c_i)\bigr\}$ maximizes the revenue from $F^1_i$, i.e., $\Rev_{\mech^{*}_i}(F^1_i)=\Rev(F^1_i)$. Ideally, we would have now wanted to replace all entries from $\mech_i$ in $\mech$ by the two entries $z_i=\beta^{-1}\bigl((0;0)\bigr)$ and $o_i\eqdef(1,\vec{0}_{-i};c_i+b_i)=\beta^{-1}\bigl((1;c_i)\bigr)$, however it is not clear that none of the buyer types in $W_i$ whose utility has dropped as a result of this replacement would not actually now choose some ``cheap'' entry instead, which might cause a sharp decrease in revenue. Therefore, in order to make sure that no buyer type has an incentive to switch to a very cheap entry, instead of replacing $\mech_i$ with $\{z_i,o_i\}$ we replace it with $\{z'_i,o'_i\}$, where $z'_i$ and $o'_i$ are slightly discounted versions of $z_i$ and $o_i$, respectively. Formally, we define $z'_i\eqdef\bigl(s_i,\vec{0}_{-i};b_i-(n-1)\cdot\bound-1\bigr)$ and $o'_i\eqdef\bigl(1,\vec{0}_{-i};c_i+b_i-(n-1)\cdot\bound-1\bigr)$. The menu entries $z'_i$  and $o'_i$ sell the same winning probabilities as $z_i$ and $o_i$ (i.e., $s_i$ and $1$), respectively, but with a slight (when compared to $E$) additive price discount of $(n-1)\cdot\bound+1$.

\textbf{Having defined the new menu entries for every $\mathbfit{i}$ separately}, we define our ``trimmed'' mechanism:\footnote{As before, we emphasize that we do not take the closure (see \cref{closure} for a discussion) of $\mech'$, as there is no ``need'' to do so. Indeed, $\mech'$ already possess a utility-maximizing entry with maximal price for each buyer type except perhaps for a set of buyer types of measure zero. For buyer types $v\in\EU\setminus\bigcup_{i\in[n]:F(W_i)=0~\mathrm{or}~s_i<1} W_i$ such an entry exists since the entry of choice of $v$ from $\mech$ is in $\mech'$ (as mentioned above, when $s_i=1$ we have that $\mech_i=\{z_i\}$; see \crefpart{mechi}{single} below) and only finitely many additional menu entries were added. For buyer types $v\in \bigcup_{i\in[n]:F(W_i)>0\And s_i<1}W_i$ we show in \crefpart{alphabeta-rev}{choice} below that such an entry exists.}
\[
\mech'\eqdef
\Bigl(\mech\cap\bigl([0,1]^n\times[0,\thresh]\bigr)\Bigr)
\cup \smashoperator[r]{\bigcup_{\substack{i\in[n]:\\F(W_i)>0\And\\s_i=1}}}\{z_i\}
\cup \smashoperator[r]{\bigcup_{\substack{i\in[n]:\\F(W_i)>0\And\\s_i<1}}}\{z'_i, o'_i\}.
\]
We note that since $\mech$ is IR, it contains the menu entry $(\vec{0};0)$, and therefore so does $\mech'$ and hence $\mech'$ is IR as well. By definition, $\mech'$ contains at most $2$ \emph{expensive} (i.e., costing more than $\thresh$) menu entries for each $i\in[n]$, for a total of at most $2n$ expensive menu entries, as required. It remains to reason about the revenue of~$\mech'$. To do so, we now formulate what can be thought of as a ``converse'' of \cref{alphabeta}, thus relating the revenue of the new trimmed mechanism~$\mech'$ from~$W_i$ with the revenue of the optimal single-dimensional \citeauthor{Myerson81} mechanism $\mech^{*}_i$.

\begin{sublemma}[Relation between $\mech'$ and $\mech^{*}_i$]\label{alphabeta-rev}
Let $i\in[n]$ s.t.\ $F(W_i)>0$ and $s_i<1$.
\begin{parts}
\item\label{alphabeta-rev-choice}
For every $v\in W_i$, the menu entry of choice of $v$ from $\mech'$ is either $o'_i$ or $z'_i$:
\begin{itemize}
\item
$o'_i$ is the menu entry of choice of $v$ from $\mech'$ iff $(1;c_i)$ is the menu entry of choice of~$\alpha(v)$ from $\mech^{*}_i$.
\item
$z'_i$ is the menu entry of choice of $v$ from $\mech'$ iff $(0;0)$ is the menu entry of choice of~$\alpha(v)$ from $\mech^{*}_i$.
\end{itemize}
\item\label{alphabeta-rev-revenue}
$\Rev_{\mech'}(F|_{W_i}) = \Rev_{\mech^{*}_i}(F^1_i)+b_i-(n-1)\cdot\bound-1$.
\end{parts}
\end{sublemma}

The proof of \cref{alphabeta-rev} is given after the proof of \cref{expensive-trim}. As mentioned above, the delicate part of the proof is showing that no buyer type suddenly chooses a very cheap menu entry (or, hypothetically, an expensive menu entry from a different ``expensive part'' of~$\mech'$) due to the move from $\mech$ with $\mech'$. Showing this strongly depends on the assumption that  every menu entry in $\mech$ is chosen by some buyer type; see \cref{no-drop}(\labelcref{no-drop-cheap},\labelcref{no-drop-expensive}) below.

We now have all the pieces of the puzzle needed to show that our trimmed mechanism loses no significant revenue from $W_i$ for every $i\in[n]$ s.t.\ $s_i<1$. The following \lcnamecref{expensive-trim-revenue} shows not only that, but also that from no other part of the buyer type space, ``wealthy'' or ``poor'', does our trimmed mechanism lose significant revenue compared to the original untrimmed mechanism $\mech$.

\begin{sublemma}[Comparison between the Revenues of $\mech'$ and $\mech$ from Restrictions of $F$]\label{expensive-trim-revenue}\leavevmode
\begin{parts}
\item\label{expensive-trim-revenue-expensive-lt}
For every $i\in[n]$ s.t.\ $F(W_i)>0$ and $s_i<1$, we have that $\Rev_{\mech'}(F|_{W_i}) \ge (1-\varepsilon)\cdot\Rev_{\mech}(F|_{W_i})$.
\item\label{expensive-trim-revenue-expensive-eq}
For every $i\in[n]$ s.t.\ $F(W_i)>0$ and $s_i=1$, we have that $\Rev_{\mech'}(F|_{W_i}) = \Rev_{\mech}(F|_{W_i})$.
\item\label{expensive-trim-revenue-cheap}
Letting $P\eqdef\EU\setminus\bigcup_{i\in[n]} W_i$, if $F(P)>0$, then $\Rev_{\mech'}(F|_P) \ge (1-\varepsilon)\cdot\Rev_{\mech}(F|_P)$.
\end{parts}
\end{sublemma}

The proof of \cref{expensive-trim-revenue} is given after the proof of \cref{expensive-trim}.
We conclude the proof of \cref{expensive-trim}, as by \cref{expensive-trim-revenue} we have that
\begin{multline*}
\Rev_{\mech'}(F) = F(P) \cdot \Rev_{\mech'}(F|_P) + \sum_{i=1}^n F(W_i) \cdot \Rev_{\mech'}(F|_{W_i}) \ge \\
\ge (1-\varepsilon) \cdot \Bigl(F(P) \cdot \Rev_{\mech}(F|_P) + \sum_{i=1}^n F(W_i) \cdot \Rev_{\mech}(F|_{W_i})\Bigr) = (1-\varepsilon)\cdot\Rev_{\mech}(F),
\end{multline*}
as required.
\end{proof}

\subsubsection{Proof of \cref{alphabeta,alphabeta-rev,expensive-trim-revenue}}

In this subsection, we prove \cref{alphabeta,alphabeta-rev,expensive-trim-revenue}. Before doing so, we first phrase and prove three auxiliary \lcnamecrefs{mechi}, which in fact encompass most of the delicate details and technical complexity in the proof of \cref{alphabeta,alphabeta-rev,expensive-trim-revenue}.

\begin{sublemma}[Properties of $\mech_i$]\label{mechi}Let $i\in[n]$ s.t.\ $\mech_i\ne\emptyset$.
\begin{sloppypar}
\begin{parts}
\item\label{mechi-trichotomy}
For every $e=(x_i,\vec{0}_{-i};p)\in \mech_i$ and $f=(y_i,\vec{0}_{-i};q)\in \mech_i$, either ~i)~$e=f$, or ~ii)~$x_i<y_i$ and $p<q$, or ~iii)~$x_i>y_i$ and $p>q$.
\item\label{mechi-attained}
If there exists $e=(x_i,\vec{0}_{-i};p)\in \mech_i$ s.t.\ $x_i=s_i$ (i.e., if $s_i$ is attained as a minimum rather than merely an infimum), then $p=b_i$, i.e., $z_i=e$ and so $z_i\in \mech_i$.
\item\label{mechi-single}
If $s_i=1$, then $\mech_i=\{z_i\}$.
\item\label{mechi-sequence}
For every $i\in[n]$, there exists a sequence $\bigl(z_i^m=(s_i^m,\vec{0}_{-i};b_i^m)\bigr)_{m=1}^{\infty}\subseteq\mech_i$ s.t.\ ~i)~$(s_i^m)_{m=1}^{\infty}$ is weakly decreasing, ~ii)~$(b_i^m)_{m=1}^{\infty}$ is weakly decreasing, and ~iii)~$\lim_{m\rightarrow\infty} z_i^m = z_i=(s_i,\vec{0}_{-i};b_i)$.
\end{parts}
\end{sloppypar}
\end{sublemma}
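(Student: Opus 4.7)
The unifying idea is that within the ``expensive exclusive'' part $\mech_i$, each menu entry affects buyer utility only through the two scalars $x_i$ and $p$, and the standing assumption in the proof of \cref{expensive-trim} that every entry of $\mech$ is chosen by at least one buyer type in $\EU$ will force strict monotonicity between these two scalars. My plan is to prove part (a) directly from that assumption, derive parts (b) and (c) as short corollaries, and then extract the desired sequence in part (d) from the resulting monotone structure.

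For part (a), I would fix two entries $e = (x_i, \vec{0}_{-i}; p)$ and $f = (y_i, \vec{0}_{-i}; q)$ in $\mech_i$ and rule out the ``bad'' configurations that are not among the three listed alternatives. If $x_i = y_i$ but $p \ne q$, then the cheaper one strictly dominates the other in every buyer's utility, so no type would ever pick the more expensive entry, contradicting the fact that both are chosen. If $x_i < y_i$ and $p \ge q$, then $f$ weakly dominates $e$; since any buyer type $v$ that actually chooses $e$ must satisfy $v_i > 0$ (otherwise IR combined with $p > \thresh > 0$ would force $v$ to strictly prefer $(\vec{0}; 0)$ over $e$), $f$ in fact strictly dominates $e$ for every such $v$, again contradicting the fact that $e$ is chosen. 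The symmetric case $x_i > y_i$ and $p \le q$ is handled identically.

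Parts (b) and (c) follow quickly. For (b), if some $e = (s_i, \vec{0}_{-i}; p) \in \mech_i$, then any other entry with price strictly less than $p$ would, by (a), need an $x_i$-coordinate strictly less than $s_i$, contradicting the definition of $s_i$ as an infimum; hence $p = b_i$ and $e = z_i$. For (c), if $s_i = 1$, then every entry in $\mech_i$ has $x_i = 1 = s_i$, so (a) forces there to be at most one such entry, and (b) identifies it as $z_i$.

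Part (d) will be the main obstacle. Using (a), I would view $\mech_i$ as the graph of a strictly increasing function $\phi$ from the set of $x_i$-coordinates appearing in $\mech_i$ to the corresponding prices. By monotonicity, the one-sided limit $\lim_{x \searrow s_i} \phi(x)$ along the domain of $\phi$ exists, and I would argue that it equals $\inf_x \phi(x) = b_i$: any entry of $\mech_i$ whose price is strictly less than some $\phi(x)$ must, by (a), have an $x_i$-coordinate strictly less than $x$, so as $x \searrow s_i$ the corresponding prices are squeezed down to $b_i$. Choosing a sequence of $x_i$-coordinates from the domain of $\phi$ that weakly decreases to $s_i$ then yields a sequence $z_i^m = (s_i^m, \vec{0}_{-i}; b_i^m) \in \mech_i$ that is weakly decreasing in both coordinates and converges to $z_i$; if $s_i$ is already attained, part (b) lets me take the constant sequence $z_i^m \equiv z_i$.
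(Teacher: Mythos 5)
Your proposal is correct and takes essentially the same route as the paper: both rely on the standing assumption that every entry of $\mech$ is chosen by some buyer type (plus IR forcing $v_i>0$) to establish the trichotomy in part (a), then derive (b) and (c) as immediate consequences, and obtain the sequence in (d) by picking $x_i$-coordinates decreasing to $s_i$ and arguing that the corresponding prices must converge to $b_i$. The only cosmetic difference is that the paper proves the price convergence in (d) by contradiction against an entry with $p<\lim b_i^m$, while you phrase the same observation in terms of the one-sided limit of a strictly increasing function.
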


\begin{proof}
We start with \cref{mechi-trichotomy}. If $p=q$, then we claim that $x_i=y_i$; indeed, assuming w.l.o.g.\ that $x_i\ge y_i$ we note that if we had $x_i>y_i$, then all buyer types $v$ with $u_f(v)\ge0$ (and hence $v_i>0$) would have strictly preferred $e$ to $f$, as the former sells (for the same price!)\ a strictly larger probability for winning item~$i$, contradicting the assumption that each menu entry in $\mech$ is chosen by at least one buyer type. If $p<q$, then 
we claim that $x_i<y_i$; indeed, otherwise all buyer types would strictly prefer $e$ to $f$ since $e$ would sell a weakly larger probability for winning item $i$ for a strictly lower price, contradicting the assumption that $f$ is chosen by at least one buyer type. Similarly, if $p>q$, then $x_i>y_i$.

We proceed to \cref{mechi-attained}. If there exists $e=(x_i,\vec{0}_{-i};p)\in \mech_i$ s.t.\ $x_i=s_i$, then for every $s=(y_i,\vec{0}_{-i};q)\in \mech_i$ we have by definition that $y_i \ge s_i = x_i$ and so by \cref{mechi-trichotomy} that $q \ge p$, and so by definition $b_i \ge p$. Since $e \in \mech_i$, we also have by definition that $b_i \le p$, and hence $b_i=p$. Therefore, $z_i=e$, as required.

We proceed to \cref{mechi-single}. Assume that $s_i=1$ and let $e=(x_i,\vec{0}_{-i};p)\in\mech_i$. By definition $s_i\le x_i$; however, we also have that $s_i=1\ge x_i$, and so $s_i=x_i$. By \cref{mechi-attained}, we therefore have that $e=z_i$. As $\mech_i\ne\emptyset$ by assumption, the proof of \cref{mechi-single} is complete.

We move on to \cref{mechi-sequence}. If $s_i$ is attained as a minimum, then by \cref{mechi-attained}, $z_i\in \mech_i$ and we may set $z_i^m\eqdef z_i$ for all $m\in\NN$, and so $(z_i^m)_{m=1}^{\infty}$ trivially meets all of the desired requirements. It remains to analyze the case in which $s_i$ is not attained as a minimum but rather only as an infimum; assume henceforth, therefore, that this is the case.

By definition of $s_i$, there exists a sequence $\bigl(z_i^m=(s_i^m,\vec{0}_{-i};b_i^m)\bigr)_{m=1}^{\infty}\subseteq\mech_i$ s.t.\ $(s_i^m)_{m=1}^{\infty}$ is weakly decreasing and s.t.\ 
$\lim_{m\rightarrow\infty} s_i^m = s_i$. By \cref{mechi-trichotomy}, $(b_i^m)_{m=1}^{\infty}$ is weakly decreasing as well. It is enough, therefore, to show that $\lim_{m\rightarrow\infty} b_i^m = b_i$. As $(b_i^m)_{m=1}^{\infty}$ is weakly decreasing and bounded from below by $b_i$ (by definition of $b_i$), we have that it indeed converges and, moreover, that $\lim_{m\rightarrow\infty} b_i^m \ge b_i$; assume for contradiction that $\lim_{m\rightarrow\infty} b_i^m > b_i$. Therefore, by definition of~$b_i$ there exists $e=(x_i,\vec{0}_{-i};p)\in \mech_i$ s.t.\ $\lim_{m\rightarrow\infty} b_i^m > p$ and so $b_i^m > p$ for all $m\in\NN$. By \cref{mechi-trichotomy}, we therefore have that $s_i^m>x_i$ for all $m\in\NN$, and so $s_i \ge x_i$. As by definition also $s_i \le x_i$, we obtain that $s_i=x_i$, contradicting the assumption that $s_i$ is not attained as a minimum.
\end{proof}

\begin{sublemma}\label{prefer-when-gt}
Let $i\in[n]$ and let $e=(x_i,\vec{0}_{-i};p)$ and $f=(y_i,\vec{0}_{-i};q)$ be two outcomes (not necessarily menu entries in $\mech$) assigning zero probability for winning any item other than $i$, s.t.\ $x_i > y_i$. For every $v\in\RRn$, we have that both of the following hold.
\begin{itemize}
\item
$u_e(v)\ge u_f(v)$ iff $v_i \ge \frac{p-q}{x_i-y_i}$.
\item
$u_f(v)\ge u_e(v)$ iff $\frac{p-q}{x_i-y_i} \ge v_i$.
\end{itemize}
\end{sublemma}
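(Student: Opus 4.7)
The plan is to verify both equivalences by direct algebraic manipulation of the utility expressions, using only the definition of $u_e$ and $u_f$ together with the strict inequality $x_i > y_i$.

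First I would expand both utilities using the definition: since $e$ and $f$ assign zero probability to every item other than $i$, we have $u_e(v) = x_i\cdot v_i - p$ and $u_f(v) = y_i\cdot v_i - q$. Then $u_e(v) \ge u_f(v)$ is equivalent to $(x_i - y_i)\cdot v_i \ge p - q$. Because $x_i > y_i$, the quantity $x_i - y_i$ is strictly positive, so dividing both sides by it preserves the inequality and yields $v_i \ge \frac{p-q}{x_i - y_i}$, establishing the first bullet. The second bullet follows by the same computation with the inequality reversed (or by negating both sides of the first equivalence), again using strict positivity of $x_i - y_i$ to divide without flipping the inequality.

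There is no real obstacle here; the only subtlety worth noting is that the sign of $x_i - y_i$ is what justifies the division, which is why the hypothesis $x_i > y_i$ (rather than merely $x_i \ne y_i$) is used to orient the inequality. Since the whole argument is a two-line chain of equivalences, I would simply write out both directions compactly.
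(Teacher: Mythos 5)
Your proposal is correct and matches the paper's proof essentially verbatim: both expand $u_e(v)=x_i v_i - p$ and $u_f(v)=y_i v_i - q$, rearrange to $(x_i-y_i)v_i \ge p-q$, and divide by the strictly positive $x_i - y_i$, with the second bullet obtained by reversing the inequalities.
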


\begin{proof}
$u_e(v)\ge u_f(v)$ $\Leftrightarrow$ $v_i\cdot x_i - p \ge v_i \cdot y_i - q$ $\Leftrightarrow$ $v_i \cdot (x_i-y_i) \ge p-q$ $\Leftrightarrow$ $v_i \ge \frac{p-q}{x_i-y_i}$, where in the last equivalence we used the fact that $x_i - y_i > 0$. The proof of the second statement is identical, with all inequalities flipped.
\end{proof}

\begin{sublemma}[Preferences of $W_i$]\label{no-drop}
Let $i\in[n]$ s.t.\ $\mech_i\ne\emptyset$, let $v\in W_i$, and let $e\in \mech_i$ be the menu entry of choice of $v$ from $\mech$.
\begin{parts}
\item\label{no-drop-z}
$u_e(v) \ge u_{z_i}(v)$.
\item\label{no-drop-cheap}
$u_{z'_i}(v) > u_f(v)$, for every $f\in \mech\setminus \mech_i$.
\item\label{no-drop-expensive}
$\min\bigl\{u_{e}(v),u_{z'_i}(v)\bigr\} > u_f(v)$, for every outcome $f$ of the form $(y_j,\vec{0}_{-j};q)$ s.t.\ $j\in[n]\setminus\{i\}$ and $q\ge \thresh-(n-1)\cdot\bound-1$ (where $f$ need not necessarily be a menu entry in $\mech$).
\end{parts}
\end{sublemma}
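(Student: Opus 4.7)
The plan is to leverage \cref{mechi}, particularly \crefpart{mechi}{sequence}, together with the standing WLOG that every entry of $\mech$ other than $(\vec{0};0)$ is chosen by some buyer type in $\EU$. \crefpart{no-drop}{z} is immediate: picking a sequence $(z_i^m)_m \subseteq \mech_i$ with $z_i^m \to z_i$ from \crefpart{mechi}{sequence}, since each $z_i^m \in \mech$ and $e$ maximizes $v$'s utility over $\mech$, we have $u_e(v) \ge u_{z_i^m}(v)$, and sending $m \to \infty$ yields $u_e(v) \ge u_{z_i}(v)$ by continuity of the utility in the outcome.

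Before tackling \crefpart{no-drop}{cheap} and \crefpart{no-drop}{expensive}, I first establish a uniform ``IR at $z_i$'' bound: $v_i \ge b_i/s_i$, whence $s_i v_i \ge b_i$ and $u_{z'_i}(v) = u_{z_i}(v) + (n-1)\bound + 1 \ge (n-1)\bound + 1$. For any two entries $(x_1,\vec{0}_{-i};p_1), (x_2,\vec{0}_{-i};p_2) \in \mech_i$ with $x_1 > x_2$ (hence $p_1 > p_2$ by \crefpart{mechi}{trichotomy}), WLOG gives some $v' \in \EU$ choosing $(x_2,\vec{0}_{-i};p_2)$; its preferences over $(\vec{0};0)$ and over $(x_1,\vec{0}_{-i};p_1)$ force $p_2/x_2 \le v'_i \le (p_1-p_2)/(x_1-x_2)$, rearranging to $p_2/x_2 \le p_1/x_1$. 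So over chosen entries of $\mech_i$, $p/x$ is weakly decreasing as $x$ decreases, with infimum $b_i/s_i$, hence $p/x \ge b_i/s_i$ for every such entry. Applied to $e=(x_i,\vec{0}_{-i};p)$, IR gives $v_i \ge p/x_i \ge b_i/s_i$.

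For \crefpart{no-drop}{cheap}, split $f=(\vec y;q)\in\mech\setminus\mech_i$ into cheap ($q \le \thresh$) and expensive ($q > \thresh$, forcing $y_i=0$ by $\thresh$-exclusivity). In the expensive case, WLOG rules out $f=(\vec0;q)$, so $f=(y_j,\vec0_{-j};q)$ for some $j\ne i$, and $u_f(v)=y_j v_j - q \le \bound-\thresh \le -(n-1)\bound-2$ (using $\thresh \ge n\bound+2$), while $u_{z'_i}(v)\ge (n-1)\bound+1$ suffices. In the cheap case, the target reduces to $u_{z_i}(v) \ge y_i v_i - q$, for then $u_f(v) = (y_i v_i - q) + \sum_{k\ne i}y_k v_k \le u_{z_i}(v) + (n-1)\bound < u_{z'_i}(v)$. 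If $e = z_i$, this is immediate from $u_e(v) \ge u_f(v) \ge y_i v_i - q$. Otherwise $x_i > s_i$; applying WLOG to $(z_i^m)$, each chooser $v^m$ of $z_i^m$ satisfies $(s_i^m-y_i)v^m_i \ge b_i^m - q > 0$ (from $u_{z_i^m}(v^m)\ge u_f(v^m)$, forcing $y_i < s_i^m$) and $v^m_i \le (p-b_i^m)/(x_i-s_i^m)$ (from $u_{z_i^m}(v^m)\ge u_e(v^m)$); passing to $m\to\infty$ gives $(b_i-q)(x_i-s_i) \le (s_i-y_i)(p-b_i)$ and rules out $y_i=s_i$ (since otherwise $v^m_i\to\infty$ would contradict the upper bound). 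By \crefpart{no-drop}{z} applied to $e$, $v_i \ge (p-b_i)/(x_i-s_i)$, and multiplying by $s_i - y_i > 0$ yields $(s_i-y_i)v_i \ge b_i - q$, i.e.\ $u_{z_i}(v) \ge y_i v_i - q$.

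\crefpart{no-drop}{expensive} is a short numerical check: for $f=(y_j,\vec0_{-j};q)$ with $j\ne i$ and $q\ge \thresh-(n-1)\bound-1$, since $v\in\EU$ gives $v_j\le\bound$, we have $u_f(v)\le \bound-q \le n\bound+1-\thresh \le -1$; meanwhile $u_e(v)\ge0$ by IR and $u_{z'_i}(v)\ge (n-1)\bound+1$ by the preliminary bound, both strictly exceeding $-1 \ge u_f(v)$. The main obstacle throughout is the $e \ne z_i$ subcase of \crefpart{no-drop}{cheap}, where the WLOG-driven limiting argument linking $\mech_i$'s infimum entry $z_i$ to the interior entry $e$ chosen by $v$ is essential.
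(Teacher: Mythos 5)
Your proof is correct and reaches the same conclusions, but the route differs from the paper's in a few notable ways. Part~(\labelcref{no-drop-z}) is identical. For parts~(\labelcref{no-drop-cheap}) and~(\labelcref{no-drop-expensive}), you front-load a uniform bound $u_{z'_i}(v) \ge (n-1)\bound+1$, obtained by proving that $p/x$ is weakly increasing in $x$ over (chosen) entries of $\mech_i$ and hence bounded below by $b_i/s_i$; the paper never needs this, and instead establishes $u_{z'_i}(v)\ge 0$ for part~(\labelcref{no-drop-expensive}) simply by applying part~(\labelcref{no-drop-cheap}) to $f=(\vec{0};0)$. You also split part~(\labelcref{no-drop-cheap}) into a cheap-$f$ branch (handled by a limiting argument) and an expensive-$f$ branch (handled by your preliminary bound plus a numerical estimate), whereas the paper runs both through a single uniform argument by way of the auxiliary outcome $f'=(y_i,\vec{0}_{-i};q)$, \cref{prefer-when-gt}, and the chooser $w$ of $z_i^m$. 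Your limiting argument is algebraically equivalent to the paper's --- both ultimately compare $v_i$ to the endpoints $(b_i^m-q)/(s_i^m-y_i)$ and $(p-b_i^m)/(x_i-s_i^m)$ and pass to the limit --- but yours manipulates the inequality $(b_i^m-q)(x_i-s_i^m)\le(s_i^m-y_i)(p-b_i^m)$ directly rather than chaining preferences via $w_i$. Two small things you leave implicit: first, writing $b_i/s_i$ presupposes $s_i>0$, which is true but deserves a sentence (the monotone $p/x$ sequence $b_i^m/s_i^m$ is weakly decreasing and $b_i^m\to b_i>\thresh>0$, so $s_i^m\not\to0$); second, the bound $v_j\le\bound$ for $j\ne i$ used in both the expensive branch of~(\labelcref{no-drop-cheap}) and in~(\labelcref{no-drop-expensive}) rests on first observing that IR at $e\in\mech_i$ forces $v_i>\thresh>\bound$, and then invoking $v\in\EU$, as the paper makes explicit.
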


\begin{proof}
Let $(z_i^m)_{m=1}^{\infty}$ be as in \crefpart{mechi}{sequence}. We start with \cref{no-drop-z}. By definition of $e$, we have that $u_e(v) \ge u_{z_i^m}(v)$ for every $m\in\NN$. By continuity of $u$, we therefore also have that $u_e(v) \ge u_{\lim_{m\rightarrow\infty}z_i^m}(v) = u_{z_i}(v)$, as required.

We turn to \cref{no-drop-cheap}. Denote $e=(x_i,\vec{0}_{-i};p)\in \mech_i$ ($e$ is still the menu entry of choice of $v$ from $\mech$). If $x_i=s_i$, then by \crefpart{mechi}{attained}, $e=z_i$, and the proof of \cref{no-drop-cheap} is complete, as $v$ (like any other buyer type) strictly prefers $z'_i$ to $z_i=e$, which is the menu entry of choice of~$v$ from $\mech$ and, therefore, is weakly preferred by $v$ to $f$. Otherwise, $x_i>s_i$, and so there exists $N\in\NN$ s.t.\ $x_i>s_i^m$ for every $m\ge N$; assume w.l.o.g.\ that $N=1$.

Denote $f=(\vec{y};q)$ and let $f'\eqdef(y_i,\vec{0}_{-i};q)$ (we emphasize that $f'$ need not necessarily be a menu entry in $\mech$). We claim that $u_{z_i^m}(v) \ge u_{f'}(v)$ for every $m\in\NN$. Indeed, let $m\in\NN$ and let
$w\in W_i$ be a buyer type that chooses $z_i^m$ from $\mech$. 
As $x_i > s_i^m$, by \cref{prefer-when-gt} (applied twice) we have that $v_i \ge \frac{p-b_i^m}{x_i-s_i^m}\ge w_i$. We note that $s_i^m>y_i$. Indeed, if $q>\thresh$, then as $f\in\mech\setminus\mech_i$, we have that $f\in\mech_j$ for some $j\in[n]\setminus\{i\}$, and so (since $\mech$ is $\thresh$-exclusive) $y_i=0<s_i^m$; otherwise (i.e., if $q\le\thresh<b_i^m$), if $y_i\ge s_i^m$ then we would have $u_f(w) = w_i\cdot y_i - q > w_i\cdot s_i^m - b_i^m = u_{z_i^m}(w)$, contradicting the definition of $w$.
By definition, $w$ weakly prefers $z_i^m$ to $f$, and by definition of~$f'$, we have that $w$ (as well as all other buyer types) weakly prefers $f$ to $f'$. Therefore, $w$ weakly prefers $z_i^m$ to $f'$, and so, since $v_i\ge w_i$ and by \cref{prefer-when-gt}, we obtain that $v_i \ge w_i \ge \frac{b_i^m-q}{s_i^m-y_i}$. Therefore, using \cref{prefer-when-gt} once more, we have that $v$ weakly prefers $z_i^m$ to $f'$, as claimed.
By continuity of $u$, we therefore have that $u_{z_i}(v) = u_{\lim_{m\rightarrow\infty}z_i^m}(v) \ge u_{f'}(v)$ as well.

As $\mech$ is IR and $e\in\mech_i$, we have that $v_i>\thresh>\bound$.
Hence, since $v\in\EU$ we have that $v_j\le\bound$ for every $j\in[n]\setminus\{i\}$. Therefore, $u_{z'_i}(v) = u_{z_i}(v) + (n-1)\cdot\bound+1 \ge u_{f'}(v) + (n-1)\cdot\bound+1 = u_f(v) - \sum_{j\in[n]\setminus\{i\}}v_j\cdot y_j + (n-1)\cdot\bound+1 > u_f(v)$, completing the proof of \cref{no-drop-cheap}.

We conclude by proving \cref{no-drop-expensive}. Recall that $\mech$ is IR; therefore, we have by definition of $v$ that $u_e(v) \ge 0$, and since $(\vec{0};0)\in \mech\setminus\mech_i$ we also have by \cref{no-drop-cheap} that $u_{z'_i}(v) \ge u_{(\vec{0};0)}(v) = 0$. Therefore, to prove \cref{no-drop-expensive} it suffices to show that $u_f(v)<0$. Indeed, recalling that $v_j\le\bound$, we have that $u_f(v) = v_j\cdot y_j - q \le \bound - q \le \bound - \thresh + (n-1)\cdot\bound+1 = n\cdot\bound+1 - \thresh < 0$.
\end{proof}

\begin{proof}[Proof of \cref{alphabeta}]
Recall that $i\in[n]$ s.t.\ $F(W_i)>0$ and $s_i<1$.
We start with \cref{alphabeta-choice}. We first claim that $u_{\beta(f)}\bigl(\alpha(v)\bigr)=u_f(v)-u_{z_i}(v)$ for every $f\in\mech_i\cup\{z_i\}$. Indeed,
denoting $f=(y_i,\vec{0}_{-i};q)$, we have
\begin{multline*}
u_{\beta(f)}\bigl(\alpha(v)\bigr) =
v_i\cdot(1-s_i)\cdot\tfrac{y_i-s_i}{1-s_i}-q+b_i =
v_i\cdot(y_i-s_i)-q+b_i = \\
= (v_i\cdot y_i-q) - (v_i\cdot s_i - b_i) =
u_f(v) - u_{z_i}(v).
\end{multline*}
For every $\beta(f),\beta(g)\in\mech_i^1$ (where $f,g\in\mech_i\cup\{z_i\}$), we therefore have that
\[u_{\beta(f)}\bigl(\alpha(v)\bigr) \ge u_{\beta(g)}\bigl(\alpha(v)\bigr) \Leftrightarrow
u_f(v) - u_{z_i}(v) \ge u_g(v) - u_{z_i}(v) \Leftrightarrow
u_f(v) \ge u_g(v).\]
Therefore, denoting the set of utility-maximizing entries for $v$ in $\mech_i\cup\{z_i\}$ by $M$, we have that the set of utility-maximizing entries for $v$ in $\mech_i^1$ is $M^1\eqdef\bigl\{\beta(f) \mid f\in M\bigr\}$. As $\beta$ is strictly monotone in the price coordinate, and since $e$ is an entry in $M$ with maximal price (indeed, $e$ is surpassed by $z_i$ neither in utility for $v$ (by \crefpart{no-drop}{z}) nor in price (as $p\ge b_i$ by definition)), we have that $\beta(e)$ is an entry in $M^1$ with maximal price, as required.

We proceed to \cref{alphabeta-revenue}. For every $v\in W_i$, denoting the menu entry of choice of $v$ from $\mech$ by $e=(x_i,\vec{0}_{-i};p)\in\mech_i$, we have by \cref{alphabeta-choice} that the payment that $\mech_i^1$ extracts from a buyer of type~$\alpha(v)$ is the price of $\beta(e)$, which equals $p-b_i$, i.e., precisely the payment that $\mech$ extracts from a buyer of type $v$, minus $b_i$. By definition of $F_i^1$, the proof is therefore complete.
\end{proof}

\begin{proof}[Proof of \cref{alphabeta-rev}]
We start with \cref{alphabeta-rev-choice}.
By \crefpart{no-drop}{cheap}, $v$ strictly prefers $z'_i$ to every menu entry in $\mech\setminus\mech_i$ (which, by \crefpart{mechi}{single}, also contains $\mech_j$ for every $j\in[n]$ s.t.\ $F(W_j)>0$ and $s_j=1$). By \crefpart{no-drop}{expensive}, $v$ strictly prefers $z'_i$ to $z'_j$ and $o'_j$ for every $j\in[n]\setminus\{i\}$ s.t.\ $F(W_j)>0$ and $s_j<1$. Therefore, by definition of $\mech'$, the menu entry of choice of $v$ from $\mech'$ is either $z'_i$ or $o'_i$. (By definition, the menu entry of choice of $\alpha(v)$ from $\mech^{*}_i$ is either~$(0;0$) or $(1;c_i)$.) We note that
\begin{multline*}
u_{o'_i}(v) = v_i-c_i-b_i+(n-1)\cdot\bound+1 = \\
= (v_i\cdot s_i - b_i + (n-1)\cdot\bound+1) + (v_i\cdot(1-s_i) - c_i) = u_{z'_i}(v) + u_{(1;c_i)}\bigl(\alpha(v)\bigr),
\end{multline*}
and so $v$ weakly (resp.\ strictly) prefers $o'_i$ to $z'_i$ iff $\alpha(v)$ weakly (resp.\ strictly) prefers $(1;c_i)$ to $(0;0)$, completing the proof of \cref{alphabeta-rev-choice} as the price of $o'_i$ is higher than that of $z'_i$ and the price of $(1;c_i)$ is higher than that of $(0;0)$ (and so in the case of indifference, $v$ chooses $o'_i$ and $\alpha(v)$ chooses $(1;c_i)$).

We proceed to \cref{alphabeta-rev-revenue}. For every $v\in W_i$, we have by \cref{alphabeta-rev-choice} that the payment that $\mech'$ extracts from a buyer of type $v$ equals precisely the payment that $\mech^{*}_i$ extracts from a buyer of type $\alpha(v)$, plus $b_i-(n-1)\cdot\bound-1$. By definition of $F_i^1$, the proof is therefore complete.
\end{proof}

\begin{proof}[Proof of \cref{expensive-trim-revenue}]
We start with \cref{expensive-trim-revenue-expensive-lt}. By \crefpart{alphabeta-rev}{revenue}, the definition of $\mech^{*}_i$ as the revenue-maximizing mechanism for $F^1_i$, and \crefpart{alphabeta}{revenue}, we have
\begin{multline*}
\Rev_{\mech'}(F|_{W_i}) = \Rev_{\mech^{*}_i}(F^1_i)+b_i-(n-1)\cdot\bound-1 \ge
\Rev_{\mech_i^{1}}(F^1_i)+b_i-(n-1)\cdot\bound-1 = \\
= \Rev_{\mech}(F|_{W_i})-(n-1)\cdot\bound-1 \ge
\Rev_{\mech}(F|_{W_i})-\varepsilon\cdot\thresh \ge
(1-\varepsilon)\cdot\Rev_{\mech}(F|_{W_i}),
\end{multline*}
where the last inequality is since $\Rev_{\mech}(F|_{W_i})\ge\thresh$ by definition of $W_i$.

We proceed to \cref{expensive-trim-revenue-expensive-eq}. Since $s_i=1$, by \crefpart{mechi}{single} we have that $\mech_i=\{z_i\}$, and so by definition, $z_i$ is the menu entry of choice of every $v\in W_i$ from $\mech$ (which, by \crefpart{mechi}{single}, also contains $\mech_j$ for every $j\in[n]\setminus\{i\}$ s.t.\ $F(W_j)>0$ and $s_j=1$). As by definition $z_i\in\mech'$, it is enough to show that $v$ strictly prefers $z_i$ to all menu entries in $\mech'\setminus\mech$. Indeed, by \crefpart{no-drop}{expensive}, $v$~strictly prefers $z_i$ to both $z'_j$ and $o'_j$ for every $j\in[n]$ s.t.\ $F(W_j)>0$ and $s_j<1$ (note that $j\ne i$ since $s_j<1=s_i$), i.e., $v$ strictly prefers $z_i$ to every menu entry in $\mech'\setminus\mech$, and so the proof of \cref{expensive-trim-revenue-expensive-eq} is complete.

We conclude by proving \cref{expensive-trim-revenue-cheap}. Let $v\in P$ and let $e=(\vec{x};p)\in\mech$ be the menu entry of choice of~$v$ from $\mech$. By definition of $P$ and since $\mech$ is $\thresh$-exclusive, we have that $p\le\thresh$. Since by definition we have that $e\in\mech\cap\bigl([0,1]^n\times[0,\thresh]\bigr)\subseteq\mech'$, and since the price of every menu entry in $\mech'\setminus\mech$ is at least $\min_{i\in[n]:F(W_i)>0\And s_i<1}b_i-(n-1)\cdot\bound-1 \ge \thresh-(n-1)\cdot\bound-1$, we have that the payment that $\mech'$  extracts from a buyer of type $v$ is either $p>(1-\varepsilon)\cdot p$, or at least $\thresh-(n-1)\cdot\bound-1 \ge (1-\varepsilon)\cdot\thresh \ge (1-\varepsilon)\cdot p$, and so the proof of \cref{expensive-trim-revenue-cheap} is complete.
\end{proof}

\subsection{Discretizing the Cheap Part of the Menu}\label{cheap-disc-sec}

As outlined above, our fourth and final step toward proving \cref{cne-upper-bound}, which we take in this \lcnamecref{cheap-disc-sec}, shows that in any mechanism over some exclusively unbounded distribution, the ``cheap'' part of the menu can be simplified without significant loss in revenue and without increasing the number of menu entries in the expensive part, so that (if both the parameter $\bound$ defining the exclusive unboundedness of the distribution and the parameter $\thresh$ defining the cheap part of the menu are polynomial in $n$ and $\nicefrac{1}{\varepsilon}$) it contains at most $(\nicefrac{n}{\varepsilon})^{O(n)}$ menu entries. This subsection is dedicated to the statement and proof of \cref{cheap-disc}, which formalizes this step.\footnote{See \cref{logn} for a stronger version of this \lcnamecref{cheap-disc-log}, strengthening the upper bound of $(\nicefrac{n}{\varepsilon})^{O(n)}$ to~$(\frac{\log n}{\varepsilon})^{O(n)}$.}

\begin{lemma}\label{cheap-disc}
Let $n\in\NN$, let $\bound\in\RR$, let $\thresh\in\RR$, and let $\varepsilon\in(0,1)$. For every $F\in\Delta(\EU)$ and for every IC and IR $n$-item mechanism $\mech$, there exists an IC and IR $n$-item mechanism $\mech'$ such that all of the following hold.
\begin{itemize}
\item
$\Rev_{\mech'}(F) \ge (1-\varepsilon)\cdot\Rev_{\mech}(F)-\varepsilon$.
\item
The menu entries that cost more than $(1-\varepsilon)\cdot\thresh$ in $\mech'$ are precisely the menu entries that cost more than $\thresh$ in $\mech$, each given a multiplicative price discount of $(1-\varepsilon)$. In particular, there are as many menu entries that cost more than $(1-\varepsilon)\cdot\thresh$ in $\mech'$ as there are that cost more than $\thresh$ in~$\mech$.
\item
There are fewer than $n\cdot \bigl\lceil\frac{n\cdot\bound}{\varepsilon^2}+1\bigr\rceil^{n-1}\cdot\bigl\lceil\frac{n\cdot(1-\varepsilon)\cdot\thresh}{\varepsilon^2}+1\bigr\rceil$ entries in $\mech'$ that cost at most $(1-\varepsilon)\cdot\thresh$.
\end{itemize}
\end{lemma}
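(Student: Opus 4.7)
The plan is to construct $\mech'$ by a two-part modification of $\mech$. First, I apply a uniform multiplicative price discount of $(1-\varepsilon)$ to every menu entry $e=(\vec{x};p)\in\mech$ with $p>\thresh$, keeping the allocation intact; this single change captures the multiplicative $(1-\varepsilon)$ factor in the revenue guarantee and ``protects'' buyers who originally picked an expensive entry from switching to a cheaper discretized entry. Second, for every cheap entry $e=(\vec{x};p)\in\mech$ (with $p\le\thresh$) and every $i\in[n]$, I add a ``coordinate-$i$-preserving'' discretization $e^{(i)}=(\vec{x}^{(i)};p^{(i)})$, defined by $x^{(i)}_j\eqdef\delta_x\lfloor x_j/\delta_x\rfloor$ for $j\ne i$, $x^{(i)}_i\eqdef x_i$, and $p^{(i)}$ equal to $(1-\varepsilon)p-(n-1)\delta_x\bound$ rounded down to the nearest nonnegative multiple of $\delta_p$, where $\delta_x\eqdef\varepsilon^2/(n\bound)$ and $\delta_p\eqdef\varepsilon^2/n$. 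The offset $(n-1)\delta_x\bound$ exactly compensates (up to the $\delta_p$ rounding) for the worst-case utility loss $\sum_{j\ne i}(x_j-x^{(i)}_j)v_j$ across the $n-1$ rounded coordinates, each of which is bounded by $\bound$ for $v\in\EU$ once $i$ is chosen to be the index $i^*(v)$ of the single (possibly) large coordinate.

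For finiteness: whenever two discretized entries share the same index $i$, the same rounded tuple $(x^{(i)}_j)_{j\ne i}$, and the same rounded price $p^{(i)}$, they differ only in the free coordinate $x_i$, and the one with the largest such $x_i$ strictly dominates the others for every buyer type (same price and other allocations, strictly higher winning probability of item $i$). Retaining only the dominant representative per $(i,\text{grid signature})$, and using that $x^{(i)}_j\in[0,1]$ yields at most $\lceil n\bound/\varepsilon^2+1\rceil$ values on the $\delta_x$-grid while $p^{(i)}\in[0,(1-\varepsilon)\thresh]$ yields at most $\lceil n(1-\varepsilon)\thresh/\varepsilon^2+1\rceil$ values on the $\delta_p$-grid (since $p^{(i)}\le(1-\varepsilon)p\le(1-\varepsilon)\thresh$), the count of cheap entries in $\mech'$ is at most the stated bound; the only other entries in $\mech'$ are the discounted expensive entries, matching the structural claim.

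For the revenue guarantee, fix a buyer type $v\in\EU$, let $e=(\vec{x};p)$ be its entry of choice in $\mech$, and let $i^*=i^*(v)$ be as above. In the cheap case ($p\le\thresh$), a direct computation gives $u_{e^{(i^*)}}(v)\ge u_e(v)+\varepsilon p$: the $(n-1)\delta_x\bound$ offset cancels the coordinate-rounding loss (since $v_j\le\bound$ for $j\ne i^*$), while $x^{(i^*)}_{i^*}=x_{i^*}$ preserves the contribution of the possibly large coordinate $i^*$. Hence $v$'s realized utility in $\mech'$ is at least $u_e(v)+\varepsilon p$; combining this with the original preference $u_e(v)\ge u_f(v)$ and the bound $u_{f^{(j)}}(v)\le u_f(v)+(q-q^{(j)})$ for any competing cheap discretization $f^{(j)}$ shows that the price $v$ actually pays in $\mech'$ is at least $(1-\varepsilon)p-O(\varepsilon^2)$. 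In the expensive case ($p>\thresh$), the uniform $(1-\varepsilon)$ discount gives $v$ extra utility $\varepsilon p\ge\varepsilon\thresh$, strictly exceeding the at-most-$\varepsilon\thresh+O(\varepsilon^2)$ utility advantage any cheap discretization offers over its source; hence switches to cheap entries can only occur when $p-\thresh=O(\varepsilon)$, in which case a welfare accounting bounds the per-buyer revenue loss by $O(\varepsilon^2)$. Integrating these pointwise bounds against $F$ and absorbing $O(\varepsilon^2)$ into the additive slack yields $\Rev_{\mech'}(F)\ge(1-\varepsilon)\Rev_{\mech}(F)-\varepsilon$.

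The main obstacle is ruling out that a buyer type $v$ ``drifts'' from the natural target ($\bar{e}$ in the expensive case, $e^{(i^*)}$ in the cheap case) to a cheap entry $f^{(j)}$ with substantially lower price. Drifts with $j\ne i^*$ are blocked automatically because rounding $x_i$ down to $x^{(j)}_i$ costs $v$ proportionally to the (possibly enormous) coordinate $v_{i^*}$, which dwarfs any price discount $q-q^{(j)}=O(\varepsilon\thresh)$. Drifts with $j=i^*$ must be tracked through the utility chain $u_{f^{(j)}}(v)\ge u_{e^{(i^*)}}(v)$, $u_e(v)\ge u_f(v)$, and the explicit form of $p^{(i)}$, in order to bound $p-q^{(j)}$ (or $p-q$) by $O(\varepsilon)$ so that the per-buyer revenue loss remains $O(\varepsilon^2)$. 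This careful bookkeeping, together with the care needed to keep the expensive/cheap boundary well-separated via the multiplicative discount, is the technically delicate part of the step.
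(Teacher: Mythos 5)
Your construction is essentially the paper's: discount expensive entries multiplicatively by $(1-\varepsilon)$, replace each cheap entry by $n$ discretizations that each preserve one allocation coordinate while rounding the other $n-1$ coordinates and the discounted price down to a grid with step sizes $\approx\varepsilon^2/(n\bound)$ and $\approx\varepsilon^2/n$, prune to one representative per grid signature for finiteness, and run the utility chain through the coordinate-$i^*$-preserving discretization to bound how far a buyer can drift downward in price. The only substantive variation is that you pre-subtract $(n-1)\delta_x\bound$ from the price before rounding, whereas the paper rounds $(1-\varepsilon)p$ directly and absorbs the allocation-rounding loss into the inequality; both work because $(n-1)\delta_x\bound+\delta_p\le\varepsilon^2$, and in fact your version yields exactly the same bound of $(1-\varepsilon)p-\varepsilon$ per buyer type. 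One quantitative imprecision worth fixing: the price a buyer pays in $\mech'$ is at least $(1-\varepsilon)p-\varepsilon$, not $(1-\varepsilon)p-O(\varepsilon^2)$, and likewise the ``per-buyer revenue loss'' in the expensive-drift case is $O(\varepsilon)$, not $O(\varepsilon^2)$ (otherwise the multiplicative $(1-\varepsilon)$ in the conclusion would be superfluous); this does not break the argument since the lemma only requires the additive $-\varepsilon$.
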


\begin{proof}
We start by defining an ``interim'' mechanism $\mech''$ that will help us define the required mechanism~$\mech'$. While $\mech''$ may contain infinitely many menu entries, we show below that w.l.o.g.\ only finitely many of them are in fact chosen by any buyer type; consequently, we will derive the required mechanism~$\mech'$ by defining it to be the mechanism offering precisely these menu entries.

\begin{sloppypar}
Let $X\eqdef\bigl\lceil\frac{n\cdot\bound}{\varepsilon^2}\bigr\rceil$ and $P\eqdef\bigl\lceil\frac{n\cdot (1-\varepsilon)\cdot\thresh}{\varepsilon^2}\bigr\rceil$.
Let $\chi\eqdef\nicefrac{1}{X}\le\frac{\varepsilon^2}{n\cdot\bound}$ and $\psi\eqdef\frac{(1-\varepsilon)\cdot\thresh}{P}\le\frac{\varepsilon^2}{n}$.
For every $\delta>0$ and for every real number $r\in\RR$, we denote the rounding-down of $r$ to the ``$\delta$-grid'' by
\[
\lfloor r\rfloor_{\delta}\eqdef\delta\cdot\bigl\lfloor\nicefrac{r}{\delta}\bigr\rfloor.
\]
We furthermore denote the coordinate-wise rounding-down of every vector $\vec{r}=(r_1,\ldots,r_n)\in\RRn$ to the $\delta$-grid by $\lfloor\vec{r}\rfloor_{\delta}\eqdef\bigl(\lfloor r_1\rfloor_{\delta},\lfloor r_2\rfloor_{\delta},\ldots,\lfloor r_n\rfloor_{\delta}\bigr)$. We construct a new IC and IR mechanism $\mech''$ as follows:
\begin{itemize}
\item
For every menu entry $e=(\vec{x};p)\in\mech$ with $p\le\thresh$, we define $p'\eqdef (1-\varepsilon)\cdot p$ (the price of $e$, after a slight multiplicative discount of $(1-\varepsilon$)), and add the following $n$ menu entries to $\mech''$:
$\bigl(x_1,(\lfloor\vec{x}\rfloor_{\chi})_{-1};\lfloor p'\rfloor_{\psi}\bigr), \bigl(x_2,(\lfloor\vec{x}\rfloor_{\chi})_{-2};\lfloor p'\rfloor_{\psi}\bigr),\cdots,\bigl(x_n,(\lfloor\vec{x}\rfloor_{\chi})_{-n};\lfloor p'\rfloor_{\psi}\bigr)$. Each of these menu entries is a modified version of a discounted $e$ (i.e., $e$ with the price modified to $p'$) that rounds down all but one of the allocation probabilities to the $\chi$-grid, and rounds down the (discounted) price to the $\psi$-grid.
\item
For every menu entry $e=(\vec{x};p)\in\mech$ with $p>\thresh$, we add the menu entry $(\vec{x};p')$ to $\mech''$, where once again $p'\eqdef(1-\varepsilon)\cdot p$. This menu entry is a discounted version of~$e$ by the same slight multiplicative discount of $(1-\varepsilon)$ as above (but without any rounding).
\end{itemize}
\end{sloppypar}
By slight abuse of notation, we write $\lfloor r\rfloor_{0}\eqdef r$ for every $r\in\RR$, and $\lfloor\vec{r}\rfloor_{0}\eqdef r$ for every $\vec{r}\in\RRn$. Using this notation, we note that every menu entry that we have added to $\mech''$ is of the form $\bigl(x_i,(\lfloor\vec{x}\rfloor_{\chi'})_{-i};\lfloor(1-\varepsilon)\cdot p\rfloor_{\psi'}\bigr)$, for some $(\vec{x};p)\in\mech$, $i\in[n]$, and $(\chi',\psi')\in\bigl\{(\chi,\psi),(0,0)\bigr\}$.
Finally, we define $\mech''$ to be the closure of the set of menu entries added above to $\mech''$.\footnote{As before, taking the closure ensures that a utility-maximizing entry with maximal price exists for every buyer type. See \cref{closure} for a more details.}
We note that since $\mech$ is IR, it contains the menu entry $(\vec{0};0)$. Therefore (as $\lfloor 0\rfloor_{\delta}=0$ for every $\delta$), $\mech''$ also contains this menu entry, and hence $\mech''$ is IR as well.

We claim that $\Rev_{\mech''}(F) \ge (1-\varepsilon)\cdot\Rev_{\mech}(F)-\varepsilon$. Indeed, let us compare the payments that $\mech''$ and $\mech$ extract from a buyer of each type $v=(v_1,\ldots,v_n)\in\EU$.  Let $e=(\vec{x};p)\in\mech$ be the menu entry of choice of $v$ from $\mech$.
As $v\in\EU$, there exists $i\in[n]$ s.t.\ $v_k\le\bound$ for every $k\in[n]\setminus \{i\}$.
Let $e'\eqdef\bigl(x_i,(\lfloor\vec{x}\rfloor_{\chi'})_{-i};\lfloor(1-\varepsilon)\cdot p\rfloor_{\psi'}\bigr)\in\mech''$ be the menu entry in $\mech''$ corresponding to $e$ that does not round the allocation probability of item~$i$ (if $p\le\thresh$, then $(\chi',\psi')=(\chi,\psi)$; otherwise, $(\chi',\psi')=(0,0)$).
We claim that $v$ weakly prefers~$e'$ to all menu entries $f'=\bigl(y_j,(\lfloor\vec{y}\rfloor_{\chi''})_{-j};\lfloor(1-\varepsilon)\cdot q\rfloor_{\psi''}\bigr)\in\mech''$ with\footnote{As we show weak preference and as $q$ is defined via a strict inequality, by continuity of the utility function the correctness of the claim for all $f'$ before taking the closure of $\mech''$ implies its correctness for all $f'$ in the closure as well.} $q < p - \varepsilon$ (where $j\in[n]$, $f=(\vec{y};q)\in \mech$, and $(\chi'',\psi'')\in\bigl\{(\chi,\psi),(0,0)\bigr\}$). Indeed, we have that
\begin{multline*}
u_{e'}(v) = v_i \cdot x_i + \smashoperator[r]{\sum_{k\in[n]\setminus\{i\}}} v_k\cdot \lfloor x_k\rfloor_{\chi'} - \lfloor(1-\varepsilon)\cdot p\rfloor_{\psi'} \ge
\sum_{k=1}^n v_k\cdot x_k - (n-1)\cdot\bound\cdot\chi' - (1-\varepsilon)\cdot p = \\
= u_e(v) - (n-1)\cdot\bound\cdot\chi' + \varepsilon\cdot p \ge
u_f(v) - (n-1)\cdot\bound\cdot\chi' + \varepsilon\cdot p =
\sum_{k=1}^n v_k\cdot y_k - q - (n-1)\cdot\bound\cdot\chi' + \varepsilon\cdot p \ge \\
\ge v_j\cdot y_j + \smashoperator[r]{\sum_{k\in[n]\setminus\{j\}}} v_k\cdot\lfloor y_k\rfloor_{\chi''} - \lfloor(1-\varepsilon)\cdot q\rfloor_{\psi''} - \varepsilon\cdot q - \psi'' - (n-1)\cdot\bound\cdot\chi' + \varepsilon\cdot p = \\
= u_{f'}(v) - \varepsilon\cdot q - \psi'' - (n-1)\cdot\bound\cdot\chi' + \varepsilon\cdot p =
u_{f'}(v) + \varepsilon\cdot (p-q) - \psi'' - (n-1)\cdot\bound\cdot\chi' \ge \\
\ge u_{f'}(v) + \varepsilon\cdot (p-q) - \tfrac{\varepsilon^2}{n} - (n-1)\cdot\tfrac{\varepsilon^2}{n} =
u_{f'}(v) + \varepsilon\cdot (p-q) - \varepsilon^2 >
u_{f'}(v) + \varepsilon^2 - \varepsilon^2 =
u_{f'}(v).
\end{multline*}
Therefore, the price of the menu entry chosen by $v$ from $\mech''$ is at least $(1-\varepsilon)(p-\varepsilon)-\psi \ge (1-\varepsilon)\cdot(p-\varepsilon) - \varepsilon^2=
(1-\varepsilon)\cdot p - \varepsilon$, and so the payment extracted from a buyer of type $v$ in~$\mech''$ compared to $\mech$ decreases by at most a multiplicative factor of $(1-\varepsilon)$ followed by an additive decrease of at most $\varepsilon$. Overall, we therefore obtain that $\Rev_{\mech''}(F) \ge (1-\varepsilon)\cdot\Rev_{\mech}(F)-\varepsilon$, as claimed.

As mentioned above, while $\mech''$ may contain infinitely many menu entries, we now show that w.l.o.g.\ only finitely many of these menu entries are in fact chosen by any buyer type, and define the mechanism~$\mech'$ to be the mechanism offering precisely these menu entries.
Let $L\eqdef\bigcup_{i\in[n]}\bigl(\{i\}\times\{0,\chi,{2\cdot\chi},\ldots,X\cdot\chi\}^{[n]\setminus\{i\}}\bigr)\times\bigl\{0,\psi,2\cdot\psi,\ldots,P\cdot\psi\bigr\}\subset\bigcup_{i=1}^n\bigl(\{i\}\times[0,1]^{[n]\setminus\{i\}}\bigr)\times[0,\thresh]$.
By definition, every menu entry in $\mech''$ that costs at most $(1-\varepsilon)\cdot\thresh$ is of the form $(x_i,\vec{x}_{-i};p)$ for some $(i,\vec{x}_{-i},p)\in L$ and $x_i\in[0,1]$. For every $(i,\vec{x}_{-i},p)\in L$, we set $S_{(i,\vec{x}_{-i},p)} \eqdef \bigl\{x_i ~\big|~ (x_i,\vec{x}_{-i};p)\in\mech''\bigr\}\subseteq[0,1]$, and if $S_{(i,\vec{x}_{-i},p)}\ne\emptyset$, we also set\footnote{Since $\mech''$ is closed, $S_{(i,\vec{x}_{-i},p)}$ is compact.} $s_{(i,\vec{x}_{-i},p)}\eqdef\Max S_{(i,\vec{x}_{-i},p)}$. We define a new IC and IR mechanism $\mech'$ as follows:
\begin{multline*}
\mech'\eqdef \Bigl\{\bigl(s_{(i,\vec{x}_{-i},p)},\vec{x}_{-i};p\bigr) ~\Big|~ (i,\vec{x}_{-i},p)\in L \And S_{(i,\vec{x}_{-i},p)} \ne \emptyset \Bigr\} ~\cup~ \\*
\Bigl\{\bigl(\vec{x};(1-\varepsilon)\cdot p\bigr) ~\Big|~ (\vec{x};p)\in\mech \And p>\thresh\Bigr\}.
\end{multline*}
By definition, $\mech'$ is a subset of $\mech''$, obtained by removing from $\mech''$ only menu entries that w.l.o.g.\ no buyer type chooses. (Indeed, a buyer type that chooses from~$\mech''$ some entry $(x;p)\in\mech''$ weakly prefers $(s_{(i,\vec{x}_{-i},p)},\vec{x}_{-i};p)$ to $(x_i,\vec{x}_{-i};p)=(x;p)$, since by definition $x_i\in S_{(i,\vec{x}_{-i},p)}$. As both of these entries have the same price, we can assume w.l.o.g.\ that this buyer type in fact chooses $\bigl(s_{(i,\vec{x}_{-i},p)},\vec{x}_{-i};p\bigr)$ from $\mech''$.) Therefore, every buyer type chooses from $\mech'$ the same menu entry (or at least a menu entry with the same price) as from~$\mech''$, and so $\Rev_{\mech'}(F) = \Rev_{\mech''}(F) \ge (1-\varepsilon)\cdot\Rev_{\mech}(F)-\varepsilon$.

We conclude the proof by noting that, as required, there are fewer than $|L| = n\cdot(X+1)^{n-1}\cdot (P+1) = n\cdot \bigl\lceil\frac{n\cdot\bound}{\varepsilon^2}+1\bigr\rceil^{n-1}\cdot\bigl\lceil\frac{n\cdot(1-\varepsilon)\cdot\thresh}{\varepsilon^2}+1\bigr\rceil$ menu entries (not including $(\vec{0};0)$) that cost at most $(1-\varepsilon)\cdot\thresh$ in~$\mech'$, and that the menu entries that cost more than $(1-\varepsilon)\cdot\thresh$ in $\mech'$ are precisely the menu entries that cost more than $\thresh$ in $\mech$, each given a multiplicative price discount of $(1-\varepsilon)$.
\end{proof}

\subsection{Connecting the Dots}\label{connect-dots-sec}

We are now ready to ``connect the dots'' and use \cref{no-two-high,expensive-exc,expensive-trim,cheap-disc} to prove the qualitative version of \cref{cne-upper-bound}, which states that $C(n,\varepsilon)$ is finite for every number of items~$n\in\NN$ and $\varepsilon>0$, and moreover, to prove a weaker version of the quantitative version of that theorem, i.e., that~$C(n,\varepsilon)\le(\nicefrac{n}{\varepsilon})^{O(n)}$.

\begin{proof}
Let $n\in\NN$, let $\varepsilon\in(0,1)$, and let $F=F_1\times\cdots\times F_n\in\Delta(\RR)^n$. If $n=1$, then by the theorem of \cite{Myerson81}, we are done.
If $\Max_{i\in[n]} \Rev(F_i)=\infty$ (i.e., if there exists~$i\in[n]$ s.t.\ $\Rev(F_i)=\infty$), then by the same theorem of \cite{Myerson81}, we are done in this case as well, as a single menu entry suffices to get arbitrarily high revenue;\footnote{By convention, $(1-\varepsilon)\cdot\infty=\infty$.} otherwise, by \cref{sum-rev}, $\Rev(F)<\infty$.
If $\Max_{i\in[n]} \Rev(F_i)=0$, then the valuation of each item is $0$ with probability $1$ and so ${\Rev(F)=0}$ and we are done as well. Assume, therefore, that $\Max_{i\in[n]} \Rev(F_i)\in(0,\infty)$ and
set $\epst\eqdef\nicefrac{\varepsilon}{6}$.
By scaling the currency we assume w.l.o.g.\ that $R\eqdef\Max_{i\in[n]} \Rev(F_i)={(1-\epst)^{-5}}< (\nicefrac{6}{5})^5 < \nicefrac{5}{2}$. (This indeed is w.l.o.g.\ as our goal is to prove a multiplicative approximation.)

Let $\bound\eqdef \frac{2\cdot n\cdot(n-1)\cdot R}{\epst}$, and let $\thresh\eqdef\frac{4\cdot(n-1)\cdot\bound}{\epst^2}$. By \crefpart{no-two-high}{well-defined}, $F(\EU)>0$ and so $F|_{\EU}$ is well defined. Furthermore (e.g., by \crefpart{cover-rev}{sub}), $F(\EU)\cdot\Rev(F|_{\EU})\le\Rev(F)<\infty$, and so, as $F(\EU)>0$, we have that $\Rev(F|_{\EU})<\infty$. Therefore, by definition of $\Rev$, there exists an IC and IR $n$-item mechanism $\mech$ s.t.\ $\Rev_{\mech}(F|_{\EU})\ge(1-\epst)\cdot\Rev(F|_{\EU})$. By \cref{expensive-exc}, there exists an $\thresh$-exclusive IC and IR $n$-item mechanism $\mech'$ s.t.\ $\Rev_{\mech'}(F|_{\EU})\ge(1-\epst)\cdot\Rev_{\mech}(F|_{\EU})$. By \cref{expensive-trim}, there exists an IC and IR $n$-item mechanism $\mech''$ with at most $2n$ menu entries that cost more than $\thresh$, s.t.\ $\Rev_{\mech''}(F|_{\EU})\ge(1-\epst)\cdot\Rev_{\mech'}(F|_{\EU})$. By \cref{cheap-disc}, there exists an IC and IR $n$-item mechanism $\mech'''$ with at most $n\cdot \bigl\lceil\frac{n\cdot\bound}{\epst^2}+1\bigr\rceil^{n-1}\cdot\bigl\lceil\frac{n\cdot(1-\epst)\cdot\thresh}{\epst^2}+1\bigr\rceil$ menu entries that cost at most $(1-\epst)\cdot\thresh$ and (by definition of $\mech''$) at most $2n$ menu entries that cost more than $(1-\epst)\cdot\thresh$, s.t.\ $\Rev_{\mech'''}(F|_{\EU})\ge(1-\epst)\cdot\Rev_{\mech''}(F|_{\EU})-\epst$.
By definition of $\mech'''$, $\mech''$, $\mech'$, and~$\mech$, we have that
\begin{multline*}
\Rev_{\mech'''}(F|_{\EU})\ge
(1-\epst)\cdot\Rev_{\mech''}(F|_{\EU})-\epst\ge
(1-\epst)^2\cdot\Rev_{\mech'}(F|_{\EU})-\epst\ge \\
\ge (1-\epst)^3\cdot\Rev_{\mech}(F|_{\EU})-\epst\ge
(1-\epst)^4\cdot\Rev(F|_{\EU})-\epst\ge
(1-\epst)^5\cdot\Rev(F|_{\EU}),
\end{multline*}
where the last inequality is since $\Rev(F|_{\EU})\ge(1-\epst)\cdot\Rev(F) \ge (1-\epst)\cdot R = (1-\epst)^{-4}$ by \crefpart{no-two-high}{rev}.
Therefore, by \crefpart{no-two-high}{revm}, we have that \[
\Rev_{\mech'''}(F)\ge(1-\epst)^6\cdot\Rev(F) > (1-6\cdot\epst)\cdot\Rev(F)=(1-\varepsilon)\cdot\Rev(F).
\]

We conclude the proof by noting that the number of menu entries (not including $(\vec{0};0)$) in $\mech'''$ is less than 
\begin{multline*}
n\cdot \left\lceil\frac{n\cdot\bound}{\epst^2}+1\right\rceil^{n-1}\cdot\left\lceil\frac{n\cdot(1-\epst)\cdot\thresh}{\epst^2}+1\right\rceil+2n= \\
=n\cdot \left\lceil\frac{2\cdot n^2\cdot(n-1)\cdot R}{\epst^3}+1\right\rceil^{n-1}\cdot\left\lceil\frac{8\cdot n^2\cdot(n-1)^2\cdot(1-\epst)\cdot R}{\epst^5}+1\right\rceil+2n\le \\
\le n\cdot \left\lceil\frac{5\cdot n^2\cdot(n-1)}{\epst^3}+1\right\rceil^{n-1}\cdot\left\lceil\frac{20\cdot n^2\cdot(n-1)^2\cdot(1-\epst)}{\epst^5}+1\right\rceil+2n \le
(\nicefrac{n}{\varepsilon})^{O(n)}.\tag*{\qedhere}
\end{multline*}
\end{proof}

\subsection{A Stronger Upper Bound}\label{tighter}

The proof of \cref{cne-upper-bound} (in its full strength) strengthens the upper bound on~$C(n,\varepsilon)$ from $(\nicefrac{n}{\varepsilon})^{O(n)}$ to $\bigl(\frac{\log n}{\varepsilon}\bigr)^{O(n)}$, but at a cost of a more conceptually elaborate presentation. As this stronger bound carries the same qualitative message and is still exponential in $n$, we have decided, as noted above, to first present the proof of the weaker upper bound of $(\nicefrac{n}{\varepsilon})^{O(n)}$. We note that the only change required to prove this stronger bound is to \cref{cheap-disc}, where by following along the same lines as the proof above, but utilizing discretization techniques of \cite{hart-nisan-b} \cite[see also][]{DLN14} to discretize to a more carefully chosen logarithmic-sized grid --- which is achievable by not only rounding but also ``compensating'' between coordinates --- one may show an improved upper bound of $\bigl(\frac{\log n}{\varepsilon}\bigr)^{O(n)}$ on the number of menu entries into which the cheap part of the menu can be discretized in \cref{cheap-disc}. (Note that other than for the cheap part of the menu, the remaining \lcnamecrefs{expensive-trim} show that we only need at most~$2n$ more entries for the expensive part of the menu, and therefore no modification is required to these \lcnamecrefs{expensive-trim}, which either way contain most of the conceptual and technical ``beef'' of the proof, in order to show the stronger upper bound.) The statement of this stronger version of \cref{cheap-disc} is given as \cref{cheap-disc-log} in \cref{logn}, along with its full proof.

\subsection{Arbitrary Exclusively Unbounded Type Distributions}\label{exclusively-unbounded-sec}

We emphasize once more that in contrast to the first step (\cref{no-two-high}) of the proof of \cref{cne-upper-bound}, the remaining three steps (i.e., \cref{expensive-exc,expensive-trim,cheap-disc}, as well as the stronger version of the latter, \cref{cheap-disc-log}) hold for any exclusively unbounded type distribution, and not merely for one obtained by conditioning product distributions upon~$\EU$. Therefore, these steps allow us to derive results also for such distributions. Indeed, a proof similar to that of \cref{cne-upper-bound} (only without using \cref{no-two-high}) yields:

\begin{proposition}\label{upper-bound-correlated}
For every $n\in\NN$, $\bound\in\RR$, and $\varepsilon>0$, there exists  $C=C(n,\bound,\varepsilon)\le\bigl(\frac{\log n+\log \bound}{\varepsilon}\bigr)^{O(n)}$ such that for every (possibly even highly correlated) $F\in\Delta(\EU)$, we have that
$\Rev_C(F) \ge {(1-\varepsilon) \cdot \Rev(F) - \varepsilon}$.
\end{proposition}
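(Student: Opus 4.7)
The plan is to mirror the proof of \cref{cne-upper-bound}, except that we skip its first step (\cref{no-two-high}) entirely, since we are already given an exclusively unbounded distribution; the remaining three lemmas (\cref{expensive-exc,expensive-trim,cheap-disc}) were, by design, phrased for arbitrary $F\in\Delta(\EU)$ and so apply verbatim. Because we no longer have the normalization of $\Max_i\Rev(F_i)$ available to convert the additive slack of \cref{cheap-disc} into a multiplicative one, this slack will survive into the final bound --- which is precisely why the statement permits an extra additive~$\varepsilon$.

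Concretely, I would set $\epst\eqdef\varepsilon/4$ (so that four compounded multiplicative $(1-\epst)$-losses still yield at least a $(1-\varepsilon)$-fraction, and a single additive $\epst$-loss is at most $\varepsilon$), and pick
\[
\thresh\eqdef\Max\Bigl\{\tfrac{4(n-1)\bound}{\epst^2},\;n\bound+2,\;\tfrac{(n-1)\bound+1}{\epst}\Bigr\},
\]
which is $\poly(n,\bound,\nicefrac{1}{\varepsilon})$ and simultaneously satisfies the $\thresh$-hypotheses of both \cref{expensive-exc} and \cref{expensive-trim}. After disposing of the degenerate case $\Rev(F)=\infty$ (a cover-style argument on the regions $\{v:v_i>\bound\}$ forces at least one to contribute infinite revenue, whence a single Myerson take-it-or-leave-it on item $i$ achieves arbitrarily large revenue with $C=1$), I would then pick any IC and IR auction $\mech_0$ with $\Rev_{\mech_0}(F)\ge(1-\epst)\Rev(F)$, apply \cref{expensive-exc} to produce an $\thresh$-exclusive $\mech_1$, apply \cref{expensive-trim} to produce $\mech_2$ whose expensive part has at most $2n$ entries, and apply \cref{cheap-disc} to discretize the cheap part into the final $\mech_3$. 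The error bounds compose to $\Rev_{\mech_3}(F)\ge(1-\epst)^4\Rev(F)-\epst\ge(1-\varepsilon)\Rev(F)-\varepsilon$.

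The menu-size count follows immediately from \cref{cheap-disc}, which produces fewer than $n\cdot\bigl\lceil\tfrac{n\bound}{\epst^2}+1\bigr\rceil^{n-1}\cdot\bigl\lceil\tfrac{n(1-\epst)\thresh}{\epst^2}+1\bigr\rceil$ cheap entries on top of the $2n$ expensive entries preserved from \cref{expensive-trim}; substituting the chosen $\thresh$ and $\epst=\Theta(\varepsilon)$ yields $\bigl(\tfrac{n\bound}{\varepsilon}\bigr)^{O(n)}$, as required. Since the three component lemmas do all the heavy lifting, there is no genuinely hard step; the only items to watch are bookkeeping (verifying that the chosen $\thresh$ meets every hypothesis at once and that the multiplicative-plus-additive errors compound cleanly) and the observation that a purely multiplicative bound, as in \cref{cne-upper-bound}, is out of reach here because without independence we have no natural normalization of $\Rev(F)$ against which to hide the additive slack produced by \cref{cheap-disc}.
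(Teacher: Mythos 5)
Your plan matches the paper's intended proof exactly: the paper itself only remarks that ``a proof similar to that of \cref{cne-upper-bound} (only without using \cref{no-two-high}) yields'' the proposition, and you have correctly filled in the details --- skipping Step~1, choosing $\thresh$ as the maximum of the hypotheses of \cref{expensive-exc,expensive-trim}, composing $(1-\epst)^4\Rev(F)-\epst\ge(1-\varepsilon)\Rev(F)-\varepsilon$ with $\epst=\varepsilon/4$, and reading off the menu-size count from \cref{cheap-disc,expensive-trim}. You are also right about the conceptual reason the additive $\varepsilon$ cannot be removed: with a correlated $F$ there is no analogue of the normalization $\Max_i\Rev(F_i)=\Theta(1)$ that the main theorem uses to convert the additive slack of \cref{cheap-disc} into a multiplicative one.

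One small gap: your handling of the degenerate case $\Rev(F)=\infty$ asserts that if the region $A_i=\{v:v_i>\bound\}$ contributes infinite revenue then ``a single Myerson take-it-or-leave-it on item~$i$ achieves arbitrarily large revenue with $C=1$.'' Without independence this step is not immediate. From $\Rev(F|_{A_i})=\infty$ and $v_j\le\bound$ for $j\ne i$ on $A_i$, the welfare bound only gives $\expect{v\sim F|_{A_i}}{v_i}=\infty$, and infinite expectation does not imply infinite Myerson revenue for the marginal on coordinate~$i$ (e.g., a tail $\prob{}{v_i\ge t}=1/t$ has infinite mean but Myerson revenue~$1$). Showing that the single-item revenue on coordinate~$i$ is itself infinite essentially requires the argument of \cref{expensive-exc} (discount to prevent deviation to cheaper bundles), so as stated it is circular/unjustified. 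Fortunately this case can be dispatched more simply: take any IC and IR $\mech_0$ with $\Rev_{\mech_0}(F)\ge M$ for arbitrary finite $M$, feed it through \cref{expensive-exc,expensive-trim,cheap-disc} to obtain $\mech_3$ with at most $C$ entries and $\Rev_{\mech_3}(F)\ge(1-\epst)^3 M-\epst$. Since the bound $C$ produced by the lemmas depends only on $n,\bound,\varepsilon$ and not on $\mech_0$ or $M$, letting $M\to\infty$ gives $\Rev_C(F)=\infty$ directly, and the proposition holds trivially. With that substitution the proof is complete.
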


\cref{upper-bound-correlated} states that a menu size of $\bigl(\frac{\log n+\log \bound}{\varepsilon}\bigr)^{O(n)}$ guarantees\footnote{Once again, following the proof above would give a weaker upper bound of $\bigl(\frac{n\cdot\bound}{\varepsilon}\bigr)^{O(n)}$, while replacing \cref{cheap-disc} with \cref{cheap-disc-log} from \cref{logn} gives the stronger upper bound.} a multiplicative-and-then-additive (with at most an $\varepsilon$ loss in each) approximation to the optimal revenue from any exclusively unbounded type distribution. If the $L^1$-norm of the support of this distribution is furthermore bounded away from zero, then so is the optimal revenue (since the optimal revenue is at least the lower bound on that norm, as the seller could price the grand bundle of all items at that lower bound and sell with probability~$1$), and so in this case the additive approximation from \cref{upper-bound-correlated} is absorbed into the multiplicative approximation, guaranteeing a purely multiplicative approximation of at least $(1-\varepsilon)$ for the revenue. (For example, this applies to any distribution over $\RR \times [1,\bound]^{n-1}$.) \cref{upper-bound-correlated} therefore generalizes a similar result by \cite{DLN14} \citep[which followed][who showed it for two items and gave a weaker result for more items]{hart-nisan-b} that shows the same upper bound of $\bigl(\frac{\log n + \log H}{\varepsilon}\bigr)^{O(n)}$ for guaranteeing a multiplicative approximation of at least $(1-\varepsilon)$, but only for distributions where the valuations of \emph{all} items are bounded, i.e., distributions over $[1,\bound]^n$.

Finally, we note that an exclusively unbounded valuation space is essentially the maximal valuation space for which a result along the lines of \cref{upper-bound-correlated} can be shown, since \cite{hart-nisan-b} show that for two items with unbounded valuations (and as a result, for any number of items where the valuations of at least two items are unbounded), no finite menu size can guarantee any fixed fraction of the optimal revenue. \cref{upper-bound-correlated} therefore shows that the use of distributions where the valuations of the two items grow arbitrarily large together \citep[like the distribution used in the proof of][]{hart-nisan-b} cannot be avoided in any proof of this impossibility result.


\section{A Small Menu for Item Pricing}\label{item-pricing}

In this \lcnamecref{item-pricing}, we prove \cref{poly-approx-srev}, which states that for every $\varepsilon>0$, there exists $d=d(\varepsilon)$ such that for every number of items $n\in\NN$
and $F_1,F_2,\ldots,F_n\in\Delta(\RR)$, we have for $C=n^d$ that
$\Rev_C(F_1 \times \cdots \times F_n) \ge (1-\varepsilon) \cdot \SRev(F_1 \times \cdots \times F_n)$. Concretely, we show that the \lcnamecref{poly-approx-srev} holds for $d\le O(\varepsilon^{-5})$.

\begin{definition}[$\SRev$]
Let $n\in\mathbb{N}$ be a number of items and let $F=F_1\times F_2\times\cdots\times F_n\in\Delta(\RR)^n$ be a product distribution over~$\RRn$. We denote the (expected) revenue obtainable from $F$ by selling each of the $n$ items separately via a revenue-maximizing mechanism by
\[\SRev(F)=\SRev(F_1\times F_2\times\cdots\times F_n)\eqdef\sum_{i=1}^n\Rev(F_i).\]
\end{definition}

\begin{remark}\label{srev-menu-size}
By the theorem of \cite{Myerson81}, the menu size (not including $(\vec{0};0)$) of the mechanism obtaining revenue $\SRev(F)$ by selling each of the $n$ items separately via a revenue-maximizing mechanism is at most $2^n-1$.
\end{remark}

The main idea underlying the proof of \cref{poly-approx-srev}, which we give in full detail below, is that instead of using $n$ separate mechanism (one for each item), which may result in an exponential-size menu, we use exponentially fewer separate mechanisms, which result in a polynomial-size menu. Recall that for every item $i$, by the theorem of \cite{Myerson81}, the optimal separate-selling revenue $\Rev(F_i)$ can be obtained via a take-it-or-leave-it offer for selling item $i$ (with probability~$1$) for a certain price $c_i$. Let $p_i$ be the probability item $i$ is sold (if it is offered for the price $c_i$). The separate mechanisms that we use are as follows:
\begin{itemize}
\item
For each item for which $c_i$ is very small (say, at most an $\nicefrac{\varepsilon}{n}$ fraction) compared to $\SRev(F)$, the optimal (separate-selling) revenue $\Rev(F_i)$ is very small compared to $\SRev(F)$ as well, and therefore the sum of the optimal revenues from all of these items is small compared to $\SRev(F)$. Therefore, we allow ourselves to not sell any of these items at all (or alternatively, give them to the buyer for free).
\item
For each item for which $c_i$ is very large (say, by at least a factor of $\nicefrac{n}{\varepsilon}$) compared to $\Rev(F_i)$, the probability that the item is sold (when selling items separately) is small. Hence, the probability that two or more of these items are sold is small. Therefore, we can afford to allow the buyer to buy at most only one of these items (for the same price $c_i$ as when selling each item separately) without incurring a significant loss in revenue. We therefore offer the buyer at most $n+1$ choices for these items.
\item
We partition the remaining items (those with ``nonextreme'' values of $c_i$) into $O(\log n)$ many bundles, each to be offered (via a separate mechanism) to the buyer at a take-it-or-leave-it price described below, where the ratio between the (optimal separate-selling) prices of any two items in a single bundle is small. Moreover, we show that this can be done s.t.\ each such bundle has $\sum p_i$ either very large or very small.
\begin{itemize}
\item
For each bundle with $\sum p_i$ very large, we show that the buyer's valuation of the bundle is tightly concentrated, allowing us to extract almost all of this valuation by offering the bundle for a price slightly below the expectation of this valuation.
\item
For each bundle with $\sum p_i$ very small, not unlike the case of the high-costing items above, we show that the probability of two or more of the items in the bundle being sold (when selling separately) is small. Since the prices of all of these items are similar, instead of allowing the buyer to buy at most one of these items (for its separate-selling price $c_i$) as in the case of the high-costing items above (thereby offering as many choices to the buyer for these items as there are items in the bundle), we simply offer the entire bundle for the cheapest (optimal separate-selling) price of any of the items in the bundle. We can afford to do so without incurring a significant loss in revenue since all of the items in the bundle have similar prices.
\end{itemize}
As each of these $O(\log n)$ many bundles is offered via a take-it-or-leave-it price, we offer the buyer altogether at most $\poly(n)$ choices for (buying any subset of) these bundles.
\end{itemize}

\begin{proof}[Proof of \cref{poly-approx-srev}]
Let $\varepsilon\in(0,1)$. We prove the theorem for\footnote{The following constant multiplier is used for ease of presentation, and is far from tight.} $d\eqdef\nicefrac{8194}{\varepsilon^5}$, and note that the mechanism that we construct is deterministic.

Let $n\in\NN$, and let $F=F_1\times F_2\times\cdots\times F_n\in\Delta(\RR)^n$.
For every $i\in[n]$, denote by $r_i\eqdef\Rev(F_i)$ the maximal revenue obtainable from $F_i$. By definition, $\SRev(F)=\SRev({F_1 \times \cdots \times F_n})=\sum_{i=1}^n r_i$.
If $n=1$, then by \cref{srev-menu-size} we are done, as selling the single item ``separately'' (and obtaining revenue $\SRev(F)$) requires at most $1=n^d$ menu entry (not including $(\vec{0};0)$).
If $2\le n<\nicefrac{4}{\varepsilon}$, then we are similarly done, as selling each item separately (and obtaining revenue $\SRev(F)$) requires at most $2^n-1<n^{\nicefrac{4}{\varepsilon}} < n^d$ menu entries. Assume henceforth, therefore, that $n\ge\nicefrac{4}{\varepsilon}$.
If $\Max_{i\in[n]}r_i=\infty$, i.e., if there exists $i\in[n]$ s.t.\ $r_i=\infty$, then we are done as well, as by the theorem of \cite{Myerson81} a single menu entry suffices to get arbitrarily high revenue.\footnote{As already noted above, by convention, $(1-\varepsilon)\cdot\infty=\infty$.}
If $\Max_{i\in[n]}r_i=0$, then there is nothing to prove.
Assume, therefore, that $\Max_{i\in[n]}r_i\in(0,\infty)$. By scaling the currency we assume w.l.o.g.\ that $\Max_{i\in[n]}r_i =1$. (This indeed is w.l.o.g.\ as our goal is to prove a multiplicative approximation.) Therefore, $\SRev(F) \ge 1$ (and also $\SRev(F)\le n$).

For every $i\in[n]$, by the theorem of \cite{Myerson81}, the revenue $r_i$ can be obtained from~$F_i$ via a take-it-or-leave-it offer for selling item $i$ (with probability $1$) for a certain price $c_i$. Let $p_i\eqdef\probge{v_i\sim F_i}{v_i}{c_i}$ be the probability that $c_i$ is accepted. We note that $r_i = p_i \cdot c_i$. As outlined above, instead of using $n$ separate mechanisms (one for each item), which may result in an exponential-size menu, we use exponentially fewer separate mechanisms, which results in a polynomial-size menu. To describe these separate mechanisms, we first partition the $n$ items into \emph{buckets} based on the optimal price $c_i$ for each item $i\in[n]$, and then describe the separate mechanism that we use for the items in each bucket. Set~$\epst\eqdef\nicefrac{\varepsilon}{4}$.
\begin{itemize}
\item
The \emph{low} bucket $L$ includes all items $i\in[n]$ for which $c_i < \nicefrac{\epst}{n}$.
\item
The \emph{high} bucket $H$ includes all items $i\in[n]$ for which $c_i \ge \nicefrac{n}{\epst}$.
\item
Let $m\eqdef\bigl\lceil\log_{1+\epst}\nicefrac{n}{\epst}\bigr\rceil$. We partition the remaining items (i.e., the items that are not already in the low or the high bucket) into $2m$ \emph{regular} buckets, where the ratio of any two prices in each bucket is less than $1+\epst$. Specifically, for each integer $-m\le b<m$, regular bucket $B_b$ includes all items $i\in[n]\setminus(L\cup H)$ such that $(1+\epst)^b \le c_i < (1+\epst)^{b+1}$.
\end{itemize}

We now construct our mechanism by describing the separate mechanism that we hold for the items in each bucket. (As the buyer's valuation is additive, holding a compound mechanism comprised of a number of separate IC and IR mechanisms for pairwise-disjoint sets of items is itself IC and IR.)
\begin{itemize}
\item 
We bundle all of the items in the low bucket $L$ together, and give this bundle to the buyer for free. (Alternatively, we could give any predefined subset of the low bucket to the buyer for free.)
\item
For the high bucket $H$, we offer the buyer the option of purchasing at most one of the items~$i\in H$ (to be chosen by the buyer), for the price $c_i$ of that item.
\item
For a regular bucket $B_b$, we define $\mu_b\eqdef\sum_{i\in B_b}p_B$. We say that $B_b$ is \emph{dense} if $\mu_b>\epst^{-3}$.
\begin{itemize}
\item
If $B_b$ is dense, then we bundle all of the items in the bucket $B_b$ together, and offer this bundle to the buyer for a take-it-or-leave-it price of $(1-\epst)\cdot \mu_b\cdot(1+\epst)^b$.
\item
Otherwise, i.e., if $B_b$ is not dense, we partition the bucket $B_b$ into bundles $B^1_b,B^2_b,\ldots$, s.t.\ for each such bundle $B^j_b$ and for each $i\in B^j_b$, we have that $\sum_{k\in B^j_b\setminus\{i\}}p_k\le\epst$. We offer each such bundle, separately, to the buyer for a take-it-or-leave-it price of $(1+\epst)^b$. (Thus, the mechanism for the items in the bucket $B_b$ is itself comprised of a number of separate mechanisms, one for each bundle.) By \cref{pack-sparse} (see below, after this proof; the \lcnamecref{pack-sparse} is applied after scaling by $\epst$), we have that no more than $\bigl\lceil\epst^{-4}\bigr\rceil$ such bundles are needed when partitioning the bucket~$B_b$.
\end{itemize}
\end{itemize}

Altogether, the number of bundles of items from regular buckets that are offered to the buyer in (separate) take-it-or-leave-it mechanisms is at most
\begin{multline*}
\Bigl|\bigl\{b\in[-m,m-1]\cap\ZZ~\big|~\mu_b>\epst^{-3}\bigr\}\Bigr|+\bigl\lceil\epst^{-4}\bigr\rceil\cdot\Bigl|\bigl\{b\in[-m,m-1]\cap\ZZ~\big|~\mu_b\le\epst^{-3}\bigr\}\Bigr|
\le \\ \le
\bigl\lceil\epst^{-4}\bigr\rceil\cdot2m
=
2\cdot\bigl\lceil\epst^{-4}\bigr\rceil\cdot\bigl\lceil\log_{1+\epst}\nicefrac{n}{\epst}\bigr\rceil
=
2\cdot\bigl\lceil\epst^{-4}\bigr\rceil\cdot\left\lceil\frac{\log_2 n + \log_2(\nicefrac{1}{\epst})}{\log_2(1+\epst)}\right\rceil
\le \\ \le
2\cdot\bigl(\epst^{-4}+1\bigr)\cdot\left(\frac{2\cdot\log_2 n}{\epst}+1\right)
\le
4\cdot\epst^{-4}\cdot\frac{2\cdot\log_2 n}{\epst}
=
\log_2 n\cdot\nicefrac{8}{\epst^5}.
\end{multline*}
Therefore, recalling that $\epst=\nicefrac{\varepsilon}{4}$, the total number of menu entries in the compound mechanism (comprised of the separate mechanisms used for each low, high, or regular bucket, as described above) is less than
\[
1\cdot\bigl(|H|+1\bigr)\cdot2^{\log_2 n\cdot\nicefrac{8}{\epst^5}}
=
\bigl(|H|+1\bigr)\cdot n^{\nicefrac{8}{\epst^5}}
\le
(n+1)\cdot n^{\nicefrac{8}{\epst^5}}
<
n^{\nicefrac{8}{\epst^5}+2}
=
n^{\nicefrac{8192}{\varepsilon^5}+2}
<
n^d.
\]

It remains to analyze the revenue from the compound mechanism, which is, by linearity of expectation, the sum of the revenues from the separate mechanisms held for each of the buckets.

\begin{itemize}
\item 
The revenue from the mechanism used for the low bucket $L$ is obviously $0$. (Note that had each item from $L$ been sold separately, the revenue would have been $\sum_{i\in L}r_i\le\sum_{i \in L}c_i<\sum_{i \in L}\nicefrac{\epst}{n}\le\epst$, and so the loss in revenue is at most an additive~$\epst$.)
\item
We claim that the revenue from the mechanism used for the high bucket $H$ is at least ${(1-\epst)\cdot\sum_{i\in H}r_i}$.
To show this, we note that if the buyer values \emph{exactly} one item $i\in H$ by at least $c_i$, then this item is sold.
As for every $j\in H\setminus\{i\}$ we have that $r_j\le1$ by normalization and that $c_j\ge\nicefrac{n}{\epst}$ by definition of~$H$, we obtain that $p_j=\nicefrac{r_j}{c_j}\ge\nicefrac{\epst}{n}$. Thus, item $i$ is sold with probability at least
\[p_i\cdot\bigl(1-\sum_{j \in H\setminus\{i\}} p_j\bigr) \ge p_i\cdot\left(1-(n-1)\cdot\nicefrac{\epst}{n}\right)>(1-\epst)\cdot p_i.\]
Therefore, the (expected) revenue from the mechanism used for the high bucket is
\[\sum_{i\in H}\prob{v\sim F}{\mbox{item $i$ is sold}}\cdot c_i>\sum_{i\in H}(1-\epst)\cdot p_i\cdot c_i=(1-\epst)\cdot\sum_{i\in H}r_i.\]
(Note that had each item from $H$ been sold separately, the revenue would have been $\sum_{i\in H}r_i$, and so the loss in revenue is at most a multiplicative $(1-\epst)$ factor.)
\item
We claim that the revenue from the mechanism used for a dense regular bucket $B_b$ is at least $(1-\epst)^3\cdot\sum_{i\in B_b}r_i$.
As the price for which we offer the bundle of all of items in this bucket is ${(1-\epst)\cdot \mu_b\cdot(1+\epst)^b}$, it suffices to show that this bundle is sold with probability at least $(1-\epst)$, as this implies that the revenue from this mechanism is at least
\[
(1-\epst)\cdot(1-\epst)\cdot \mu_b\cdot(1+\epst)^b = (1-\epst)^2\cdot\sum_{i\in B_b}p_i \cdot (1+\epst)^b > (1-\epst)^3\cdot\sum_{i\in B_b}p_i \cdot c_i = (1-\epst)^3\cdot\sum_{i\in B_b}r_i.
\]
For every $i\in B_b$, let $X_i$ be an indicator random variable for the buyer valuing item $i$ by at least $c_i$, and so $X_i$ is a Bernoulli
variable taking value 1 with probability $p_i$.  Let $X\eqdef\sum_{i \in b} X_i$ and note that the expectation of $X$ is $\mu_b>\epst^{-3}$. 
Clearly, if $X \ge (1-\epst)\cdot \mu_b$, then the buyer values the bundle by at least $(1-\epst)\cdot \mu_b\cdot\min_{i\in B_b}c_i \ge (1-\epst)\cdot \mu_b\cdot (1+\epst)^b$, and the bundle is sold.
To show that the bundle is sold with probability at least $(1-\epst)$, it therefore suffices to show that $\probl{v\sim F}{X}{(1-\epst)\cdot \mu_b} \le \epst$.
Denoting the standard deviation of $X$ by $\sigma_b$, since $(X_i)_{i\in B_b}$ are independent we have that
$\sigma_b = \sqrt{\sum_{i\in B_b}p_i\cdot(1-p_i)} \le \sqrt{\mu_b}$.
By Chebyshev's inequality, we therefore have that
\[\probl{v\sim F}{X}{(1-\epst)\cdot \mu_b} \le \probl{v\sim F}{X}{\mu_b - (\epst\sqrt{\mu_b})\cdot\sigma_b} \le \frac{1}{\epst^2\cdot\mu_b} < \epst^{-2}\cdot\epst^3 = \epst,\]
as required.
(Note that had each item from $B_b$ been sold separately, the revenue would have been $\sum_{i\in B_b}r_i$, and so the loss in revenue is at most a multiplicative $(1-\epst)^3$ factor,
where the triple loss of $(1-\epst)$ is due once to lower-bounding $c_i$ by $(1+\epst)^b$, a second time to multiplying the requested 
price by $(1-\epst)$ in order to guarantee a sale with high probability, and a third time to the $\epst$ probability of not selling the bundle.)
\item
Finally, we claim that the revenue from the mechanism used for a bundle $B^j_b$ from a nondense regular bucket $B_b$ is at least $(1-\epst)^2\cdot\sum_{i\in B^j_b}r_i$.
As the price that we charge for this bundle is no more than the value of $c_i$ for any item $i\in B^j_b$ in the bundle, we note that, in particular, the bundle is sold if the buyer values some item $i\in B^j_b$ by at least $c_i$. Therefore, the probability that this bundle is sold is at least the probability that the buyer values \emph{exactly} one of the items~$i\in B^j_b$ by at least $c_i$, which, by definition of $B^j_b$, is at least
$\sum_{i \in B^j_b} p_i \cdot \bigl(1-\sum_{k \in B^j_b\setminus\{i\}} p_k\bigr) \ge (1-\epst)\cdot \sum_{i \in B^j_b} p_i$.
Therefore, the revenue from this mechanism is at least
\[
(1-\epst)\cdot\bigl(\sum_{i\in B^j_i}p_i\bigr)\cdot(1+\epst)^b > (1-\epst)^2\cdot\sum_{i\in B^j_b}p_i \cdot c_i = (1-\epst)^2\cdot\sum_{i\in B^j_b}r_i.
\]
(Note that had each item from $B^j_b$ been sold separately, the revenue would have been $\sum_{i\in B^j_b}r_i$, and so the loss in revenue is at most a multiplicative $(1-\epst)^2$ factor,
where the double loss of $(1-\epst)$ is due once to lower-bounding $c_i$ by $(1+\epst)^b$, and a second time to the $\epst$ probability of not selling the bundle.)
\end{itemize}
The total revenue is, therefore, at least
\begin{multline*}
(1-\epst)\cdot\bigl(\sum_{i\in H}r_i\bigr)
+
(1-\epst)^3\cdot\bigl(\smashoperator[r]{\sum_{\substack{i\in B_b :\\ \mu_b>\epst^{-3}}}}r_i\bigr)
+
(1-\epst)^2\cdot\bigl(\smashoperator[r]{\sum_{\substack{i\in B_b :\\ \mu_b\le\epst^{-3}}}}r_i\bigr)
\ge
(1-\epst)^3\cdot\bigl(\sum_{i\in[n]\setminus L}r_i\bigr)\ge \\ \ge
(1-\epst)^3\cdot\bigl(\SRev(F)-\epst\bigr)\ge
(1-\epst)^4\cdot\SRev(F)>
(1-4\cdot\epst)\cdot\SRev(F)=
(1-\varepsilon)\cdot\SRev(F),
\end{multline*}
as required.
\end{proof}

\begin{lemma}\label{pack-sparse}
Let $m\in\RR$, let $p_1,p_2,\ldots,p_m\in\Reals$ be strictly positive numbers, and let $s\eqdef\sum_{i=1}^m p_i$. There exists a partition of $[m]$ into at most $\lceil s\rceil$ sets $B^j$, s.t.\ for every set $B^j$ in the partition, and for every index $i\in B^j$, we have that $\sum_{k\in B^j\setminus\{i\}}p_k\le1$.
\end{lemma}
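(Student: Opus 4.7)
The plan is to construct the partition greedily, processing items in decreasing order of $p_i$. First I would sort so that $p_1 \ge p_2 \ge \cdots \ge p_m$ (WLOG by renaming). I then scan through the items in this order, maintaining a single ``current'' bin: for each $p_i$, if the current bin's running sum is at most $1$, I append $p_i$ to it, and otherwise I close the current bin and open a new one containing only $p_i$.

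To verify the per-bin condition, I observe that the requirement $\sum_{k \in B^j \setminus \{i\}} p_k \le 1$ for every $i \in B^j$ is the same as $\bigl(\sum_{k \in B^j} p_k\bigr) - \min_{k \in B^j} p_k \le 1$ (the worst-case $i$ is the minimizer). Because items enter each bin in decreasing order of $p_i$, the last item appended to $B^j$ is exactly $\min_{k \in B^j} p_k$. The greedy rule only appends this item when the bin's sum was previously $\le 1$, so after appending the new sum is at most $1 + \min_{k \in B^j} p_k$, which is precisely the desired inequality.

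For the bin count, I would show that every bin other than the final one has total sum strictly greater than $1$. Indeed, if some non-last bin $B^j$ had final sum $\le 1$, then when the next item was processed the rule would have added it to $B^j$ rather than opening $B^{j+1}$, a contradiction. Letting $r$ denote the number of bins and summing over $j = 1, \ldots, r-1$ gives $r - 1 < \sum_{j=1}^{r} \sum_{k \in B^j} p_k = s$, so $r < s + 1$, and since $r$ is a positive integer this yields $r \le \lceil s \rceil$.

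I do not anticipate a serious obstacle; the argument is essentially a first-fit-decreasing analysis tailored to the ``sum minus minimum'' condition, and the only mild subtlety is handling the case where a single item already exceeds $1$, in which case that item forms a singleton bin whose per-bin condition reads $0 \le 1$ trivially, and the counting inequality is unaffected because the bin's sum still exceeds $1$. The key conceptual point that makes everything fit together is processing in \emph{decreasing} order, which ensures that the last item added is the minimizer and thereby identifies the greedy stopping rule (``add while the prior sum is $\le 1$'') with the desired per-bin invariant.
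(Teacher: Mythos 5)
Your argument is correct and uses the same greedy construction as the paper: sort the $p_i$ in decreasing order so that the last element added to each bin is its minimum, and fill each bin while the running sum is still at most $1$. The only cosmetic difference is that you count the bins directly (each non-final bin has sum exceeding $1$, so $r-1 < s$), whereas the paper carves off the first bin and applies induction on $\lceil s\rceil$; both give the same bound via the same invariant.
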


\begin{proof}
We prove the claim by induction over $\lceil s\rceil$.
We note that the constraint on each set $B^j$ is equivalent to demanding that $\sum_{k\in B^j\setminus\{i\}}p_k\le 1$ for $i\in\arg\min_{k\in B^j} p_k$.

Assume w.l.o.g.\ that $p_1\ge p_2\ge\cdots\ge p_m$ and
let $\ell\eqdef\Max\bigl\{\ell\in\{0,1,\ldots,m-1\}~\big|~{\sum_{k=1}^{\ell}p_k\le1}\bigr\}\in\{0,1,\ldots,m-1\}$.
We define $B^1\eqdef[\ell+1]\subseteq[m]$. By monotonicity of $(p_k)_{k=1}^m$, we have that $i\eqdef\ell+1\in\arg\min_{k\in B^1} p_k$, and by definition of $\ell$, we have that $\sum_{k\in B^1\setminus\{i\}}p_k\le1$. If $\ell=m-1$ (and so $i=m$), then we are done.
Otherwise, by definition, of $\ell$ we have that $\sum_{k\in B^1}p_k=\sum_{k=1}^{\ell+1}p_k>1$ (and hence $\lceil s\rceil>1$), and so $\bigl\lceil\sum_{k\in[m]\setminus B^1}p_k\bigr\rceil\le\lceil s\rceil-1$. Therefore, by the induction hypothesis, the proof is complete as $[m]\setminus B^1$ can be partitioned into at most $\lceil s\rceil-1$ sets $B^j$ for $j\ge2$ s.t.\ for every set $B^j$ in the partition and for every $i\in B^j$, we have that $\sum_{k\in B^j\setminus\{i\}}p_k\le1$.
\end{proof}

We conclude by noting that in \cref{pack-sparse}, the bound of $\lceil s\rceil$ on the number of sets in the partition is in fact tight. This can be demonstrated by setting $m=2\cdot\lceil s\rceil-1$, and setting $p_1\eqdef p_2\eqdef \cdots\eqdef p_{m-1}\eqdef \nicefrac{1}{2}+\varepsilon$ and $p_m\eqdef s-\sum_{k=1}^{m-1}p_k$, for $\varepsilon>0$ sufficiently small s.t.\ $p_m>0$. Indeed, no set $B^j$ that contains two of the indices $1,\ldots,m-1$ may contain any other index without violating the constraint of the \lcnamecref{pack-sparse}.


\section{Lower Bound on\texorpdfstring{\\}{ }Revenue-Approximation Complexity}\label{lower-bound}

In this \lcnamecref{lower-bound}, we prove \cref{cne-lower-bound}, which states that $C(n, \nicefrac{1}{n}) \ge 2^{\Omega(n)}$.
We consider the simple distribution where the value of each item is (independently) either $0$ or $1$, each with probability~$50\%$. Thus, the buyer is interested in (i.e., has value $>0$ for the items in) a random subset of the items (where each subset is chosen with probability exactly~$2^{-n}$), and has value $1$ for each item in this subset.
A revenue-maximizing mechanism for this setting offers each item, separately, for a take-it-or-leave-it price of $1$; viewed as a menu, this mechanism has a menu entry for each nonempty bundle of items $\emptyset\ne S\subseteq[n]$, which sells this bundle $S$ for a price of $|S|$. This mechanism clearly maximizes the revenue, as it extracts the full social welfare (i.e., expected sum of item valuations), which is~$\nicefrac{n}{2}$, as revenue. An alternative mechanism, which has only a single menu entry, sells the whole bundle for a price that is slightly less than
$\nicefrac{n}{2}$, so that with high probability the size of the set of desired items is at least this price (in which case the bundle is sold).
Choosing the price to be $\nicefrac{n}{2}-\omega(\sqrt{n})$ is required in order for the probability of selling the bundle to be subconstant. Therefore, this simple mechanism loses (about) a $\frac{1}{\sqrt{n}}$ fraction of the revenue. In the proof below, we show that such a $\nicefrac{1}{n^c}$ loss of revenue is necessary in every simple-enough mechanism, i.e., in every mechanism having subexponentially many menu entries.

Before proceeding to the proof, we roughly sketch the intuition underlying it. Consider a mechanism that loses less than, say, a $\frac{1}{10n}$ fraction of the revenue.\footnote{Note that since the optimal revenue is $\nicefrac{n}{2}$, a $\frac{1}{10n}$ fraction of the optimal revenue is $\nicefrac{1}{20}$, so we will in fact also show that when all item distributions are supported on $[0,1]$, to guarantee an additive loss of at most a fixed $\nicefrac{1}{20}$, a menu size that is exponential in $n$ is required.} This means that for most of the $2^n$ possible bundles of desired items, the payment
extracted from a buyer interested in (precisely the items in) that bundle must be higher than the bundle size minus $\nicefrac{1}{2}$.  Let us say that a buyer type that is interested in (precisely the items in) a bundle $S$ \emph{pays full price} if the mechanism extracts a payment higher than $|S|-\nicefrac{1}{2}$ from a buyer of this type.
The basic idea is that, roughly speaking, if a buyer type that is interested in a bundle $S$ pays full price, then this buyer type ``should'' have a menu entry ``for itself'' that allocates $S$ for a price close to $|S|$.
The reason is that two buyer types interested in bundles of different sizes cannot both choose the same entry and both pay full price (as the buyer type interested in the smaller-sized bundle would not want to buy at the ``full price'' of the other buyer). We note that it is true that buyer types interested in identically sized bundles $T\ne S$ can still ``share'' the same menu entry and both pay full price; indeed, this can happen if the entry that they both choose allocates the union $S \cup T$ for a price close to $|S|=|T|$. This, however, would imply that a buyer type interested in $S \cup T$ does not pay full price, since this menu entry already offers a discount for the bundle $S\cup T$.  Our proof formalizes this intuition: if ``too many'' buyer types pay full price and choose the same menu entry (which must be the case with a small menu that extracts almost all of the revenue), then we show that buyer types interested in many subsets or their union cannot pay full price, leading to a significant loss in revenue.

\begin{proof}[Proof of \cref{cne-lower-bound}]
We will prove that $C\bigl(n,\frac{1}{10n}\bigr)>2^{\nicefrac{n}{10}}$. We note that this implies the theorem, as $C\bigl(10n,\frac{1}{10n}\bigr)\ge C\bigl(n,\frac{1}{10n}\bigr)$, which can be seen by fixing the value of $9n$ of the $10n$ items to $0$ with probability $1$. Let $n\in\NN$ (throughout the proof, we will add requirements for $n$ to be large enough to accommodate certain conditions).
As stated above, we focus on the product distribution $F\in\Delta(\RR)^n$, where the value of each of the $n$ items is either $0$ or $1$, each independently with probability $50\%$.\footnote{So, we in fact prove the \lcnamecref{cne-lower-bound} by showing that $C\bigl(10n,\frac{1}{10n}\bigr)>2^{\nicefrac{(10n)}{100}}$ by focusing on a product distribution over $10n$ items, where $9n$ of the items have value $0$ with probability $1$, and each of the remaining $n$ items has value either~$0$ or $1$, each independently with probability $50\%$.}
Assume for contradiction that there exists an IC and IR mechanism that has at most $2^{\nicefrac{n}{10}}$ menu entries, but that obtains from $F$ a revenue of at least $\bigl(1-\frac{1}{10n}\bigr)\cdot\Rev(F)=\bigl(1-\frac{1}{10n}\bigr)\cdot\nicefrac{n}{2}$.

For each set of items $S\subseteq[n]$, we identify $S$ with the buyer type valuing each item in $S$ by~$1$ and each item not in $S$ by $0$. We say that a buyer type $S\subseteq[n]$ \emph{pays full price} if the payment extracted from a buyer of this type is strictly higher than $|S|-\nicefrac{1}{2}$. Our strategy is to define a certain distribution over pairs of buyer types (where the set of items associated with the first buyer type is contained in the one associated with the second, i.e., the first buyer type is interested in a subset of the items that interest the second), and reach a contradiction by showing that the probability that both of these buyer types pay full price should on the one hand be large due to the assumption of a slight revenue loss,
but on the other hand be small due to the assumption of a small menu size (recall the informal discussion preceding the proof, roughly explaining why two buyer types that are interested in two sets ordered by inclusion and that are both paying full price must ``usually'' choose distinct menu entries). We begin by defining this distribution over pairs of buyer types.

\paragraph{Construction} We choose two buyer types (sets of items) $S_0, S_1$ at random as follows: we first choose a set of items uniformly at random among all $2^n$ sets, and then
choose a random item $i\in[n]$ and take the symmetric difference of the first set with $\{i\}$ to obtain a second set.
Thus, exactly one of the two chosen sets contains the element $i$, while the two sets are identical in their containment of all other items.
Let us denote the set that does not contain $i$ by $S_0$, and the set that contains $i$ by~$S_1$.
We note that while the distributions of each of $S_0$ and $S_1$ are not uniform (e.g., the expected size of~$S_0$ is $\frac{n-1}{2}$ and of $S_1$ is $\frac{n+1}{2}$, while the expected size of a uniformly chosen set is $\nicefrac{n}{2}$), it is nonetheless the case that the probability that any of these distributions gives to any event is at most twice the probability that the uniform distribution gives to the same event (since the average of the probabilities given to this event by the distributions of $S_0$ and of $S_1$ is exactly that given by the uniform distribution).

\paragraph{Lower Bound}
We lower-bound the probability that both buyer types $S_0$ and $S_1$ pay full price by using the union bound.
We first note that at most $\nicefrac{1}{10}$ of the buyer types $S\subseteq[n]$ do \emph{not} pay full price; otherwise, the revenue loss due to all such buyer types would have been higher than an additive $\frac{1}{20}$, i.e., higher than a $\frac{1}{10n}$ fraction of the maximal revenue, contradicting the assumption that the given mechanism loses at most a $\frac{1}{10n}$ fraction of the maximal revenue. Next, notice that before the labeling, each of $S_0$ and $S_1$ was uniformly distributed
among all possible $2^n$ buyer types. Therefore, for each one of the (unlabeled) buyer types, the probability that it does not pay full price is at most~$\nicefrac{1}{10}$. Hence, taking the union bound, the probability that at least one of the (unlabeled) buyer types does not pay full price is at most~$\nicefrac{1}{5}$, and so the probability that both pay full price is at least~$\nicefrac{4}{5}$. Following the strategy outlined above, we will reach a contradiction by showing that this probability need in fact be less than $\nicefrac{4}{5}$ when taking into account the assumption that the given mechanism has a small menu size.

\paragraph{Upper Bound}
The derivation of an upper bound on the probability that both buyer types $S_0$ and $S_1$ pay full price is more subtle. Consider the following way of obtaining the exact same distribution
for~$(S_0,S_1)$: first, $S_0$ is chosen according to the marginal distribution on it, and then $S_1$ is obtained by adding a random item $i \not\in S_0$ (uniformly chosen from all items not in $S_0$) to $S_0$.

In our derivation of the upper bound, we condition upon whether the menu entry chosen by $S_0$ is chosen by ``many'' buyer types that pay full price.
Formally, let us say that a menu entry from the given mechanism is \emph{tiny} if at most $2^{0.8n}$ buyer types that pay full price choose this entry. Our upper bound on the probability that both buyer types $S_0$ and $S_1$ pay full price will be the sum of the probability that $S_0$ pays full price and chooses a tiny menu entry (which is obviously not smaller than the probability that $S_0$ and $S_1$ both pay full price, and $S_0$ chooses a tiny menu entry) and the probability that $S_1$ pays full price, conditioned upon $S_0$ paying full price and not choosing a tiny menu entry (which is obviously not smaller than the probability that $S_0$ and $S_1$ both pay full price, and $S_0$ does not choose a tiny menu entry). We now estimate each of these two probabilities.

Since there are at most $2^{\nicefrac{n}{10}}$ menu entries in the given mechanism, the total number of buyer types that pay full price and choose a tiny menu entry is at most $2^{0.9n}$.
Thus, the probability that $S_0$ pays full price and chooses a tiny menu entry is at most $2\cdot\frac{2^{0.9n}}{2^n}$ (by the above observation that the probability of any event for $S_0$ is at most twice the probability of that event for a uniform set), i.e., this probability decreases exponentially with $n$.

Following the strategy outlined above, we proceed to upper-bound the probability that $S_1$ pays full price, conditioned upon $S_0$ paying full price and not choosing a tiny menu entry. Assume, therefore, that $S_0$ pays full price but chooses some nontiny menu entry $e$. We consider the probabilities with which $e$ allocates each of the items. By IR, every item that is contained in every set that (as a buyer type) pays full price and chooses $e$ must be allocated by $e$ with probability strictly greater than~$\nicefrac{1}{2}$ (since if, for some buyer type $S\subseteq[n]$, even a single item $i\in S$ is allocated by $e$ with probability at most~$\nicefrac{1}{2}$, then $S$ would receive negative utility from paying strictly more than $|S|-\nicefrac{1}{2}$ for~$e$).
Let $U$ be the set of items that are allocated with probability strictly greater than $\nicefrac{1}{2}$ by $e$.
Since $e$ is not tiny, there are more than $2^{0.8n}$ buyer types $S$ that pay full price and choose it, and so $|U|>0.8n$ (since the total number of
subsets of $U$ is at most $2^{|U|}$.)  

We now come to the crux of our argument.  Consider a set $T$ that is contained in $U$, but is
a proper \emph{superset} of one of the sets  (buyer types) $S\subseteq[n]$ that pay full price and choose $e$. By IC, we claim that $T$ cannot be paying full price in the given mechanism.
Indeed, $e$ already offers buyer type~$T$ utility strictly greater than $\nicefrac{1}{2}$ (since the utility of $T$ from $e$ is greater by strictly more than~$\nicefrac{1}{2}$ than the utility of buyer type $S$ from $e$, which by IR is nonnegative), and a buyer that pays full price has, by definition, utility strictly less than $\nicefrac{1}{2}$.

Since $S_1$ is a strict superset of $S_0$, and since $S_0$ chooses $e$ and pays full price, we therefore have that if $S_1\subseteq U$, then $S_1$ does not pay full price. Therefore, the probability that $S_1$ is not contained in $U$ is an upper bound on the probability that $S_1$ pays full price (when both are conditioned upon $S_0$ paying full price and not choosing a tiny menu entry).
Recall that in our ``second way'' of obtaining the distribution for $(S_0,S_1$), the set $S_1$ is obtained by adding a random item $i \not\in S_0$ (uniformly chosen from all items not in $S_0$) to $S_0$.  For any $S_0$, the probability that $i \not\in U$ is thus at most $\frac{n-|U|}{n-|S_0|}<\frac{0.2n}{n-|S_0|}$, and since 
the probability that $|S_0| > 0.6n$ (i.e., that $\frac{0.2n}{n-|S_0|}>\nicefrac{1}{2}$) decreases exponentially with $n$ (even after conditioning upon $S_0$ paying full price and not choosing a tiny menu entry\footnote{Indeed, recall that at least $\nicefrac{9}{10}$ of the buyer types pay full price; therefore, the probability that $S_0$ pays full price is at least $\nicefrac{8}{10}$ (by the above observation that the probability of any event for $S_0$ is at most twice the probability of that event for a uniform set). Recall also that the probability that $S_0$ pays full price and chooses a tiny menu entry decreases exponentially with $n$; therefore, for large enough $n$, the probability that $S_0$ pays full price and does not choose a tiny menu entry is at least $\nicefrac{7}{10}$.}), 
we get that conditioned upon $S_0$ paying full price and not choosing a tiny menu entry, the probability that $S_1$ pays full price is, for large enough $n$, at most, say, $\nicefrac{3}{5}$.

\paragraph{Contradiction}
Following the strategy outlined above, we can now upper-bound the probability that both $S_0$ and~$S_1$ pay full price, by summing the probability that $S_0$ pays full price and chooses
a tiny menu entry (which decreases exponentially with $n$) and the probability that $S_1$ pays full price conditioned upon $S_0$ paying full price and
not choosing a tiny menu entry (which we upper-bounded by $\nicefrac{3}{5}$ for large enough $n$). Summing these two probabilities, for large enough $n$, we get strictly less than $\nicefrac{4}{5}$, which is the 
lower bound computed above for the probability of the same event --- a contradiction.
\end{proof}

\section*{Acknowledgments}
\begin{sloppypar}
Yannai Gonczarowski was supported in part by the Adams Fellowship Program of the Israel Academy of Sciences and Humanities. The work of Yannai Gonczarowski was supported in part by the European Research Council under the European Community's Seventh Framework Programme (FP7/2007-2013) / ERC grant agreement no.\ [249159]. 
The work of Noam Nisan was supported by ISF grant 1435/14 administered by the Israeli Academy of Sciences, by
Israel-USA Bi-national Science Foundation (BSF) grant number 2014389, and by the European Research Council (ERC) under the European Union's
Horizon 2020 research and innovation programme (grant agreement No 740282).
We thank the anonymous referees for helpful feedback.
\end{sloppypar}

\bibliographystyle{abbrvnat}
\bibliography{bib}

\appendix

\section{Menu Size and Communication Complexity}\label{comm}

One may view the menu size of a mechanism as exactly characterizing its (deterministic) communication complexity.  To see this easy correspondence, we should first carefully define the communication complexity.  In the communication complexity scenario, we have a commonly known mechanism (in the intended application it may be determined by the commonly known distributions $F_1,\ldots,F_n$ on item values). Our single buyer also has a privately known valuation, and the seller has no private knowledge.  The mechanism is just a function that maps the buyer's valuation to the allocation and payment for this valuation.

\begin{definition}[Deterministic Communication Complexity]
The deterministic communication complexity of a given mechanism is the minimal number of bits that must be exchanged, in the worst case, by the buyer and seller in order for both of them to compute the allocation and payment for the buyer's valuation.
\end{definition}

We can immediately notice that since the seller has no private information, then the buyer may simulate him completely, and thus there is never any need for the seller to transmit anything.  It follows that without loss of generality, the communication is one-sided: the buyer transmits some information about his valuation, which suffices to determine the outcome of the mechanism (allocation and payment).  Thus, if some (prefix free) communication protocol for the mechanism uses at most $c$ bits of communication, then there are at most $2^c$ possible outcomes, i.e., the menu size of the mechanism is at most $2^c$. Conversely, if the mechanism has a menu size of $C$, then clearly the buyer need only transmit her chosen menu entry, which requires $\lceil \log_2 C \rceil$ bits of information.  We have thus observed the following.

\begin{proposition}\label{cc}
The deterministic communication complexity of a mechanism is exactly the logarithm (base 2, rounded up) of the menu size of the mechanism.
\end{proposition}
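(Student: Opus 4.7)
The plan is to prove the two matching inequalities: that menu size at most $C$ implies deterministic communication complexity at most $\lceil\log_2 C\rceil$, and that deterministic communication complexity at most $c$ implies menu size at most $2^c$. Together these give the claimed equality. The key structural observation that both directions rely on is that the seller holds no private information, so there is a canonical reduction to one-way protocols in which only the buyer transmits bits.

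For the first (easy) direction, I would argue as follows. Fix an auction presented as a menu of at most $C$ outcomes, commonly known to both parties. The buyer, who knows her own valuation $v$, can locally compute the utility of each menu entry and select a utility-maximizing one (with the paper's tie-breaking convention for maximal price). She then transmits the index of this entry using a prefix-free encoding of length $\lceil\log_2 C\rceil$. Since the menu is common knowledge, the seller decodes the index and recovers the allocation and payment. Thus $\lceil\log_2 C\rceil$ bits of one-way (and therefore also two-way) communication suffice.

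For the second (slightly subtler) direction, I would first reduce an arbitrary interactive protocol of communication cost $c$ to a one-way protocol of the same cost. Because the seller has no private input, her transmitted messages are deterministic functions of the transcript produced by the buyer so far; hence the buyer can simulate the seller's side of the protocol internally and simply transmit the concatenation of her own messages, whose total length is still at most $c$. The resulting one-way protocol is prefix-free by construction and so admits at most $2^c$ distinct transcripts. The protocol assigns a single outcome (allocation and payment) to each transcript, so the auction, as a function of the buyer's valuation, has image size at most $2^c$; viewing this image (together with the IR entry $(\vec{0};0)$) as a menu exhibits menu size at most $2^c$.

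The main obstacle, such as it is, lies in this second direction: one must be careful that ``communication complexity'' for an auction is defined so that the buyer and seller both learn the outcome, and that the reduction to one-way communication truly loses nothing when the seller is informationless. Once the simulation argument is stated carefully, the transcript-counting bound is immediate, and combining the two inequalities yields exact equality between the deterministic communication complexity and $\lceil\log_2 C\rceil$.
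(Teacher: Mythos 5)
Your proof is correct and follows essentially the same approach as the paper: both directions rely on the observation that an informationless seller can be simulated by the buyer, reducing to one-way communication, and then the upper bound transmits the index of the chosen menu entry while the lower bound counts distinct prefix-free transcripts. You spell out the simulation and the prefix-free transcript-counting slightly more explicitly than the paper does, but the argument is the same.
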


Note also that once the mechanism is incentive compatible, this characterization of the communication complexity holds whether or not the communication protocol is required to be incentive compatible: the lower bound on the communication complexity applies even without any incentive requirements, while the upper bound protocol is clearly incentive compatible.

The definition of randomized communication complexity may be more subtle: one definition would just require a randomized protocol that computes, with high probability, the required outcome.  A more natural definition, however, would just require the distribution of the outcomes of the protocol to be that specified by the mechanism.

\begin{definition}[Randomized Communication Complexity]
The randomized communication complexity of a mechanism is the expected number of bits of communication required, for the worst case input, in a  randomized protocol whose probability of allocating each item and whose expected payment are as specified by the mechanism.
\end{definition}

As it turns out, this definition of randomized communication complexity is a much easier benchmark to satisfy, at least as long as we allow ``public coins'' (i.e., as long as we assume that the buyer and seller share a randomly generated string, whose length is not counted toward the communication).  For instance, one may in fact implement any single-item mechanism with a single bit of communication under a public-coin model. By a theorem of \cite{Myerson81}, an IC single-item mechanism is completely specified by\ \ 1) the probability of allocation $x(v)$ for every value $v$, where $x(\cdot)$ is a nondecreasing function, and\ \ 2) the payment $p_0$ of a buyer with valuation $0$. Suppose that the buyer and seller choose (jointly, without any communication cost) a price $p$ at random so that $x(p)$ is exactly uniform on $[0,1]$.  To get the correct allocation, it suffices for the buyer to get the item if
$v \ge p$ (requiring one bit of communication), which indeed happens with probability exactly $x(v)$.  If we charge the buyer exactly $p+p_0$ when she gets the item and $p_0$ when she does not get the item, then the expected payment turns out to be as required.

It is not completely clear to us how large the public-coin randomized complexity can be in the case of multiple items, nor is it clear how large the private-coin randomized complexity can be.

\section{Technical Notes}

\subsection{Infinite Menus}\label{closure}

\subsubsection{Utility-Maximizing Entries}\label{closure-max-utility}

As is well known in the literature, by the taxation principle every (single-buyer) IC mechanism with finitely many possible outcomes can be identified with a finite menu of possible choices for the buyer (where by IC the buyer chooses an entry that maximizes her utility), and vice versa. For IC mechanisms with infinitely many possible outcomes, while it is still true that each such mechanism can be identified with an (infinite) menu of possible choices for the buyer, it is no longer the case that every such menu defines some IC mechanism. Indeed, in a general infinite menu, a utility-maximizing entry for each buyer type does not necessarily exist, yet in menus corresponding to IC mechanisms, such an entry always exists. We note that one way to make sure that a (possibly infinite) menu that we construct indeed defines an IC mechanism is to make sure that this menu is closed (as a subset of $[0,1]^n\times\RR$). (This is the technique employed in our proofs of \cref{expensive-exc,cheap-disc}.\footnote{The reasoning in our proof of \cref{expensive-trim} is more delicate and is detailed in that proof.}) Indeed, for a buyer type $v=(v_1,\ldots,v_n)\in\RRn$, all nonnegative-utility entries from a closed menu $\mech$ lie in the compact set $\mech\cap\Bigl([0,1]^n\times\bigl[0,\sum_{i=1}^n v_i\bigr]\Bigr)$, and by continuity of the utility function, this function attains a maximum value in this compact set.

\subsubsection{Tie-Breaking by Prices}\label{closure-max-price}

\crefpart{revenue}{mech}, which is the standard definition of the revenue obtainable by an IC mechanism, specifies that ties (in utility) between menu entries are broken in favor of higher prices. We note that even if, in some menu, a utility-maximizing menu entry exists for some buyer type (or for all buyer types), then it is not guaranteed that a utility-maximizing entry \emph{with maximal price} (among all utility-maximizing entries) exists for this buyer type. (I.e., it is not guaranteed that the supremum price over all utility-maximizing entries for this buyer type is attained as a maximum.) Indeed, to be completely general, a more subtle definition of the revenue obtainable by an IC mechanism would have been needed (taking, roughly speaking, the supremum revenue over all tie-breaking rules), and even under such a definition, the menu entry of choice of a specific buyer type would not have been well defined (or would have become a limit of menu entries), making reasoning about such mechanisms quite cumbersome. Nonetheless, similarly to above, if an IC mechanism has a closed menu, then it possesses, for each buyer type, a utility-maximizing entry with maximal price, and so the revenue is well defined by \crefpart{revenue}{mech}, without the need for a more subtle definition. Indeed, the set of utility-maximizing entries from a given closed menu for a given buyer type is (by the reasoning given in \cref{closure-max-utility}) a compact set; therefore, it contains an entry with maximum price. The only point in the proofs in this paper where we do not explicitly construct a mechanism by specifying its menu is in the proof of \cref{cne-upper-bound}, where we start with an IC (and IR) mechanism (possibly of infinite size) that obtains revenue close to $\Rev(F)$; to justify the fact that we can assume w.l.o.g.\ that the menu of such a mechanism is closed, we note that if some menu entry $e$ is weakly preferred by some buyer type $v$ to all menu entries in some menu~$\mech$, then by continuity of the utility function, $e$ is weakly preferred by~$v$ also to all menu entries in $\closure{\mech}$ --- the \emph{closure} of $\mech$ (in $[0,1]^n\times\RR$). Therefore, given an IC mechanism~$\mech$ that obtains revenue $R$ from a distribution $F$ under some tie-breaking rule, we have that $\closure{\mech}$, under price-maximization tie-breaking (which is well defined since $\closure{\mech}$ is closed), obtains revenue at least~$R$ from $F$.

\subsection{Arbitrary Tie-Breaking}\label{tie-breaking}

As discussed above, the (standard) definition that we use for the revenue obtainable by a given IC mechanism (\crefpart{revenue}{mech}) depends on tie-breaking being performed in favor of higher prices. Nonetheless, we emphasize that the definition of the revenue obtainable from a given distribution $F$, whether constrained by the menu size (i.e., $\Rev_C$) or unconstrained (i.e., $\Rev$), does not depend on the tie-breaking rule. Indeed, if a mechanism $\mech$ obtains revenue $R$ w.r.t.\ tie-breaking in favor of high prices, then for arbitrarily small $\varepsilon>0$, multiplying the price of each menu entry in $\mech$ by $(1-\varepsilon)$ and taking the closure of the resulting menu (equivalently, multiplying the price of each menu entry in $\closure{\mech}$ by $(1-\varepsilon)$), yields a mechanism with the same menu size (since a finite menu is always closed) that obtains revenue at least $(1-\varepsilon)\cdot R$ w.r.t.\ \emph{any} tie-breaking rule (since multiplying each price by $(1-\varepsilon)$ breaks ties in favor of higher-priced menu entries, and can only cause a buyer type to ``jump'' to even higher-costing menu entries).
As the definitions of $\Rev_C$ and $\Rev$ therefore do not depend on the tie-breaking rule, our results hold for any tie-breaking rules, e.g., even tie-breaking in favor of low prices in the definition of $\Rev_C$ and in favor of high prices in the definition of $\Rev$.

\section{A Tighter Version of Lemma~\ref{cheap-disc}}\label{logn}

\begin{lemma}\label{cheap-disc-log}
Let $n\in\NN$, let $\thresh\in\RR$, and let $\varepsilon\in(0,1)$. For every $F\in\Delta(\RRn)$ and for every IC and IR $n$-item mechanism $\mech$, there exists an IC and IR $n$-item mechanism $\mech'$ such that both of the following hold.
\begin{itemize}
\item
$\Rev_{\mech'}(F) > (1-\varepsilon)\cdot\Rev_{\mech}(F)-\varepsilon$.
\item
There are as many menu entries that cost more than $(1-\nicefrac{\varepsilon}{3})^2\cdot\thresh$ in $\mech'$ as there are that cost more than $\thresh$ in~$\mech$.
\item
There are fewer than $(\nicefrac{\log\thresh}{\varepsilon})^{O(n)}$ entries in $\mech'$ that cost at most $(1-\nicefrac{\varepsilon}{3})^2\cdot\thresh$.
\end{itemize}
\end{lemma}

\begin{proof}
The proof uses discretization techniques from \cite{hart-nisan-b} \cite[see also][]{DLN14}. 
Similarly to the proof of \cref{cheap-disc}, we will gradually define the mechanism $\mech'$ through interim mechanisms, however we will have more than one such interim mechanism. We will first define an interim mechanism $\mech^{(2)}$ that will have prices (of cheap menu entries) discretized, and we will then use it to sequentially define two more interim mechanisms~$\mech^{(3)}$ and~$\mech^{(4)}$ that will also have allocations discretized. Finally, as in the proof of \cref{cheap-disc}, we will derive the required mechanism~$\mech'$ by defining it to be the mechanism offering only those menu entries of $\mech^{(4)}$ that are in fact chosen by any buyer type. Set $\epst\eqdef\nicefrac{\varepsilon}{3}$. Let $K\eqdef\Bigl\lceil\frac{\log\nicefrac{\thresh}{\varepsilon}}{\log(1+\epst^2)}\Bigr\rceil=O(\nicefrac{1}{\epst^2}\cdot\log\nicefrac{\thresh}{\varepsilon})$ and let $\pi=\sqrt[K]{\nicefrac{\thresh}{\varepsilon}}$. Note that $K$ is the smallest natural number such that $\pi\le(1+\epst^2)$.

We start by defining $\mech^{(2)}$. Each of the cheap menu entries of this mechanism, except for $(\vec{0};0)$, will only have one of the $K+1$ prices $p_{\ell}\eqdef(1-\varepsilon)\cdot\pi^\ell\cdot\varepsilon$ for $\ell\in\{0,\ldots,K\}$. (Note that $p_K=(1-\varepsilon)\cdot\thresh$.) We construct $\mech^{(2)}$ as follows:
\begin{itemize}
\item
For every menu entry $e=(\vec{x};p)\in\mech$ with $p\in[\varepsilon,\thresh]$, we define $p'$ to be $(1-\epst)\cdot p$ rounded up to the nearest price in the set $\{p_0,\ldots,p_K\}$,\footnote{Note that indeed $(1-\epst)\cdot p\le(1-\epst)\cdot\thresh=p_K$.} and add the menu entry $e'=(\vec{x};p')$ to $\mech^{(2)}$. (Note that $p'\le p_K=(1-\epst)\cdot\thresh$.)
\item
For every menu entry $e=(\vec{x};p)\in\mech$ with with $p>\thresh$, as well as for the menu entry $e=(\vec{x};p)=(\vec{0};0)\in\mech$, we define $p'=(1-\epst)\cdot p$ (but without any rounding), and add the menu entry $e'=(\vec{x};p')$ to $\mech^{(2)}$. (In particular, $\mech^{(2)}$ is IR.)
\item
We ignore any other menu entries of $\mech$ (i.e., any menu entries other than $(\vec{0};0)$. whose price is less than $\varepsilon$).
\end{itemize}
We note that for each entry $e'=(\vec{x};p')$ that is added to $\mech^{(2)}$, the price of its counterpart $e=(\vec{x};p)$ from $\mech$ satisfies $(1-\epst)\cdot p\le p' < \pi\cdot(1-\epst)\cdot p\le(1+\epst^2)\cdot(1-\epst)\cdot p$.
Finally, as in the proof of \cref{cheap-disc}, we define $\mech^{(2)}$ to be the closure of the set of menu entries added above to~$\mech^{(2)}$.

We note that $\Rev_{\mech^{(2)}}(F) \ge (1-\epst)^2\cdot(\Rev_{\mech}(F)-\varepsilon)$. The $\varepsilon$ additive loss is due to possibly losing all revenue from buyers who pay less than $\varepsilon$ in $\mech$. The multiplicative bound on the loss is ensured by the fact that for any two menu entries $e=(\vec{x};p)$ and $f=(\vec{y};q)$ from $\mech$, if $q<(1-\epst)\cdot p$, then $p'-q'<(1+\epst^2)\cdot(1-\epst)\cdot p-(1-\epst)\cdot q<p-q-\epst^3\cdot p<p-q$ (and recall that the allocations of $f$ and $e$ are unchanged), so any buyer who chooses $e$ from $\mech$, prefers $e'$ over any menu entry from $\mech^{(2)}$ that costs less than $(1-\epst)\cdot p'$, and so pays at least $(1-\epst)\cdot p'\ge(1-\epst)^2\cdot p$ in $\mech^{(2)}$.

\begin{sloppypar}
We now define $\mech^{(3)}$ by taking $\mech^{(2)}$ and multiplying every allocation probability and every price by $(1-\epst)$. Therefore, $\mech^{(3)}$ is IR, has revenue $\Rev_{\mech^{(3)}}(F) ={(1-\epst)\cdot\Rev_{\mech^{(2)}}(F)}\ge (1-\epst)^3\cdot(\Rev_{\mech}(F)-\varepsilon)>(1-\varepsilon)\cdot\Rev_{\mech}(F)-\varepsilon$, its menu entries that cost more than $(1-\epst^2)\cdot\thresh$ are in one-to-one correspondence with the entries of $\mech$ that costs more than $\thresh$, and its other menu entries apart from $(\vec{0};0)$ only have prices in $\{(1-\epst)\cdot p_1,(1-\epst)\cdot p_2,\ldots,(1-\epst)\cdot p_K\}$. Also, each allocation probability of~$\mech^{(3)}$ is at most $1-\epst$.
\end{sloppypar}

Finally, we define $\mech^{(4)}$. Set $\chi\eqdef\frac{\epst}{K+1}$. We construct $\mech^{(4)}$ as follows:
\begin{itemize}
\item
We add the menu entry $e=(\vec{0};0)\in\mech^{(3)}$ to $\mech^{(4)}$, unchanged.
\item
For every other menu entry $e=(\vec{x};p)\in\mech^{(3)}$ with $p\le(1-\epst^2)\cdot\thresh$, recall that $p=(1-\epst)\cdot p_\ell$ for some $\ell\in\{0,1,\ldots,K\}$. For each $i\in[n]$, we define $x'_i=\lfloor x_i \rfloor_{\chi}+\ell\cdot\chi$, i.e., $x_i$ rounded down to the nearest multiple of $\chi$, plus a ``bonus'' probability that increases with $p_\ell$ (note that $x'_i\le1$ since $x_i\le1-\epst$), and add the menu entry $(\vec{x}';p)$ to $\mech^{(4)}$.
\item
For every menu entry $e=(\vec{x};p)\in\mech^{(3)}$ with $p>(1-\epst)^2\cdot\thresh$, we add that menu entry to $\mech^{(4)}$, with each allocation probability increased by $\epst$. (These are a well defined probabilities since in $\mech^{(3)}$ all allocation probabilities are at most $1-\epst$).
\end{itemize}
Finally, we define $\mech^{(4)}$ to be the closure of the set of menu entries added above to~$\mech^{(4)}$. We note that all allocation probabilities have been weakly increased from $\mech^{(3)}$ to $\mech^{(4)}$, and for every two menu entries $e,f\in\mech^{(3)}$, if the price of $e$ is higher than that of~$f$, then each allocation probability in $e$ was weakly increased (even when taking into account also the rounding) by no less than the corresponding allocation probability in $f$, and therefore the revenue from no buyer type decreases, and so $\Rev_{\mech^{(4)}}(F)\ge\Rev_{\mech^{(3)}}(F)>(1-\varepsilon)\cdot\Rev_{\mech}(F)-\varepsilon$.

Finally, we will define $\mech'$ to be the mechanism offering only the menu entries from $\mech^{(4)}$ that are in fact chosen by any buyer type. We conclude the proof by noting that, as required, the number of menu entries (not including $(\vec{0};0)$) that cost at most $(1-\epst)^2\cdot\thresh$ in~$\mech'$ is at most $(K+1)\cdot(\nicefrac{1}{\chi}+1)^{n-1}=(K+1)\cdot(\frac{K+1}{\epst}+1)^{n-1}=(\nicefrac{\log\thresh}{\varepsilon})^{O(n)}$, and the number of menu entries that cost more than $(1-\epst)^2\cdot\thresh$ in $\mech'$ is the number of menu entries that cost more than $\thresh$ in~$\mech$.
\end{proof}

\end{document}